\theoremstyle{plain}
\newtheorem{thm}{Theorem}
\newtheorem{prop}[thm]{Proposition}
\newtheorem{lem}[thm]{Lemma}
\newtheorem{cor}[thm]{Corollary}
\theoremstyle{definition}
\theoremstyle{remark}
\newtheorem{rem}[thm]{Remark}
\newtheorem*{rem*}{Remark}
\newcommand{\N}{\mathbb{N}}
\newcommand{\R}{{\mathbb{R}}}
\newcommand{\C}{{\mathbb{C}}}
\newcommand{\Z}{{\mathbb{Z}}}
\newcommand{\dd}{{\rm d}}
\newcommand{\ii}{{\rm i}}
\newcommand{\diag}{\mathop\mathrm{diag}\nolimits}
\newcommand{\spec}{\mathop\mathrm{spec}\nolimits}
\renewcommand{\Im}{\mathop\mathrm{Im}\nolimits}
\newcommand{\supp}{\mathop\mathrm{supp}\nolimits}
\newcommand{\Res}{\mathop\mathrm{Res}\nolimits}
\newcommand*\pFqskip{8mu}
\newcommand*\pFq{\begingroup
        \catcode`\,\active
        \def ,{\mskip\pFqskip\relax}%
        \dopFq
}
\def\dopFq#1#2#3#4#5{%
        {}_{#1}\phi_{#2}\biggl(\genfrac..{0pt}{}{#3}{#4}\biggl|q;#5\biggr)%
        \endgroup
}
\title[Symmetric orthogonal polynomials and a kinetic spin chain]{New family of symmetric orthogonal polynomials and a solvable model of a kinetic spin chain}
\author{Tomáš Kalvoda}
\address[Tomáš Kalvoda]{
	Department of Applied Mathematics, Faculty of Information Technology, Czech Technical University in~Prague, 
	Th{\' a}kurova~9, 160~00 Praha, Czech Republic
	}	
\email{tomas.kalvoda@fit.cvut.cz}
\author{František Štampach}
\address[Franti{\v s}ek {\v S}tampach]{
	Department of Mathematics, Faculty of Nuclear Sciences and Physical Engineering, Czech Technical University in Prague, Trojanova~13, 12000 Praha~2, Czech Republic
	}
\email{stampfra@fjfi.cvut.cz}
\subjclass[2010]{82C20, 33D45}
\keywords{Orthogonal polynomials, kinetic Ising chain, $q$-hypergeometric series}
\date{\today}
\begin{document}

\begin{abstract} 
  We study an infinite one-dimensional Ising spin chain where each particle interacts only with its nearest neighbors and is in contact with a heat bath with temperature decaying hyperbolically along the chain.
  The time evolution of the magnetization (spin expectation value) is governed by a semi-infinite Jacobi matrix. The matrix belongs to a three-parameter family of Jacobi matrices whose spectral problem turns out to be solvable in terms of the basic hypergeometric series. As a consequence, we deduce the essential properties of the corresponding orthogonal polynomials, which seem to be new. Finally, we return to the Ising model and study the time evolution of magnetization and two-spin correlations.
\end{abstract}

\maketitle

%
%
\section{Introduction}
\label{sec:Intro}

Among exactly solvable models of statistical physics, the kinetic Ising model plays a prominent role.
The particular case of the one-dimensional kinetic Ising spin chain consists of $N$ particles labeled by integers $\{1,2,\dots,N\}$.
Each particle $n$ is occupied by a spin $\sigma_{n}\in\{\pm1\}$, interacts with its nearest neighboring particles, and is in contact with a heat bath of temperature $T_{n}$, for a simple graphical illustration see Figure~\ref{fig.ising}.
The classical case of constant temperature $T_{n} = T > 0$ was examined by Glauber in \cite{glauber_jmp63}.

\begin{figure}
  {\centering
  \includegraphics{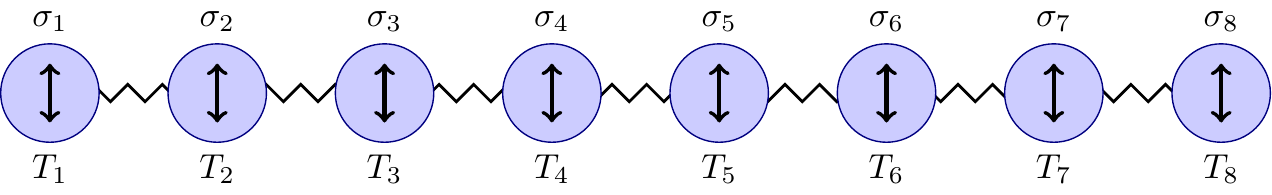}
  \par}
  \caption{Graphical representation of one-dimensional kinetic Ising spin chain with $N=8$ particles.}
  \label{fig.ising}
\end{figure}

The configuration of the system is a vector $\sigma=(\sigma_{1},\dots,\sigma_{N})\in\{\pm1\}^{N}$.
Let $w_{n}(\sigma_{n})$ denote the probability per unit time that the $n$th spin flips from the value $\sigma_{n}$ to $-\sigma_{n}$ and $p(\sigma;t) = p(\sigma_{1},\dots,\sigma_{N};t)$ stand for the probability that the system is in configuration $\sigma$ at the time $t$.
The time evolution of $p(\sigma;t)$ is governed by the master equation
\begin{equation}
  \label{eq:master}
  \frac{\dd}{\dd t}p(\sigma;t)=-\left(\sum_{n=1}^{N}w_{n}(\sigma_{n})\right)p(\sigma;t)+\sum_{n=1}^{N}w_{n}(-\sigma_{n})p(\sigma_{1},\dots,-\sigma_{n},\dots,\sigma_{N};t).
\end{equation}
The transition probabilities $w_{n}(\sigma_{n})$ are chosen to be given by the generalized Glauber transition rate (cf.~\cite{Fonseca2015}):
\begin{equation}
  \label{eq:glauberTransition}
  w_{n}(\sigma_{n})=\frac{1}{2}-\frac{1}{4}\gamma_{n}\sigma_{n}\left(\sigma_{n-1}+\sigma_{n+1}\right),
\end{equation}
for $n\in\{1,\dots,N\}$, where one has to put $\sigma_{0} = \sigma_{N+1} := 0$.
The factor $\gamma_{n}$ is determined by the temperature as
\begin{equation}
  \label{eq:gamma_n}
  \gamma_{n}=\tanh\left(\frac{2}{k_{B}T_{n}}\right), \quad n\in\N,
\end{equation}
where $k_{B}$ is the Boltzmann constant and $\N$ is the set of natural numbers (in our convention $\N$ does not contain $0$; $\N_0 := \N \cup \{0\}$).

The probability functions $p(\sigma;t)$ which satisfy the master equation~\eqref{eq:master} provide the fullest possible description of the system. Nevertheless, as pointed out in \cite{glauber_jmp63}, they contain vastly more information than is needed in practice.
The key macroscopic observables of interest are the expectation values of a spin $\sigma_{n}(t)$ regarded as the stochastic function of time,
\begin{equation}
  \label{eq:magnetization}
  q_{n}(t):=\langle \sigma_{n}(t) \rangle =\sum_{\sigma\in\{\pm1\}^{N}}\sigma_{n}p(\sigma;t),
\end{equation}
and the average of the product of a pair of spins,
\begin{equation}
  \label{eq:2SpinCorrelation}
  r_{m,n}(t):=\langle \sigma_{m}(t)\sigma_{n}(t) \rangle=\sum_{\sigma\in\{\pm1\}^{N}}\sigma_{m}\sigma_{n}p(\sigma;t).
\end{equation}
The quantity $q_{n}(t)$ is called the \emph{magnetization} of the $n$th particle and $r_{m,n}(t)$ is called the \emph{two-spin correlation} of the $m$th and $n$th particle.
An essential physical question asks for the probability that an individual spin or a pair of spins occupy specified states that can be computed if the magnetization and two-spin correlation are known, see \cite[Eqs.~(25) and~(26)]{glauber_jmp63}.

Multiplying by $\sigma_n$ and averaging the both sides of~\eqref{eq:master}, one gets the equation of motion for the magnetization
\begin{equation}
  \label{eq:magnetizationODE}
  \dot{q}_{n}(t) = -q_{n}(t)+\frac{\gamma_n}{2}\left(q_{n-1}(t) + q_{n+1}(t)\right), \quad n\in\{1,\ldots,N\},
\end{equation}
where $q_{0}(t) = q_{N+1}(t) := 0$.
Similarly, multiplying by $\sigma_m\sigma_n$ ($m \neq n$) and averaging~\eqref{eq:master}, one arrives at the equation of motion for the two-spin correlation
\[
  \dot{r}_{m,n}(t) = - 2r_{m,n}(t) + \frac{\gamma_m}{2} \left(r_{m+1,n} + r_{m-1,n}\right) + \frac{\gamma_n}{2} \left(r_{m,n+1} + r_{m,n-1}\right),
\]
where $m,n\in\{1,\ldots,N\}$, $m \neq n$, and $r_{0,n}(t) = r_{m,0}(t) = r_{N+1,n}(t) = r_{m,N+1}(t) := 0$.
Of course, for $m = n$ we have $r_{m,m}(t) = 1$.

The system~\eqref{eq:magnetizationODE} governing the magnetization can be easily expressed in matrix form
\[
  \dot{q}(t) = \Gamma_{N} J_N \Gamma_{N}^{-1} q(t),
\]
where $q(t) := \left(q_{1}(t),\ldots,q_{N}(t)\right)^{T}$, $\Gamma_{N} := \diag\left(\sqrt{\gamma_1},\ldots,\sqrt{\gamma_N}\right)$, and
\[
  J_{N} := \frac{1}{2}\begin{pmatrix}
    -2 & \sqrt{\gamma_{1}\gamma_{2}} & \\
    \sqrt{\gamma_{1}\gamma_{2}} & -2 & \sqrt{\gamma_{2}\gamma_{3}} \\
    & \ddots & \ddots & \ddots \\
    & & \ddots & \ddots & \ddots \\
    & & & \sqrt{\gamma_{N-2}\gamma_{N-1}} & -2 & \sqrt{\gamma_{N-1}\gamma_{N}} \\
    & & & & \sqrt{\gamma_{N-1}\gamma_{N}} & -2 
   \end{pmatrix}\!.
\]
The solution of the system~\eqref{eq:magnetizationODE} then reads
\begin{equation}
  \label{eq:magnetizationODEsolution}
  q(t) = \Gamma_{N} \exp(tJ_{N}) \Gamma_{N}^{-1} q(0).
\end{equation}

\begin{rem}
  Such a symmetric formulation of the problem can be also obtained if one starts with generalized Glauber transition rate
  \[
    w_{n}(\sigma_{n})=\frac{1}{2}-\frac{1}{4}\sqrt{\gamma_{n}}\sigma_{n}\left(\sqrt{\gamma_{n-1}}\sigma_{n-1}+\sqrt{\gamma_{n+1}}\sigma_{n+1}\right).
  \]
  instead of~\eqref{eq:glauberTransition}.
\end{rem}

Since eigenvalues of a non-decomposable self-adjoint Jacobi matrix are always simple, the exponential matrix $\exp(tJ_{N})$ is completely determined by the eigenvalues and eigenvectors of $J_{N}$; namely,
\[
  \exp(tJ_{N})=\sum_{n=1}^{N}e^{t\lambda_{n}}\langle u_{n}, \cdot \rangle u_{n},
\]
where $u_{n}$ is the normalized eigenvector of $J_{N}$ corresponding to the eigenvalue $\lambda_{n}$.
Thus, one is naturally led to the spectral analysis of the Jacobi matrix $J_{N}$.
If the number of particles $N$ is assumed to be large, it might be convenient to replace $J_{N}$ by the corresponding semi-infinite Jacobi matrix and analyze spectral properties of the respective Jacobi operator $J$, see \cite{glauber_jmp63}.

The goal of this paper is to extend the class of solvable models of one-dimensional kinetic Ising chain in the spirit of Glauber. It means that we focus on a rigorous derivation of formulas for the key macroscopic observables in terms of special functions and their asymptotic behavior. The particular model under investigation assumes the temperature decreases hyperbolically along the chain, i.e., $T_{n} = c/n$, $c > 0$.
In this case,
\begin{equation}
  \label{eq:gamma_tanh}
  \gamma_{n} = \tanh(\kappa n), \quad n\in\N,
\end{equation}
where $\kappa>0$, by~\eqref{eq:gamma_n}. Studying this model, special functions from the $q$-world~\cite{gasper90} naturally appear and play an essential role.
It is worth noticing that the analysis of this model, which might be of physical relevance, relies very much on particular aspects of the theory of $q$-hypergeometric series with the quantum parameter~$q$ determined by physical constants.
Such a fruitful intersection of two typically separated areas seems to be quite exceptional in the context of statistical physics. 

Since the spectral properties of tridiagonal matrices determine the dynamics of the magnetization, there is also a close relation to the theory of orthogonal polynomials. Exploring the model in question, we realized that we could provide an analysis of properties of the corresponding family of orthogonal polynomials in even higher generality. Therefore we first deduce properties of a three-parameter family of orthogonal polynomials that seem to be new and are of independent interest. Then we apply these results in a more concrete form to the model under investigation.

To mention more recent works devoted to solvable models of one-dimensional kinetic Ising chain with varying temperature, we refer to papers~\cite{bauer-cornu_jpa18, borchers-etal_pre14, mazilu-etal_ejla12, mazilu-williams_pre09, mobilia-etal_pre05, mobilia-etal_jpa04} that mostly study physical properties of models with the temperature given by a two-periodic sequence.

\subsection*{Organization of the paper}

Section~\ref{sec:OGpolynomials} is devoted to a general study of a new family of symmetric quantum orthogonal polynomials. We deduce formulas for their asymptotic behavior, describe their measure of  orthogonality, and derive generating functions. An important role is played by $q$-hypergeometric series, the $q$-Gauss hypergeometric function, in the first place.

The application to the Ising model indicated above is treated in Section~\ref{sec:Ising} in detail. The main results for the magnetization and the two-spin correlation for one-dimensional semi-infinite kinetic Ising chain with hyperbolically decreasing temperature are summarized in Theorems~\ref{thm:mag} and~\ref{thm:two-spin_cor}, respectively.

For the reader's convenience, the text is accompanied by two appendices containing a brief overview of $q$-hypergeometric functions, selected formulas used within the paper, and properties of zeros of the  Gauss $q$-hypergeometric function that are needed in Section~\ref{sec:OGpolynomials}.

%
%
\section{A family of symmetric quantum orthogonal polynomials}
\label{sec:OGpolynomials}

Consider a family of monic orthogonal polynomials given by the tree-term recurrence
\begin{equation}
  \label{eq:three-term_recur_gener}
  p_{n+1}(x)=(x-b_{n})p_{n}(x)-a_{n-1}p_{n-1}(x), \quad n\in\N_{0},
\end{equation}
with initial conditions $p_{-1}(x)=0$ and $p_{0}(x)=1$, where $b_{n}\in\R$ and $a_{n}>0$ for all $n\in\N_{0}$.
The standard reference books on the theory of orthogonal polynomials are~\cite{chihara78, ismail09,nevai_mams79,szego75}.

Recall that if the Jacobi parameters from the three-term recurrence~\eqref{eq:three-term_recur_gener} satisfy
\begin{equation}
  \label{eq:nevai_cond_recur_coef}
  \sum_{n=0}^{\infty}n\Big(|b_{n}| + \big|1-\sqrt{a_{n}}\big|\Big) < \infty,
\end{equation}
then the measure of orthogonality $\mu$ of polynomials $p_{n}$ decomposes as
\begin{equation}
  \label{eq:meas_decomp_ac_d}
  \mu=\mu_{ac}+\mu_{d},
\end{equation}
where $\mu_{ac}$ is absolutely continuous w.r.t. the Lebesgue measure, $\supp\mu_{ac} = [-2,2]$, and $\mu_{d}$ is a discrete measure supported on 
finitely many points located in $\R\setminus[-2,2]$; see \cite{vanassche_90}.
Hence we may write
\[
  \frac{\dd\mu_{ac}}{\dd x}(x)=\rho(x) \quad \mbox{ and } \quad \mu_{d}(x)=\sum_{k=1}^{N}w_{k}\delta_{x_{k}}(x),
\]
where $\rho$ is, in fact, a positive function on $(-2,2)$, $N\in\N_{0}$, $w_{k}>0$ and $|x_{k}|>2$ for all $1\leq k\leq N$.
The orthogonality relation for polynomials $p_{n}$ then reads
\begin{equation}
  \label{eq:og_rel_general}
  \int_{-2}^{2}p_{n}(x)p_{m}(x)\rho(x)\dd x+\sum_{k=1}^{N}w_{k}p_{m}(x_{k})p_{n}(x_{k})=\left(\prod_{j=0}^{n-1}a_{j}\right)\!\delta_{m,n},
\end{equation}
for $m,n\in\N_{0}$.

In the following, we always assume that $q\in(0,1)$.
Moreover, in the notation, we will follow Gasper and Rahman's book \cite{gasper90} on the basic hypergeometric (or $q$-hypergeometric) series. 
For reader's convenience, basic definitions and some formulas with $q$-hypergeometric series relevant to the text are summarized in Appendix~\ref{app:A}.

Our primary goal is to investigate polynomials
$p_{n}(x) \equiv p_{n}^{(\alpha,\beta)}(x;q)$ 
generated by recurrence relation
\begin{equation}
  \label{eq:ogp_monic_recur}
  p_{n+1}(x)=xp_{n}(x)-\tilde\gamma_{n-1}\tilde\gamma_{n}p_{n-1}(x), \quad n\in\N_{0},
\end{equation}
and initial conditions $p_{-1}(x)=0$ and $p_{0}(x)=1$, where 
\begin{equation}
  \label{eq:def_gamma_n_q_form}
  \tilde\gamma_{n}=\tilde\gamma_{n}^{(\alpha,\beta)}(q):=\frac{1-\alpha q^{n}}{1-\beta q^{n}}, \quad n\in\N_{0}.
\end{equation}
At the moment, the parameters $\alpha$ and $\beta$ are assumed to be real and smaller than $1$ but later on they will be restricted even further.
Hence, polynomials $p_{n}^{(\alpha,\beta)}(x;q)$ satisfy~\eqref{eq:three-term_recur_gener} with
\[
  b_{n}=0 \quad \text{and} \quad a_{n} = \tilde\gamma_{n} \tilde\gamma_{n+1}, \quad n\in\N_{0}.
\]
In this case, condition~\eqref{eq:nevai_cond_recur_coef} is clearly fulfilled and therefore the measure of orthogonality decomposes as in~\eqref{eq:meas_decomp_ac_d}. 
Moreover, since $b_{n}=0$ for all $n\in\N_{0}$, the orthogonal polynomials $p^{(\alpha,\beta)}(x;q)$ are symmetric, i.e, their measure of orthogonality $\mu$ is symmetric with respect to the origin which means $\mu(\mathcal{A})=\mu(-\mathcal{A})$ for any Borel set $\mathcal{A}\subset\R$; see \cite{chihara78}.

Let us denote by $\{q_{n}\}_{n=0}^{\infty}$ the sequence of monic orthogonal polynomials of the second kind which is the solution of the recurrence~\eqref{eq:three-term_recur_gener}
determined by the initial conditions $q_{0}(x)=0$ and $q_{1}(x)=1$. If $q_{n}(x) \equiv q_{n}^{(\alpha,\beta)}(x;q)$ stands for the orthogonal polynomials of the second kind corresponding
to $p_{n}^{(\alpha,\beta)}(x;q)$, then one easily verifies that
\begin{equation}
  \label{eq:ogp_sec_kind} 
  q_{n}^{(\alpha,\beta)}(x;q)=p_{n-1}^{(\alpha q,\beta q)}(x;q), \quad n\in\N_{0},
\end{equation}
using the identity $\tilde\gamma_{n}^{(\alpha,\beta)}(q) = \tilde\gamma_{n-1}^{(\alpha q,\beta q)}(q)$.

Recall further that the Cauchy transform of the measure $\mu$ is defined by
\[
  C_{\mu}(z):=\int_{\mathbb{R}}\frac{\dd\mu(x)}{z-x}, \quad z\in\C\setminus\supp\mu.
\]
According to the Markov Theorem, 
\begin{equation}
  \label{eq:markov}
  C_{\mu}(z)=\lim_{n\to\infty}\frac{q_{n}(z)}{p_{n}(z)}, \quad z\in\C\setminus\supp\mu,
\end{equation}
provided that the sequences $\{a_{n}\}_{n=0}^{\infty}$ and $\{b_{n}\}_{n=0}^{\infty}$ are bounded, see, for example,~\cite{berg_jat94}. 
Moreover, the sequence on the right-hand side of~\eqref{eq:markov} converges uniformly on any compact subset of $\C\setminus\supp\mu$.

\begin{rem}
  \label{rem:OriginalIsing}
  The particular case with $q=e^{-2\kappa}$, where $\kappa>0$, and $\alpha=-\beta=q$ yields
  \[
  \tilde\gamma_{n} = \tanh\left((n+1)\kappa\right) = \gamma_{n+1}, \quad n\in\N_{0},
  \]
  which brings us to the initial motivation of the solvable kinetic Ising model, cf. \eqref{eq:gamma_tanh}. 
\end{rem}

\subsection{Solutions of a second-order difference equation and asymptotic formulas for the orthogonal polynomials}

Recall the Joukowsky conformal map 
\begin{equation}
\upsilon(z):=z+z^{-1}
\label{eq:joukowsky}
\end{equation}
maps the punctured unit disk $\{z\in\C \mid 0<|z|<1\}$
bijectively onto $\C\setminus[-2, 2]$ and the half-circle $\{e^{\ii\theta}\mid \theta\in[0,\pi]\}$ 
bijectively onto $[-2, 2]$. To deduce the desired properties of the orthogonal polynomials $p_{n}^{(\alpha,\beta)}$,
we first investigate the solutions of the second-order difference equation
\begin{equation}
 \psi_{n-1}-\upsilon(z)\frac{1-\beta q^{n}}{1-\alpha q^{n}}\psi_{n}+\psi_{n+1}=0, \quad n\in\N_{0},
\label{eq:diff_eq}
\end{equation}
where $\alpha<1$ and $\beta<1$. Note that, if a sequence $\psi$ is a solution of \eqref{eq:diff_eq},
then $u$, where
\[
 u_{n}:=\frac{(\alpha;q)_{n}}{(\beta;q)_{n}}\psi_{n}, \quad n\in\N_{0},
\]
solves \eqref{eq:ogp_monic_recur} with $x=\upsilon(z)$. For the definitions of the $q$-Pochhhamer symbols and other basic definitions from the theory of $q$-hypergeometric series, see Appendix~\ref{app:A}.

For $|\alpha|<1$ and $z\neq0$, we define
\begin{equation}
 \psi_{n}^{\pm}(z):=z^{\pm n}\left(qz^{\pm2};q\right)_{\!\infty}\pFq{2}{1}{z^{\pm1}\tau,z^{\pm1}\tau^{-1}}{qz^{\pm2}}{\alpha q^{n+1}}, \quad n\geq-1,
 \label{eq:def_psi_pm}
\end{equation}
where the variable $\tau$ is determined by the equation
\begin{equation}
\alpha\upsilon(\tau)=\beta\upsilon(z),
\label{eq:z_to_tau_rel}
\end{equation}
$\upsilon$ is the Joukowsky map~\eqref{eq:joukowsky}, and ${}_2\phi_1$ is the $q$-Gauss hypergeometric series.

\begin{rem}
Note that, if $|\alpha|\geq1$, the $q$-Gauss hypergeometric series in~\eqref{eq:def_psi_pm} diverges for $n=-1$.
Further, observe that $\psi_{n}^{\pm}(z)$ are analytic in $z\in\C\setminus\{0\}$. This is the reason why the 
$n$-independent factor $\left(qz^{\pm2};q\right)_{\!\infty}$ is present in~\eqref{eq:def_psi_pm}.
\label{rem:basic_prop_psi_pm}
\end{rem}

\begin{rem}
If $\alpha=0$ or $\beta=0$, $\psi_{n}^{\pm}$ is defined by the respective limit values, i.e.,
\[
 \psi_{n}^{\pm}(z):=z^{\pm n}\left(qz^{\pm2};q\right)_{\!\infty}\pFq{2}{1}{\ii z^{\pm1},-\ii z^{\pm1}}{qz^{\pm2}}{\alpha q^{n+1}}, \quad \mbox{ if } \beta=0,
\]
and
\[
 \psi_{n}^{\pm}(z):=z^{\pm n}\left(qz^{\pm2};q\right)_{\!\infty}\pFq{1}{1}{0}{qz^{\pm2}}{\beta(1+z^{2})q^{n+1}}, \quad \mbox{ if } \alpha=0.
\]
\end{rem}

The following proposition shows that the sequences $\psi^{\pm}(z)$ form generically a couple of linearly independent solutions of the second-order difference equation~\eqref{eq:diff_eq}.

\begin{prop}\label{prop:two_sol_psi_pm}
 For any $\alpha,\beta,z\in\C$ such that $|\alpha|<1$ and $z\neq0$, the sequences $\psi^{\pm}(z)$ given by~\eqref{eq:def_psi_pm} are two solutions of~\eqref{eq:diff_eq}. 
 Moreover, for their Wronskian 
 $W(\psi^{+},\psi^{-}):=\psi_{n}^{+}(z)\psi_{n+1}^{-}(z)-\psi_{n+1}^{+}(z)\psi_{n}^{-}(z)$, one has 
 \begin{equation}
  W(\psi^{+},\psi^{-})=z^{-1}\left(z^{2},qz^{-2};q\right)_{\!\infty}. 
 \label{eq:wrons_psipm}
 \end{equation}
 Consequently, $\psi^{\pm}(z)$ are linearly independent if and only if $z\notin \pm q^{\mathbb{Z}/2}$.
\end{prop}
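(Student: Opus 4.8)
The plan is to reduce both assertions to the $\psi^{+}$ case via a symmetry, verify the solution property by a coefficient comparison for the ${}_{2}\phi_{1}$ series, and then read off the Wronskian and the independence criterion from the necessarily constant Casoratian, evaluated in the limit $n\to\infty$.

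First I would exploit the invariance $\upsilon(z)=\upsilon(z^{-1})$. Under $z\mapsto z^{-1}$ the difference equation~\eqref{eq:diff_eq} is unchanged, and the auxiliary variable $\tau$ fixed by~\eqref{eq:z_to_tau_rel} is unchanged as well (its defining equation depends on $z$ only through $\upsilon(z)$). Comparing the two lines of~\eqref{eq:def_psi_pm} then shows $\psi_{n}^{-}(z)=\psi_{n}^{+}(z^{-1})$, so it suffices to prove that $\psi^{+}$ solves~\eqref{eq:diff_eq}; the claim for $\psi^{-}$ follows by substituting $z^{-1}$. Note also that $|\alpha|<1$ forces $|\alpha q^{n+1}|<1$ for every $n\geq-1$, so each ${}_{2}\phi_{1}$ converges and $\psi_{n}^{\pm}$ is well defined.

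For the solution property I would write $\psi_{n}^{+}=z^{n}(qz^{2};q)_{\infty}f(\alpha q^{n+1})$ with $f(x)={}_{2}\phi_{1}(z\tau,z\tau^{-1};qz^{2};q;x)$. Substituting into~\eqref{eq:diff_eq} and cancelling the $n$-dependent prefactor $z^{n-1}(qz^{2};q)_{\infty}$ turns the recurrence into the three-term relation
\[
  \alpha(q-x)\,f(x/q)-(z^{2}+1)(\alpha q-\beta x)\,f(x)+\alpha z^{2}(q-x)\,f(qx)=0,\qquad x=\alpha q^{n+1},
\]
which I would verify as a power-series identity in $x$ by comparing coefficients of $x^{k}$. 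Writing $f(x)=\sum_{k}C_{k}x^{k}$, the coefficient of $x^{k}$ is a combination of $C_{k}$ and $C_{k-1}$; the bracket multiplying $C_{k}$ factors as $\alpha q^{1-k}(1-q^{k})(1-cq^{k-1})$ with $c:=qz^{2}$, and using $C_{k}/C_{k-1}=(1-aq^{k-1})(1-bq^{k-1})/[(1-cq^{k-1})(1-q^{k})]$ for $a=z\tau$, $b=z\tau^{-1}$ this collapses to $\alpha q^{1-k}(1-aq^{k-1})(1-bq^{k-1})$. Invoking the parameter relations $ab=z^{2}$, $c=qab$, and $\alpha(a+b)=\beta(z^{2}+1)$ — the last of which encodes~\eqref{eq:z_to_tau_rel}, since $a+b=z\upsilon(\tau)$ — the whole coefficient reduces to $\beta(z^{2}+1)-\alpha(a+b)=0$. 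This coefficient identity is the computational heart of the argument and the step I expect to require the most care; since only $ab=z^{2}$ and $\alpha(a+b)=\beta(z^{2}+1)$ enter, the degenerate cases $\alpha=0$ or $\beta=0$ follow by continuity from their limiting definitions.

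Finally, because~\eqref{eq:diff_eq} has the form $\psi_{n+1}=A_{n}\psi_{n}-\psi_{n-1}$ with unit coefficient on $\psi_{n-1}$, the Casoratian $W(\psi^{+},\psi^{-})$ is independent of $n$, so I may evaluate it as $n\to\infty$. There $\alpha q^{n+1}\to0$, each ${}_{2}\phi_{1}$ tends to $1$, and $\psi_{n}^{\pm}\sim z^{\pm n}(qz^{\pm2};q)_{\infty}$; the powers $z^{\pm n}$ cancel in each product, yielding
\[
  W(\psi^{+},\psi^{-})=(z^{-1}-z)(qz^{2};q)_{\infty}(qz^{-2};q)_{\infty}=z^{-1}(z^{2};q)_{\infty}(qz^{-2};q)_{\infty},
\]
where the last equality uses $(1-z^{2})(qz^{2};q)_{\infty}=(z^{2};q)_{\infty}$; this is~\eqref{eq:wrons_psipm}. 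Two solutions of a second-order linear difference equation are independent precisely when their Casoratian is nonzero, so to close the argument I would locate the zeros: $(z^{2};q)_{\infty}$ vanishes iff $z^{2}=q^{-m}$ for some $m\in\N_{0}$, and $(qz^{-2};q)_{\infty}$ vanishes iff $z^{2}=q^{m}$ for some $m\in\N$. Together with $z\neq0$ this gives $W=0$ iff $z^{2}\in q^{\Z}$, i.e. $z\in\pm q^{\Z/2}$, hence linear independence exactly for $z\notin\pm q^{\Z/2}$.
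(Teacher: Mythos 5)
Your proposal is correct and follows essentially the same route as the paper's proof: reduction to $\psi^{+}$ via $\psi^{-}_{n}(z)=\psi^{+}_{n}(z^{-1})$, verification of the solution property through the equivalent $q$-difference relation in the variable $x=\alpha q^{n+1}$ (the paper runs the identical computation in the reverse direction, deriving the ${}_{2}\phi_{1}$ from a Frobenius-type Ansatz rather than verifying it by coefficient comparison), and evaluation of the constant Casoratian from the $n\to\infty$ asymptotics, with the same factorization yielding~\eqref{eq:wrons_psipm} and the same reading of its zeros. The only point you leave tacit, which the paper dispatches in one sentence, is the extension of the solution property to the exceptional values $z\in\pm q^{-\N/2}$, where the bare ${}_{2}\phi_{1}$ is undefined and only its product with the prefactor $(qz^{2};q)_{\infty}$ makes sense; this follows from the analyticity of $\psi^{\pm}_{n}$ in $z\in\C\setminus\{0\}$ noted in Remark~\ref{rem:basic_prop_psi_pm}.
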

%
%

\begin{proof}
 First, we verify that $\psi^{+}$ solves~\eqref{eq:diff_eq}. Let us consider the $q$-difference equation
 \begin{equation}
  (1-\alpha\xi)\left(f(q\xi)+f(q^{-1}\xi)\right)-\upsilon(z)(1-\beta\xi)f(\xi)=0
 \label{eq:diff_eq_in_proof}
 \end{equation}
 and look for its solution in the form
 \[
  f(\xi)=\sum_{n=0}^{\infty}a_{n}\xi^{n+\log_{q}z}.
 \]
 By plugging this Ansatz into~\eqref{eq:diff_eq_in_proof}, one obtains the first-order difference equation for the coefficients
 \[
  \left(zq^{n}+z^{-1}q^{-n}-z-z^{-1}\right)a_{n}=\alpha\left(zq^{n-1}+z^{-1}q^{-n+1}-\tau-\tau^{-1}\right)a_{n-1},
 \]
 where the new variable $\tau$ is determined by the equation~\eqref{eq:z_to_tau_rel}. The last equation can be rewritten as
 \[
  a_{n}=q\alpha\frac{\left(1-z\tau q^{n-1}\right)\left(1-z\tau^{-1} q^{n-1}\right)}{\left(1-z^{2} q^{n}\right)\left(1-q^{n}\right)}a_{n-1}, \quad n\in\N.
 \]
 Hence the solution reads
 \[
  a_{n}=\left(q\alpha\right)^{n}\frac{\left(z\tau,z\tau^{-1};q\right)_{n}}{\left(q,qz^{2};q\right)_{n}}a_{0}, \quad n\in\N_{0}.
 \]
 Consequently, the function
 \[
  f(\xi)=\xi^{\log_{q} z}\pFq{2}{1}{z\tau,z\tau^{-1}}{qz^{2}}{\alpha q\xi}
 \]
 is a solution of~\eqref{eq:diff_eq_in_proof} for any $z\notin\{0\}\cup q^{-\N/2}$, $\alpha,\beta\in\C\setminus\{0\}$, and $\xi\in\C$ such that $|\alpha q\xi|<1$.
 
 It follows readily from the relation
 \[
  \psi_{n}^{+}(z)=f\left(q^{n}\right)\!,
 \]
 that $\psi^{+}$ is a solution of~\eqref{eq:diff_eq}. In addition, since the equation~\eqref{eq:diff_eq_in_proof} 
 is invariant under the exchange of $z$ and $z^{-1}$ and $\psi^{-}(z)=\psi^{+}(z^{-1})$, $\psi^{-}$ is a solution 
 of~\eqref{eq:diff_eq}, too. Further, since $\psi^{(\pm)}$ are analytic in $z\in\C\setminus\{0\}$, they have to be 
 solutions of~\eqref{eq:diff_eq} for all $z\in\C\setminus\{0\}$. Similarly, the validity of the statement can be 
 extended for the values $\alpha=0$ or $\beta=0$ by the continuity.
 
 Second, recall that the Wronskian
 \begin{equation}
  W(\psi^{+},\psi^{-})=\psi_{n}^{+}(z)\psi_{n+1}^{-}(z)-\psi_{n+1}^{+}(z)\psi_{n}^{-}(z)
 \label{eq:def_wronsk_general}
 \end{equation}
 is a constant not depending on $n$ that vanishes if and only if $\psi^{\pm}(z)$ are linearly dependent solutions.
 By using the definition of the $q$-Gauss hypergeometric series in~\eqref{eq:def_psi_pm}, one immediately obtains
 the asymptotic formulas
 \begin{equation}
  \psi_{n}^{\pm}(z)=z^{\pm n}\left(qz^{\pm2};q\right)_{\!\infty}\left(1+O\left(q^{n}\right)\right)\!, \quad \mbox{ for } n\to\infty.
 \label{eq:asympt_psipm_n_infpos}
 \end{equation}
 By sending $n\to\infty$ on the right-hand side of~\eqref{eq:def_wronsk_general}, one obtains the identity~\eqref{eq:wrons_psipm}. Clearly, the right-hand side of~\eqref{eq:wrons_psipm}
 vanishes if and only if $z\in \pm q^{\mathbb{Z}/2}$.
\end{proof}

Knowing the two solutions of~\eqref{eq:diff_eq}, one can deduce a formula for the orthogonal polynomials~$p_{n}$ 
in terms of the $q$-Gauss hypergeometric series.

\begin{prop}
 For $|\alpha|<1$, $\beta<1$, $z\neq0$, and $n\geq-1$, one has
 \begin{align}
  p_{n}^{(\alpha,\beta)}(\upsilon(z);q)=\frac{1}{z^{-1}-z}\frac{(\alpha;q)_{n}}{(\beta;q)_{n}}
  &\bigg[z^{-n-1}\pFq{2}{1}{z\tau,z\tau^{-1}}{qz^{2}}{\alpha }\pFq{2}{1}{z^{-1}\tau,z^{-1}\tau^{-1}}{qz^{-2}}{\alpha q^{n+1} }\nonumber\\
  &-z^{n+1}\pFq{2}{1}{z^{-1}\tau,z^{-1}\tau^{-1}}{qz^{-2}}{\alpha }\pFq{2}{1}{z\tau,z\tau^{-1}}{qz^{2}}{\alpha q^{n+1} } \bigg]\!.\nonumber\\
  \label{eq:ogp_formula_psipm}
 \end{align}
 If $z\in \pm q^{\Z/2}$, then the right-hand side has to be understood as the respective limit value.
\end{prop}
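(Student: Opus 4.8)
The plan is to realize the monic polynomial solution of~\eqref{eq:ogp_monic_recur} as an explicit linear combination of the two solutions $\psi^{\pm}(z)$ from Proposition~\ref{prop:two_sol_psi_pm} and to read off the coefficients by a discrete variation-of-parameters (Cramer) argument. Since $u_{n}=\frac{(\alpha;q)_{n}}{(\beta;q)_{n}}\psi_{n}$ solves~\eqref{eq:ogp_monic_recur} whenever $\psi$ solves~\eqref{eq:diff_eq}, and since the factor $\frac{(\alpha;q)_{n}}{(\beta;q)_{n}}$ is nonzero for $\alpha,\beta<1$ and $q\in(0,1)$, the inverse substitution shows that
\[
 \phi_{n}:=\frac{(\beta;q)_{n}}{(\alpha;q)_{n}}\,p_{n}^{(\alpha,\beta)}(\upsilon(z);q), \quad n\geq-1,
\]
is a solution of the second-order difference equation~\eqref{eq:diff_eq}. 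For $z\notin\pm q^{\Z/2}$ the solutions $\psi^{+}(z)$ and $\psi^{-}(z)$ are linearly independent by Proposition~\ref{prop:two_sol_psi_pm}, so one may write $\phi=A(z)\psi^{+}(z)+B(z)\psi^{-}(z)$ with $z$-dependent constants $A,B$.

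First I would fix $A$ and $B$ using the constancy of the discrete Wronskian. Because~\eqref{eq:diff_eq} carries equal unit coefficients on $\psi_{n\pm1}$, the bilinear form $W(\cdot,\cdot)$ of~\eqref{eq:def_wronsk_general} of any two of its solutions is independent of $n$ (the same computation already used in the proof of Proposition~\ref{prop:two_sol_psi_pm}). Applying $W(\cdot,\psi^{-})$ and $W(\psi^{+},\cdot)$ to the linear combination and using $W(\psi^{-},\psi^{-})=W(\psi^{+},\psi^{+})=0$ gives
\[
 A=\frac{W(\phi,\psi^{-})}{W(\psi^{+},\psi^{-})}, \qquad B=\frac{W(\psi^{+},\phi)}{W(\psi^{+},\psi^{-})},
\]
where $W(\psi^{+},\psi^{-})$ is already known from~\eqref{eq:wrons_psipm}. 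I would then evaluate the two numerator Wronskians at $n=-1$, which is precisely where the hypothesis $|\alpha|<1$ and the extended index range $n\geq-1$ in~\eqref{eq:def_psi_pm} are needed to guarantee convergence. The initial data $p_{-1}=0$, $p_{0}=1$ translate into $\phi_{-1}=0$ and $\phi_{0}=1$, so that $W(\phi,\psi^{-})=-\psi_{-1}^{-}(z)$ and $W(\psi^{+},\phi)=\psi_{-1}^{+}(z)$, whence
\[
 \phi_{n}=\frac{\psi_{-1}^{+}(z)\,\psi_{n}^{-}(z)-\psi_{-1}^{-}(z)\,\psi_{n}^{+}(z)}{W(\psi^{+},\psi^{-})}.
\]

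Substituting the closed forms of $\psi_{-1}^{\pm}(z)$ and $\psi_{n}^{\pm}(z)$ from~\eqref{eq:def_psi_pm} factors the product $(qz^{2};q)_{\infty}(qz^{-2};q)_{\infty}$ out of the numerator and leaves exactly the bracketed combination of four ${}_2\phi_1$ series appearing in~\eqref{eq:ogp_formula_psipm}. Dividing by $W(\psi^{+},\psi^{-})=z^{-1}(z^{2},qz^{-2};q)_{\infty}$ and simplifying with $(z^{2};q)_{\infty}=(1-z^{2})(qz^{2};q)_{\infty}$ collapses the surviving products into the prefactor $\frac{1}{z^{-1}-z}$; multiplying back by $\frac{(\alpha;q)_{n}}{(\beta;q)_{n}}$ to recover $p_{n}^{(\alpha,\beta)}$ then yields~\eqref{eq:ogp_formula_psipm}. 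The main obstacle is this final simplification: one must track the $q$-Pochhammer prefactors of $\psi_{-1}^{\pm}$ together with the powers $z^{\mp1}$ and $z^{\pm n}$ carefully so that the four infinite products fuse into the single rational factor $1/(z^{-1}-z)$.

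Finally, the excluded set $z\in\pm q^{\Z/2}$ is handled by analyticity. The left-hand side $p_{n}^{(\alpha,\beta)}(\upsilon(z);q)$ is a Laurent polynomial in $z$, hence analytic on $\C\setminus\{0\}$, while the right-hand side is analytic there away from $\pm q^{\Z/2}$; since the two expressions agree on the complement of this discrete set, the identity extends to it by passing to the limit, which is precisely the meaning of the last sentence of the statement. Note in particular that $z=\pm1$, where the factor $1/(z^{-1}-z)$ is singular, belongs to $\pm q^{\Z/2}$ and is covered by this limiting interpretation.
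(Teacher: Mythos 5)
Your proposal is correct and takes essentially the same approach as the paper: both represent the gauge-transformed polynomial solution as $\bigl(\psi_{-1}^{+}(z)\psi_{n}^{-}(z)-\psi_{-1}^{-}(z)\psi_{n}^{+}(z)\bigr)/W(\psi^{+},\psi^{-})$, substitute the closed forms~\eqref{eq:def_psi_pm} together with the Wronskian value~\eqref{eq:wrons_psipm} to fuse the $q$-Pochhammer prefactors into $1/(z^{-1}-z)$, and extend to $z\in\pm q^{\Z/2}$ by analyticity. Your variation-of-parameters computation of the coefficients via Wronskian pairings is merely a more explicit route to the same linear combination that the paper writes down directly by matching the initial conditions $p_{-1}=0$, $p_{0}=1$.
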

%
%

\begin{rem}
 Note that it is by no means obvious that the right-hand side of~\eqref{eq:ogp_formula_psipm} is a polynomial in~$\upsilon(z)$.
\end{rem}

\begin{proof}
 Let $z\notin \pm q^{\Z/2}$. The sequence
 \[
  u_{n}(z):=\frac{(\alpha;q)_{n}}{(\beta;q)_{n}}\psi_{n}(z),
 \]
 where
 \[
 \psi_{n}(z):=\frac{\psi_{n}^{-}(z)\psi_{-1}^{+}(z)-\psi_{n}^{+}(z)\psi_{-1}^{-}(z)}{\psi_{0}^{-}(z)\psi_{-1}^{+}(z)-\psi_{0}^{+}(z)\psi_{-1}^{-}(z)}=
 \frac{\psi_{n}^{-}(z)\psi_{-1}^{+}(z)-\psi_{n}^{+}(z)\psi_{-1}^{-}(z)}{W(\psi^{+},\psi^{-})},
 \]
 is the solution of~\eqref{eq:ogp_monic_recur}, with $x$ replaced by $\upsilon(z)$, satisfying the same initial 
 conditions as $p_{n}$, i.e., $u_{-1}(z)=0$ and $u_{0}(z)=1$. Thus, $p_{n}^{(\alpha,\beta)}(\upsilon(z);q)=u_{n}(z)$ 
 for all $n\geq-1$. To arrive at the formula~\eqref{eq:ogp_formula_psipm}, it suffices to use the formula for the 
 Wronskian~\eqref{eq:wrons_psipm} and the definition~\eqref{eq:def_psi_pm}. The resulting formula can be extended 
 to all $z\neq0$ by continuity.
\end{proof}

An advantage of the rather complicated expression~\eqref{eq:ogp_formula_psipm} for the polynomial $p_{n}^{(\alpha,\beta)}$ 
is that one can immediately derive the asymptotic behavior of $p_{n}$ for $n\to\infty$.

\begin{cor}
 Let $|\alpha|<1$, $\beta<1$, then, as $n\to\infty$, one has the asymptotic expansion
 \begin{align*}
  p_{n}^{(\alpha,\beta)}(\upsilon(z);q)=\frac{1}{z^{-1}-z}\frac{(\alpha;q)_{\infty}}{(\beta;q)_{\infty}}&
  \bigg(z^{-n-1}\pFq{2}{1}{z\tau,z\tau^{-1}}{qz^{2}}{\alpha } \\
  &\hskip18pt-z^{n+1}\pFq{2}{1}{z^{-1}\tau,z^{-1}\tau^{-1}}{qz^{-2}}{\alpha }\bigg)\left(1+O(q^{n})\right).
 \end{align*}
 In particular, if $0<|z|<1$, then
 \begin{equation}
  p_{n}^{(\alpha,\beta)}(\upsilon(z);q)=\frac{1}{z^{-1}-z}\frac{(\alpha;q)_{\infty}}{(\beta;q)_{\infty}}
  z^{-n-1}\pFq{2}{1}{z\tau,z\tau^{-1}}{qz^{2}}{\alpha }\left(1+O\left(\left[\max(|z|,q)\right]^{n}\right)\right),
 \label{eq:p_n_asympt_inside_unit disk}
 \end{equation}
 and, if $\theta\in(0,\pi)$, then
 \begin{equation}
  p_{n}^{(\alpha,\beta)}(2\cos\theta;q)=\frac{1}{\sin\theta}\frac{(\alpha;q)_{\infty}}{(\beta;q)_{\infty}}
  \Im\left[e^{\ii(n+1)\theta}\pFq{2}{1}{e^{-\ii\theta}\tau,e^{-\ii\theta}\tau^{-1}}{qe^{-2\ii\theta}}{\alpha }\right]\left(1+O\left(q^{n}\right)\right).
  \label{eq:p_n_asympt_essen}
 \end{equation}
\end{cor}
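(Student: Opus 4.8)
The plan is to start from the exact representation~\eqref{eq:ogp_formula_psipm} and simply track the $n$-dependence, which is confined to three factors: the ratio $(\alpha;q)_{n}/(\beta;q)_{n}$, the powers $z^{\pm(n+1)}$, and the two $q$-Gauss series whose argument is $\alpha q^{n+1}$. The first is elementary, the powers are explicit, and the third type of factor tends to $1$ because its argument tends to $0$. Assembling these three observations yields all three displayed asymptotics, the only genuinely delicate point being the passage to the essential spectrum $z=e^{\ii\theta}$.

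Concretely, I would first record two auxiliary estimates. Since $(\alpha;q)_{n}=(\alpha;q)_{\infty}/(\alpha q^{n};q)_{\infty}$ and $(\alpha q^{n};q)_{\infty}=1+O(q^{n})$, one has
\[
  \frac{(\alpha;q)_{n}}{(\beta;q)_{n}}=\frac{(\alpha;q)_{\infty}}{(\beta;q)_{\infty}}\,\frac{(\beta q^{n};q)_{\infty}}{(\alpha q^{n};q)_{\infty}}=\frac{(\alpha;q)_{\infty}}{(\beta;q)_{\infty}}\bigl(1+O(q^{n})\bigr).
\]
Moreover, directly from its defining power series, ${}_{2}\phi_{1}(a,b;c;w)=1+O(w)$ as $w\to0$, so each of the two factors with argument $\alpha q^{n+1}$ in~\eqref{eq:ogp_formula_psipm} equals $1+O(q^{n})$ (here one uses $z\notin\pm q^{\Z/2}$, the degenerate points being covered by the limiting convention). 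Writing $A={}_{2}\phi_{1}(z\tau,z\tau^{-1};qz^{2};\alpha)$ and $B={}_{2}\phi_{1}(z^{-1}\tau,z^{-1}\tau^{-1};qz^{-2};\alpha)$ and substituting both estimates into~\eqref{eq:ogp_formula_psipm}, the bracket collapses to $z^{-n-1}A-z^{n+1}B$ up to a relative $O(q^{n})$, which is exactly the first claimed expansion; the parameter values $\alpha=0$ or $\beta=0$ follow by continuity from the limiting definitions of $\psi_{n}^{\pm}$.

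For $0<|z|<1$ the term $z^{-n-1}A$ dominates, its modulus $|z|^{-n-1}$ overwhelming $|z|^{n+1}$ of the second term. Factoring $z^{-n-1}A$ out, the neglected summand $z^{n+1}B$ then contributes a relative error of order $|z|^{2n}$, whereas the two $q$-asymptotics contribute $O(q^{n})$. Since both $|z|^{2n}\le|z|^{n}$ and $q^{n}$ are dominated by $[\max(|z|,q)]^{n}$, this gives the sharper error term stated for the inside-the-disk regime.

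The genuinely non-routine step is the on-circle formula~\eqref{eq:p_n_asympt_essen}, where $z=e^{\ii\theta}$ and the exponentials $z^{\pm(n+1)}$ have equal modulus, so neither term dominates. Here I would exploit a conjugation symmetry: since $\upsilon(z)=2\cos\theta\in\R$, the defining relation~\eqref{eq:z_to_tau_rel} forces $\tau+\tau^{-1}\in\R$, hence $\bar\tau\in\{\tau,\tau^{-1}\}$; in either case the unordered pair $\{z\tau,z\tau^{-1}\}$ is sent to $\{z^{-1}\tau,z^{-1}\tau^{-1}\}$ under complex conjugation, while $\overline{qz^{2}}=qz^{-2}$ and $\alpha\in\R$. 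As ${}_{2}\phi_{1}$ is symmetric in its two numerator parameters and has real series coefficients, this yields $B=\bar A$. Combining with $z^{-1}-z=-2\ii\sin\theta$ and $z^{\mp(n+1)}=e^{\mp\ii(n+1)\theta}$, the bracket becomes $z^{-n-1}A-z^{n+1}B=-2\ii\,\Im\bigl[e^{\ii(n+1)\theta}B\bigr]$, and the prefactor $1/(z^{-1}-z)$ converts the quotient into $\Im[\,\cdots\,]/\sin\theta$, producing precisely~\eqref{eq:p_n_asympt_essen}. The hard part will be verifying this conjugate-pair structure of the ${}_{2}\phi_{1}$-parameters and, relatedly, confirming that the relative $O(q^{n})$ survives the near-cancellation of the two comparable terms on the circle, away from the degenerate set $z\in\pm q^{\Z/2}$.
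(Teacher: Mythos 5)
Your proposal is correct and coincides with the paper's (essentially unwritten) proof: the corollary is presented there as an immediate consequence of~\eqref{eq:ogp_formula_psipm}, via exactly your two estimates $(\alpha;q)_{n}/(\beta;q)_{n}=\frac{(\alpha;q)_{\infty}}{(\beta;q)_{\infty}}\left(1+O(q^{n})\right)$ and ${}_{2}\phi_{1}(\,\cdot\,;\alpha q^{n+1})=1+O(q^{n})$, and your conjugate-pair identity $B=\overline{A}$ on $|z|=1$ (from $\alpha\upsilon(\tau)\in\R$ forcing $\bar{\tau}\in\{\tau,\tau^{-1}\}$) is the same symmetry $\overline{\psi_{n}^{+}(e^{-\ii\theta})}=\psi_{n}^{-}(e^{-\ii\theta})$ that the paper invokes later in the proof of Proposition~\ref{prop:ac_meas}. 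The cancellation issue you flag at the end is not a gap in your argument but a shared looseness of the statement itself: since $\Im\left[e^{\ii(n+1)\theta}B\right]$ may vanish for particular $n$ (and likewise the prefactor in~\eqref{eq:p_n_asympt_inside_unit disk} vanishes at zeros of the leading ${}_{2}\phi_{1}$), the $O(q^{n})$ error terms are properly understood additively inside the bracket, which is precisely what your computation delivers.
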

%
%
%
%
%
%
%
%

\subsection{The measure of orthogonality} 

In this subsection, a detailed description of the absolutely continuous and the discrete part of the measure of orthogonality
for polynomials $p_{n}^{(\alpha,\beta)}$ is derived. These goals are achieved by means of the Cauchy transform
that can be expressed as a ratio of two $q$-Gauss hypergeometric functions.

\begin{prop}\label{prop:Cauchy_transf_mu}
 For $|\alpha|<1$, $\beta<1$, and $0<|z|<1$, one has
 \[
  C_{\mu}(\upsilon(z))=\frac{1-\beta}{1-\alpha}\frac{\psi_{0}^{+}(z)}{\psi_{-1}^{+}(z)}=
  z\frac{1-\beta}{1-\alpha}\,\pFq{2}{1}{z\tau,z\tau^{-1}}{qz^{2}}{\alpha q}\bigg/\pFq{2}{1}{z\tau,z\tau^{-1}}{qz^{2}}{\alpha},
 \]
 where $\tau$ is determined by the equation~\eqref{eq:z_to_tau_rel}.
\end{prop}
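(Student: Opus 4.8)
The plan is to compute the Cauchy transform through the Markov theorem~\eqref{eq:markov}, which is applicable here because $b_n=0$ and $a_n=\tilde\gamma_{n-1}\tilde\gamma_n$ form a bounded sequence (in fact $a_n\to1$). Thus $C_\mu(\upsilon(z))=\lim_{n\to\infty}q_n^{(\alpha,\beta)}(\upsilon(z);q)/p_n^{(\alpha,\beta)}(\upsilon(z);q)$, and invoking the identity~\eqref{eq:ogp_sec_kind} for the polynomials of the second kind turns this into
\[
  C_\mu(\upsilon(z))=\lim_{n\to\infty}\frac{p_{n-1}^{(\alpha q,\beta q)}(\upsilon(z);q)}{p_n^{(\alpha,\beta)}(\upsilon(z);q)},
\]
a ratio of monic orthogonal polynomials of the same type, one with the shifted parameter pair $(\alpha q,\beta q)$ and the other with $(\alpha,\beta)$.

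First I would record the key observation that the auxiliary variable $\tau$ defined by~\eqref{eq:z_to_tau_rel} is unchanged under the shift $(\alpha,\beta)\mapsto(\alpha q,\beta q)$: multiplying both sides of $\alpha\upsilon(\tau)=\beta\upsilon(z)$ by $q$ leaves the equation determining $\tau$ intact. Hence the asymptotic expansion~\eqref{eq:p_n_asympt_inside_unit disk}, valid for $0<|z|<1$, may be applied to both the numerator and the denominator with one and the same $\tau$.

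Next I would substitute~\eqref{eq:p_n_asympt_inside_unit disk} into the quotient and let $n\to\infty$. The common prefactor $(z^{-1}-z)^{-1}$ cancels; the powers $z^{-n}$ coming from $p_{n-1}^{(\alpha q,\beta q)}$ and $z^{-n-1}$ coming from $p_n^{(\alpha,\beta)}$ combine into a single factor $z$; and the two error terms contribute a factor $1+O\bigl([\max(|z|,q)]^{n-1}\bigr)$ that tends to $1$ since $0<|z|<1$ and $0<q<1$. For the infinite $q$-Pochhammer prefactors I would use $(\alpha;q)_\infty=(1-\alpha)(\alpha q;q)_\infty$ together with the analogous identity for $\beta$, so that
\[
  \frac{(\alpha q;q)_\infty/(\beta q;q)_\infty}{(\alpha;q)_\infty/(\beta;q)_\infty}=\frac{(\alpha q;q)_\infty}{(\alpha;q)_\infty}\cdot\frac{(\beta;q)_\infty}{(\beta q;q)_\infty}=\frac{1-\beta}{1-\alpha}.
\]
The surviving ratio of $q$-Gauss series is then evaluated at $\alpha q$ in the numerator and at $\alpha$ in the denominator, which is precisely the closed form asserted on the right-hand side of the proposition. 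The first equality, $C_\mu(\upsilon(z))=\tfrac{1-\beta}{1-\alpha}\,\psi_0^+(z)/\psi_{-1}^+(z)$, then follows by merely unwinding the definition~\eqref{eq:def_psi_pm}: evaluating $\psi_n^+$ at $n=0$ and $n=-1$ reproduces exactly those two $q$-Gauss series, the prefactors $(qz^2;q)_\infty$ cancel, and $z^0/z^{-1}=z$ supplies the remaining power.

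The only genuine obstacle, as I see it, is the bookkeeping around the denominator: the division by the leading coefficient of $p_n^{(\alpha,\beta)}$ — namely the $q$-Gauss series at $\alpha$ — is legitimate only where this series is nonzero, equivalently where the corresponding leading asymptotic term does not degenerate. I would therefore first establish the identity on the open set of $z$ with $0<|z|<1$ for which this series is nonvanishing (the location of its zeros being analysed in the appendix), and then extend it to the full punctured disk by analyticity, both sides being meromorphic in $z$ there. The algebraic manipulations themselves are routine once the invariance of $\tau$ is noted.
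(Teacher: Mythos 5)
Your proposal is correct and follows exactly the paper's own route: the paper proves this proposition in one line by citing the Markov theorem~\eqref{eq:markov}, the second-kind identity~\eqref{eq:ogp_sec_kind}, and the asymptotic formula~\eqref{eq:p_n_asympt_inside_unit disk}, which are precisely your three ingredients. You merely make explicit the details the paper leaves implicit --- the invariance of $\tau$ under $(\alpha,\beta)\mapsto(\alpha q,\beta q)$, the telescoping $(\alpha q;q)_{\infty}/(\alpha;q)_{\infty}=(1-\alpha)^{-1}$, and the continuation past zeros of the denominator series --- all of which are handled correctly.
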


\begin{proof}
 It follows immediately from the Markov Theorem~\eqref{eq:markov}, equality~\eqref{eq:ogp_sec_kind}, and asymptotic formula~\eqref{eq:p_n_asympt_inside_unit disk}.
\end{proof}

First, we derive the density of the absolutely continuous component of $\mu$.

\begin{prop}\label{prop:ac_meas}
 For $|\alpha|<1$, $\beta<1$ and $\theta\in(0,\pi)$, one has
 \[
  \frac{\dd\mu_{ac}}{\dd\theta}(2\cos\theta)=-\frac{1-\beta}{2\pi(1-\alpha)}\left|\frac{\left(e^{2\ii\theta};q\right)_{\!\infty}}{\psi_{-1}^{+}(e^{\ii\theta})}\right|^{2}
  =-\frac{2(1-\beta)}{\pi(1-\alpha)}\sin^{2}(\theta)\left|\pFq{2}{1}{e^{\ii\theta}\tau,e^{\ii\theta}\tau^{-1}}{qe^{2\ii\theta}}{\alpha}\right|^{-2},
 \]
 where $2\beta\cos\theta=\alpha(\tau+\tau^{-1})$.
\end{prop}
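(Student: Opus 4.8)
The plan is to recover the absolutely continuous density from the boundary values of the Cauchy transform by means of the Stieltjes--Perron inversion formula, evaluating those boundary values explicitly through the ratio representation of Proposition~\ref{prop:Cauchy_transf_mu} and the Wronskian identity of Proposition~\ref{prop:two_sol_psi_pm}. For the convention $C_{\mu}(z)=\int(z-x)^{-1}\dd\mu(x)$ used here, the inversion formula reads $\rho(x)=\tfrac1\pi\lim_{\varepsilon\to0^{+}}\Im C_{\mu}(x-\ii\varepsilon)$ on $(-2,2)$. The decisive geometric point is the side of approach: writing $z=re^{\ii\theta}$ and letting $r\to1^{-}$, one has $\Im\upsilon(z)=(r-r^{-1})\sin\theta<0$ for $\theta\in(0,\pi)$, so $\upsilon(re^{\ii\theta})$ tends to $2\cos\theta$ from the lower half-plane. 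Taking this radial limit in Proposition~\ref{prop:Cauchy_transf_mu} therefore produces precisely the relevant boundary value $C_{\mu}(2\cos\theta-\ii0^{+})=\tfrac{1-\beta}{1-\alpha}\,\psi_{0}^{+}(e^{\ii\theta})/\psi_{-1}^{+}(e^{\ii\theta})$, the passage to $|z|=1$ being legitimate since $\psi^{\pm}$ are analytic on $\C\setminus\{0\}$.

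The core step is to extract the imaginary part of this ratio, and for this I would use the conjugation symmetry $\overline{\psi_{n}^{+}(e^{\ii\theta})}=\psi_{n}^{-}(e^{\ii\theta})$. It holds because $\psi^{-}(z)=\psi^{+}(z^{-1})$ (leaving $\tau$ unchanged, as $\upsilon(z^{-1})=\upsilon(z)$) while the constraint $\tau+\tau^{-1}=2\beta\cos\theta/\alpha\in\R$ forces $\tau$ to be either real or unimodular; in both cases $\{\bar\tau,\overline{\tau^{-1}}\}=\{\tau,\tau^{-1}\}$, so conjugating the defining series~\eqref{eq:def_psi_pm} term by term (recall $\alpha,\beta,q$ are real) merely swaps $z\leftrightarrow z^{-1}$. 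Consequently $\Im(\psi_{0}^{+}/\psi_{-1}^{+})=\tfrac1{2\ii}\,(\psi_{0}^{+}\psi_{-1}^{-}-\psi_{0}^{-}\psi_{-1}^{+})/|\psi_{-1}^{+}|^{2}$, and the numerator is exactly $-W(\psi^{+},\psi^{-})$ at $n=-1$. The Wronskian formula~\eqref{eq:wrons_psipm} then gives $W=e^{-\ii\theta}(e^{2\ii\theta},qe^{-2\ii\theta};q)_{\infty}$; splitting off the factor $1-e^{2\ii\theta}=-2\ii\sin\theta\,e^{\ii\theta}$ reveals $W=-2\ii\sin\theta\,|(qe^{2\ii\theta};q)_{\infty}|^{2}$, purely imaginary, whence $\Im(\psi_{0}^{+}/\psi_{-1}^{+})=\sin\theta\,|(qe^{2\ii\theta};q)_{\infty}|^{2}/|\psi_{-1}^{+}(e^{\ii\theta})|^{2}$.

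It then remains to convert the $x$-density into the $\theta$-density via the Jacobian $\dd x/\dd\theta=-2\sin\theta$, which accounts for the overall minus sign in the statement, and to tidy up the $q$-products. The identity $|(e^{2\ii\theta};q)_{\infty}|^{2}=4\sin^{2}\theta\,|(qe^{2\ii\theta};q)_{\infty}|^{2}$ (splitting off the $k=0$ factor $|1-e^{2\ii\theta}|^{2}=4\sin^{2}\theta$) passes between the first displayed form and the intermediate expression, while factoring $|(qe^{2\ii\theta};q)_{\infty}|^{2}$ out of $|\psi_{-1}^{+}(e^{\ii\theta})|^{2}$ through~\eqref{eq:def_psi_pm} --- noting that the series argument $\alpha q^{n+1}$ equals $\alpha$ at $n=-1$ --- yields the second form featuring the $q$-Gauss function ${}_{2}\phi_{1}$.

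I expect the main obstacle to be the rigorous control of the boundary behaviour: justifying the one-sided radial limit, the continuity of the ${}_{2}\phi_{1}$ series up to $|z|=1$, and above all fixing the correct sign, which is dictated by the orientation-reversing map $\theta\mapsto2\cos\theta$. A secondary subtlety is the non-vanishing of the denominator $\psi_{-1}^{+}(e^{\ii\theta})$ for $\theta\in(0,\pi)$, i.e.\ that the $q$-Gauss function has no zeros on this arc; this is consistent with the already-asserted positivity of $\rho$ on $(-2,2)$ and would be underpinned by the zero-localisation results collected in the appendix.
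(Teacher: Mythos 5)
Your proposal is correct and follows essentially the same route as the paper's proof: Stieltjes--Perron inversion of the Cauchy transform from Proposition~\ref{prop:Cauchy_transf_mu}, the conjugation symmetry $\overline{\psi_{n}^{+}}=\psi_{n}^{-}$ on the unit circle to expose the Wronskian~\eqref{eq:wrons_psipm}, and the Jacobian $\dd x=-2\sin\theta\,\dd\theta$ for the sign. The only (equivalent) cosmetic difference is that you approach $2\cos\theta$ from the lower half-plane via $z\to e^{\ii\theta}$, $|z|<1$, where the paper uses the upper half-plane via $z\to e^{-\ii\theta}$; note also that your computation $W=-2\ii\sin\theta\,\bigl|(qe^{2\ii\theta};q)_{\infty}\bigr|^{2}\neq0$ on $(0,\pi)$ already settles, together with $\overline{\psi_{-1}^{+}}=\psi_{-1}^{-}$, the non-vanishing of $\psi_{-1}^{+}(e^{\ii\theta})$ that you flag as a residual subtlety.
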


\begin{proof}
 The measure $\mu$ can be recovered from its Cauchy transform $C_{\mu}$ by using the Stieltjes--Perron inversion formula. In particular, for the absolutely continuous part of $\mu$, it takes the form
 \[
  \frac{\dd\mu_{ac}}{\dd x}(x)=-\frac{1}{\pi}\lim_{\substack{u\to x \\ \Im u>0}}\Im C_{\mu}(u), \quad x\in(-2,2).
 \]
 We put $u=\upsilon(z)$. Note that, for some $\theta\in(0,\pi)$, $u$ approaches the point $2\cos\theta$ from the upper half-plane $\Im u>0$ if and only if $z$ approaches the point $e^{-\ii\theta}$
 from the interior of the unit disk $|z|<1$. Consequently, by using Proposition~\ref{prop:Cauchy_transf_mu}, one gets
 \[
  \lim_{\substack{u\to x \\ \Im u>0}}\Im C_{\mu}(u)=\lim_{\substack{z\to e^{-\ii\theta} \\ |z|<1}}\Im C_{\mu}(\upsilon(z))=
  \frac{1-\beta}{1-\alpha}\Im\left(\frac{\psi_{0}^{+}(e^{-\ii\theta})}{\psi_{-1}^{+}(e^{-\ii\theta})}\right)\!.
 \]
 Further, since 
 \[
 \overline{\psi_{n}^{+}(e^{-\ii\theta})}=\psi_{n}^{-}(e^{-\ii\theta}), \quad \forall n\geq-1,
 \]
 one has
 \[
  \Im\left(\frac{\psi_{0}^{+}(e^{-\ii\theta})}{\psi_{-1}^{+}(e^{-\ii\theta})}\right)=
  \frac{\psi_{0}^{+}(e^{-\ii\theta})\psi_{-1}^{-}(e^{-\ii\theta})-\psi_{-1}^{+}(e^{-\ii\theta})\psi_{0}^{-}(e^{-\ii\theta})}{2\ii\left|\psi_{-1}^{+}(e^{-\ii\theta})\right|^{2}}.
 \]
 One recognizes the Wronskian in the nominator above. Thus, recalling~\eqref{eq:wrons_psipm}, one obtains
 \[
  \Im\left(\frac{\psi_{0}^{+}(e^{-\ii\theta})}{\psi_{-1}^{+}(e^{-\ii\theta})}\right)=
  -\frac{e^{\ii\theta}\left(e^{-2\ii\theta},qe^{2\ii\theta};q\right)_{\infty}}{2\ii\left|\psi_{-1}^{+}(e^{-\ii\theta})\right|^{2}}
  =-\frac{1}{4\sin\theta}\left|\frac{\left(e^{2\ii\theta};q\right)_{\infty}}{\psi_{-1}^{+}(e^{\ii\theta})}\right|^{2}\!.
 \]
 
 In total, we have
 \[
  \frac{\dd\mu_{ac}}{\dd x}(2\cos\theta)=\frac{1-\beta}{4\pi(1-\alpha)\sin\theta}\left|\frac{\left(e^{2\ii\theta};q\right)_{\infty}}{\psi_{-1}^{+}(e^{\ii\theta})}\right|^{2}
 \]
 and hence
 \[
  \frac{\dd\mu_{ac}}{\dd \theta}(2\cos\theta)=-\frac{1-\beta}{2\pi(1-\alpha)}\left|\frac{\left(e^{2\ii\theta};q\right)_{\infty}}{\psi_{-1}^{+}(e^{\ii\theta})}\right|^{2}, \quad  \mbox{ for } \theta\in(0,\pi).
 \]
 The second expression of the density from the statement follows from the definition~\eqref{eq:def_psi_pm} for the function $\psi_{-1}^{+}$.
\end{proof}

\begin{rem}
 Note that the function on the right-hand side for the density $\dd\mu_{ac}/\dd \theta$ from Proposition~\ref{prop:ac_meas} remains unchanged if $\theta$ 
 is replaced by $\pi-\theta$ as expected.
\end{rem}

\begin{rem}
 An alternative proof of Proposition~\ref{prop:ac_meas} can be based on the asymptotic formula~\eqref{eq:p_n_asympt_essen} 
 and the result of Nevai, see~\cite[Cor.~36]{nevai_mams79}, which implies that
 \[
  \limsup_{n\to\infty}\frac{\dd\mu_{ac}}{\dd\theta}(2\cos\theta)\frac{(q\beta,\beta;q)_{n-1}}{(q\alpha,\alpha;q)_{n-1}}
  \left(p_{n}^{(\alpha,\beta)}(2\cos\theta)\right)^{2}=-\frac{2}{\pi},
 \]
 for a.~e. $\theta\in(0,\pi)$.
\end{rem}

For the purpose of a description of the discrete part of the measure $\mu$, let us denote by $z_{k}=z_{k}(\alpha,\beta;q)$, 
$k\in\N$, the positive zeros of $\psi_{-1}^{+}$ ordered increasingly, i.e., $0<z_{1}<z_{2}<\dots$ A certain analysis 
of the zeros of $\psi_{n}^{+}$ is needed at this point. In order to remain focused on the description of the measure $\mu$
in this subsection, the analysis of zeros is postponed to Appendix~\ref{app:B}.

\begin{prop}\label{prop:d_meas}
 Let $|\alpha|<1$, $\beta<1$. If $z_{1}\geq1$, then $\mu_{d}=0$. Otherwise, $\mu_{d}$ is supported on finitely many 
 points $\pm\upsilon(z_{k})$, where $z_{k}$ are the positive zeros of $\psi_{-1}^{+}$ located in $(0,1)$, 
 and the the measure takes the form
 \[
  \mu_{d}=\frac{1-\beta}{1-\alpha}\sum_{0<z_{k}<1}\frac{z_{k}^{2}-1}{z_{k}^{2}}\frac{\psi_{0}^{+}(z_{k})}{\left(\psi_{-1}^{+}\right)'(z_{k})}\left(\delta_{\upsilon(z_{k})}+\delta_{-\upsilon(z_{k})}\right).
 \]
\end{prop}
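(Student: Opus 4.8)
The plan is to read off the atoms of $\mu$ from the poles of its Cauchy transform. Condition~\eqref{eq:nevai_cond_recur_coef} already guarantees that $\mu_d$ is a finite sum of point masses located in $\R\setminus[-2,2]$, and since $\mu$ is symmetric it suffices to locate the positive atoms $x_k>2$ together with their weights; the atoms at $-x_k$ then carry the same weight. By Proposition~\ref{prop:Cauchy_transf_mu}, under the substitution $x=\upsilon(z)$ the Cauchy transform reads
\[
C_\mu(\upsilon(z))=\frac{1-\beta}{1-\alpha}\frac{\psi_0^+(z)}{\psi_{-1}^+(z)},\qquad 0<|z|<1,
\]
and the Joukowsky map $\upsilon$ is a bijection of the punctured unit disk onto $\C\setminus[-2,2]$, which is exactly the region where $C_\mu$ is analytic off the support of $\mu_{ac}$. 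Since $\psi_0^+$ and $\psi_{-1}^+$ are analytic in $z\in\C\setminus\{0\}$, the function $C_\mu$ is meromorphic on $\C\setminus[-2,2]$ with simple poles precisely at the atoms of $\mu$. Pulling this back to the $z$-disk, the positive atoms correspond to the positive zeros $z_k$ of $\psi_{-1}^+$ lying in $(0,1)$, mapped by $x_k=\upsilon(z_k)>2$. If the smallest positive zero satisfies $z_1\ge 1$, there are no such $z_k$ inside the disk and hence $\mu_d=0$.

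For the weights I would compute the residue of $C_\mu$ at $x_k=\upsilon(z_k)$ directly. If $\mu$ has an atom of mass $w_k$ at $x_k$, then $C_\mu(x)\sim w_k/(x-x_k)$ near $x_k$, so $w_k=\Res_{x=x_k}C_\mu(x)$. Changing variables through $x=\upsilon(z)$ and using $\upsilon'(z)=(z^2-1)/z^2$ together with $\psi_{-1}^+(z)\approx(\psi_{-1}^+)'(z_k)(z-z_k)$, the simple pole transforms as
\[
w_k=\lim_{z\to z_k}\big(\upsilon(z)-\upsilon(z_k)\big)\,C_\mu(\upsilon(z))
=\frac{1-\beta}{1-\alpha}\,\frac{z_k^2-1}{z_k^2}\,\frac{\psi_0^+(z_k)}{(\psi_{-1}^+)'(z_k)},
\]
which is precisely the claimed expression, and by symmetry the atom at $-\upsilon(z_k)$ inherits the same weight. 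Note that $\psi_0^+(z_k)\neq0$: evaluating the difference equation~\eqref{eq:diff_eq} at $n=0$ shows that $\psi_{-1}^+(z_k)=\psi_0^+(z_k)=0$ would force $\psi^+\equiv0$, so the residue is genuinely nonzero and the pole is a real one.

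The main obstacle is not the residue calculation, which is routine, but the analysis of the zeros of $\psi_{-1}^+$ on which the formula rests — namely that inside $(0,1)$ this function has only finitely many zeros, that each is simple (so that $C_\mu$ has only simple poles, consistent with $\mu_d$ being a sum of point masses), and that the resulting residue is positive, as a weight must be. The sign of $w_k$ equals that of $-\psi_0^+(z_k)/(\psi_{-1}^+)'(z_k)$, since $z_k\in(0,1)$ makes $(z_k^2-1)/z_k^2<0$ while $(1-\beta)/(1-\alpha)>0$; thus positivity reduces to a statement about the interlacing and the local sign behaviour of $\psi_{-1}^+$ and $\psi_0^+$ at the zeros of $\psi_{-1}^+$. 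These facts about the location, simplicity, and sign of the zeros are exactly what is deferred to Appendix~\ref{app:B}, and I would invoke them at this point to complete the argument.
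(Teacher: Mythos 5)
Your argument is correct and essentially identical to the paper's proof: both identify the atoms of $\mu_d$ with the poles of $C_\mu$ via Proposition~\ref{prop:Cauchy_transf_mu}, pull back through the Joukowsky map with Jacobian factor $(z^{2}-1)/z^{2}$, compute the weights as residues at the simple zeros $z_k\in(0,1)$ of $\psi_{-1}^{+}$, and defer the location/simplicity analysis of these zeros to Appendix~\ref{app:B} (Proposition~\ref{prop:zeros_basic_propert}). The only cosmetic differences are that the paper phrases the residue computation as a contour integral $w_k=\frac{1}{2\pi\ii}\oint C_\mu(u)\,\dd u$ before changing variables, while you take the limit $(\upsilon(z)-\upsilon(z_k))C_\mu(\upsilon(z))$ directly, and your worry about positivity of the weights is unnecessary since $\mu$ is a positive measure a priori.
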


\begin{proof}
 The discrete part of the measure $\mu$ is supported on the set of poles of its Cauchy transform $C_{\mu}$. By using Proposition~\ref{prop:Cauchy_transf_mu} and Proposition~\ref{prop:zeros_basic_propert},
 one observes that $\mu_{d}$ is supported on the set of points $\pm\upsilon(z_{k})$ where $z_{k}$ are the zeros of $\psi_{-1}^{+}$ located in $(0,1)$. If $z_{1}\geq1$, this set is empty and hence $\mu_{d}=0$.
 
 If $z_{1}<1$, $\mu_{d}$ is a linear combination of normalized Dirac measures supported on $\pm\upsilon(z_{k})$, $z_{k}\in(0,1)$, i.e.,
 \[
  \mu_{d}=\sum_{0<z_{k}<1}w_{k}\delta_{\upsilon(z_{k})}+\tilde{w}_{k}\delta_{-\upsilon(z_{k})}.
 \]
 Since the measure $\mu$ is symmetric with respect to the origin, $w_{k}=\tilde{w}_{k}$.
 Further, since $\pm\upsilon(z_{k})$ are isolated points of $\supp\mu$, the weight $w_{k}$ can be computed from $C_{\mu}$ as
 \[
  w_{k}=\frac{1}{2\pi\ii}\oint_{|u-\upsilon(z_{k})|=\epsilon}C_{\mu}(u)\dd u=\frac{1}{2\pi\ii}\oint_{|z-z_{k}|=\epsilon}C_{\mu}(\upsilon(z))\left(1-z^{-2}\right)\dd z,
 \]
 with $\epsilon>0$ sufficiently small. It follows from Proposition~\ref{prop:Cauchy_transf_mu} and Proposition~\ref{prop:zeros_basic_propert} that $z_{k}$ is a simple 
 pole of $C_{\mu}$ and hence, by the Residue Theorem, one has
 \[
   w_{k}=\Res_{z=z_{k}}\left(\frac{z^{2}-1}{z^{2}}C_{\mu}(\upsilon(z))\right)=\frac{1-\beta}{1-\alpha}\frac{z_{k}^{2}-1}{z_{k}^{2}}\frac{\psi_{0}^{+}(z_{k})}{\left(\psi_{-1}^{+}\right)'(z_{k})}. \qedhere
 \]
\end{proof}

Proposition~\ref{prop:d_meas} shows that the description of the discrete part of the measure $\mu$ heavily depends on the location of zeros of the function 
$\psi_{-1}^{+}$. Therefore a more detailed analysis of the zeros would be desirable. For example, if $N(\alpha,\beta;q)$ denotes the number of points in $\{z_{k}\}_{k=1}^{\infty}\cap(0,1)$,
it would be interesting to know more about properties of $N(\alpha,\beta;q)$ as function of $\alpha$, $\beta$, and $q$.
Nevertheless, we are able to prove only a simple sufficient condition guaranteeing that $N(\alpha,\beta;q)=0$ at this point.

\begin{prop}\label{prop:mu_d_vanish_suf_cond}
 Let $|\alpha|<1$ and $\beta<1$. If $\beta\leq\alpha$, then $\mu_{d}=0$.
\end{prop}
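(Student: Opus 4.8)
The plan is to bypass the delicate analysis of the zeros of $\psi_{-1}^{+}$ and instead bound the underlying Jacobi operator directly. Recall that $\mu_{d}$ is, by the decomposition~\eqref{eq:meas_decomp_ac_d}, supported in $\R\setminus[-2,2]$, whereas $\supp\mu=\spec J$, where $J$ is the bounded self-adjoint Jacobi operator attached to the orthonormal polynomials, i.e. the tridiagonal operator on $\ell^{2}(\N_{0})$ with vanishing diagonal and off-diagonal entries $c_{n}=\sqrt{a_{n}}=\sqrt{\tilde\gamma_{n}\tilde\gamma_{n+1}}$. Consequently $\mu_{d}=0$ follows as soon as $\spec J\subseteq[-2,2]$, and since $J=J^{*}$ this reduces to the operator-norm bound $\|J\|\le2$.

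First I would record the elementary consequence of the hypothesis $\beta\le\alpha$. From the definition~\eqref{eq:def_gamma_n_q_form} one computes
\[
 \tilde\gamma_{n}-1=\frac{(\beta-\alpha)q^{n}}{1-\beta q^{n}},\qquad n\in\N_{0}.
\]
Since $|\alpha|<1$ and $\beta<1$ force $1-\alpha q^{n}>0$ and $1-\beta q^{n}>0$, we have $\tilde\gamma_{n}>0$, while the displayed identity shows $\tilde\gamma_{n}\le1$ exactly because $\beta-\alpha\le0$. Hence $0<c_{n}=\sqrt{\tilde\gamma_{n}\tilde\gamma_{n+1}}\le1$ for every $n$ (alternatively $\sqrt{\tilde\gamma_{n}\tilde\gamma_{n+1}}\le(\tilde\gamma_{n}+\tilde\gamma_{n+1})/2\le1$).

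The norm bound is then the standard estimate for a tridiagonal form. Putting $c_{-1}:=0$, for any $\xi\in\ell^{2}(\N_{0})$ one has
\[
 |\langle J\xi,\xi\rangle|=\Big|2\sum_{n}c_{n}\Re\big(\xi_{n}\overline{\xi_{n+1}}\big)\Big|\le\sum_{n}c_{n}\big(|\xi_{n}|^{2}+|\xi_{n+1}|^{2}\big)=\sum_{n}(c_{n-1}+c_{n})|\xi_{n}|^{2}\le2\|\xi\|^{2},
\]
the last inequality using $c_{n}\le1$. As $J$ is self-adjoint, $\|J\|=\sup_{\|\xi\|=1}|\langle J\xi,\xi\rangle|\le2$, so $\spec J\subseteq[-2,2]$ and therefore $\mu_{d}=0$.

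I expect the only real choice here, rather than a genuine obstacle, to be the route. One could instead remain inside the framework of Proposition~\ref{prop:d_meas} and try to show that $\psi_{-1}^{+}$ has no zero in $(0,1)$: for $z\in(0,1)$ the prefactor $z^{-1}(qz^{2};q)_{\infty}$ is positive, reducing matters to the non-vanishing of $\pFq{2}{1}{z\tau,z\tau^{-1}}{qz^{2}}{\alpha}$. When $\alpha\ge0$ this is transparent, as every coefficient of the series is nonnegative and the $n=0$ term equals $1$, so the sum is $\ge1$; but for $\alpha<0$ the coefficients change sign and this elementary positivity fails. The operator-norm argument is preferable precisely because it treats all $|\alpha|<1$ uniformly, the only points needing (routine) care being the positivity of the factors entering $\tilde\gamma_{n}$.
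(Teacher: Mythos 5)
Your proof is correct and follows essentially the same route as the paper: the paper likewise observes that $\beta\leq\alpha$ forces $|\tilde\gamma_{n}|\leq1$, deduces the norm bound $\|J\|\leq2$ for the associated Jacobi operator, and concludes $\supp\mu=\spec(J)\subset[-2,2]$, so that $\mu_{d}$, which would have to live in $\R\setminus[-2,2]$, vanishes. The only differences are cosmetic: you spell out the quadratic-form estimate behind the norm bound (which the paper merely invokes) and you work with the correctly symmetrized off-diagonal entries $\sqrt{\tilde\gamma_{n}\tilde\gamma_{n+1}}$, where the paper's proof loosely writes $\tilde\gamma_{n}\tilde\gamma_{n+1}$.
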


\begin{proof}
 Note that, if $\beta\leq\alpha$, then $|\tilde\gamma_{n}| \leq 1$, $\forall n\in\N_{0}$, where $\tilde\gamma_{n}$ is defined by~\eqref{eq:def_gamma_n_q_form}.
 Thus, for the norm of the Jacobi operator $J$ whose diagonal vanishes and the off-diagonal is given by the sequence $\{\tilde\gamma_{n}\tilde\gamma_{n+1}\}_{n=0}^{\infty}$, one has
 \[
  \|J\|\leq2\sup_{n\in\N_{0}}|\tilde\gamma_{n}\tilde\gamma_{n+1}|\leq 2.
 \]
 Consequently, it holds that 
 \[
  \supp\mu=\spec(J)\subset[-2,2].
 \]
 Since the discrete part of $\mu$, if present, has to be supported in $\R\setminus[-2,2]$, $\mu_{d}$ vanishes.
 \end{proof}

As a conclusion, we use both formulas for the absolutely continuous and the discrete part of $\mu$ 
obtained in Propositions~\ref{prop:ac_meas} and~\ref{prop:d_meas} to state the orthogonality relation for the polynomials $p_{n}=p_{n}^{(\alpha,\beta)}$.

\begin{thm}
 Let $|\alpha|<1$, $\beta<1$. Then, for all $m,n\in\N_{0}$, one has the orthogonality relation
 \begin{align*}
  \frac{2}{\pi}&\int_{0}^{\pi}p_{m}(2\cos\theta)p_{n}(2\cos\theta)\sin^{2}(\theta)\left|\pFq{2}{1}{e^{\ii\theta}\tau,e^{\ii\theta}\tau^{-1}}{qe^{2\ii\theta}}{\alpha}\right|^{-2}\dd\theta\\
  &+\sum_{0<z_{k}<1}\frac{z_{k}^{2}-1}{z_{k}^{2}}\frac{\psi_{0}^{+}(z_{k})}{\left(\psi_{-1}^{+}\right)'(z_{k})}\left[p_{m}(\upsilon(z_{k}))p_{n}(\upsilon(z_{k}))+p_{m}(-\upsilon(z_{k}))p_{n}(-\upsilon(z_{k}))\right]\\
  &\hskip260pt=\frac{(\alpha;q)_{n}(\alpha;q)_{n+1}}{(\beta;q)_{n}(\beta;q)_{n+1}}\delta_{m,n},
 \end{align*}
 where $z_{k}$, $k\in\N$, stands for the $k$th zero of $\psi_{-1}^{+}$ in the interval $(0,1)$.
\end{thm}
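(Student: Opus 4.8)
The plan is to assemble the stated relation directly from the general orthogonality identity~\eqref{eq:og_rel_general}, feeding in the explicit absolutely continuous density from Proposition~\ref{prop:ac_meas} and the discrete part from Proposition~\ref{prop:d_meas}, and then to evaluate the normalization constant $\prod_{j=0}^{n-1}a_{j}$ in closed form. Recall that for the monic polynomials $p_{n}$ the right-hand side of~\eqref{eq:og_rel_general} equals $\big(\prod_{j=0}^{n-1}a_{j}\big)\delta_{m,n}$ with $a_{j}=\tilde\gamma_{j}\tilde\gamma_{j+1}$, so the whole argument reduces to two tasks: rewriting the two measure integrals in the form displayed in the theorem, and simplifying the product of Jacobi parameters to the announced ratio of $q$-Pochhammer symbols.

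For the absolutely continuous part I would carry out the substitution $x=2\cos\theta$, $\theta\in(0,\pi)$, in $\int_{-2}^{2}p_{m}(x)p_{n}(x)\rho(x)\,\dd x$. Since $\dd x=-2\sin\theta\,\dd\theta$ and the orientation is reversed, this integral becomes $-\int_{0}^{\pi}p_{m}(2\cos\theta)p_{n}(2\cos\theta)\frac{\dd\mu_{ac}}{\dd\theta}(2\cos\theta)\,\dd\theta$, and inserting the second expression for the density from Proposition~\ref{prop:ac_meas} turns it into $\frac{2(1-\beta)}{\pi(1-\alpha)}\int_{0}^{\pi}p_{m}p_{n}\sin^{2}\theta\,|{}_2\phi_1|^{-2}\,\dd\theta$. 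Thus the first integral in the theorem is exactly $\frac{1-\alpha}{1-\beta}$ times $\int p_{m}p_{n}\,\dd\mu_{ac}$. For the discrete part I would read off the weights $w_{k}=\frac{1-\beta}{1-\alpha}\frac{z_{k}^{2}-1}{z_{k}^{2}}\frac{\psi_{0}^{+}(z_{k})}{(\psi_{-1}^{+})'(z_{k})}$ from Proposition~\ref{prop:d_meas}, together with the symmetry $w_{k}=\tilde w_{k}$ already established there; the same prefactor $\frac{1-\beta}{1-\alpha}$ is pulled out, so the second sum in the theorem is likewise $\frac{1-\alpha}{1-\beta}$ times $\int p_{m}p_{n}\,\dd\mu_{d}$.

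It remains to evaluate the norm. Writing $\prod_{j=0}^{n-1}a_{j}=\big(\prod_{j=0}^{n-1}\tilde\gamma_{j}\big)\big(\prod_{j=1}^{n}\tilde\gamma_{j}\big)$ and using $\tilde\gamma_{j}=\frac{1-\alpha q^{j}}{1-\beta q^{j}}$, the two telescoping products collapse to $q$-Pochhammer symbols: the first equals $\frac{(\alpha;q)_{n}}{(\beta;q)_{n}}$, whereas the second equals $\frac{1-\beta}{1-\alpha}\frac{(\alpha;q)_{n+1}}{(\beta;q)_{n+1}}$ once the missing $j=0$ factor $(1-\alpha)/(1-\beta)$ is restored. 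Hence $\prod_{j=0}^{n-1}a_{j}=\frac{1-\beta}{1-\alpha}\frac{(\alpha;q)_{n}(\alpha;q)_{n+1}}{(\beta;q)_{n}(\beta;q)_{n+1}}$.

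Putting the pieces together, the left-hand side of the theorem equals $\frac{1-\alpha}{1-\beta}\big(\int p_{m}p_{n}\,\dd\mu_{ac}+\int p_{m}p_{n}\,\dd\mu_{d}\big)=\frac{1-\alpha}{1-\beta}\prod_{j=0}^{n-1}a_{j}\,\delta_{m,n}$, and the prefactor $\frac{1-\alpha}{1-\beta}$ cancels the $\frac{1-\beta}{1-\alpha}$ sitting inside the norm, leaving precisely $\frac{(\alpha;q)_{n}(\alpha;q)_{n+1}}{(\beta;q)_{n}(\beta;q)_{n+1}}\delta_{m,n}$. The computation is essentially bookkeeping; the only points demanding care are the sign arising from the orientation reversal in the $\theta$-substitution (so that the density, which is negative as written in Proposition~\ref{prop:ac_meas}, yields a positive integrand) and the correct matching of the boundary factors $(1-\alpha)$, $(1-\beta)$ when recognizing the telescoping products as shifted $q$-Pochhammer symbols.
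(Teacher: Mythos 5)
Your proposal is correct and follows exactly the route the paper intends: the theorem is stated there as an immediate consequence of the general orthogonality relation~\eqref{eq:og_rel_general} combined with Propositions~\ref{prop:ac_meas} and~\ref{prop:d_meas}, and your handling of the sign from the $x=2\cos\theta$ substitution, the common prefactor $\tfrac{1-\beta}{1-\alpha}$, and the telescoping evaluation $\prod_{j=0}^{n-1}a_{j}=\tfrac{1-\beta}{1-\alpha}\tfrac{(\alpha;q)_{n}(\alpha;q)_{n+1}}{(\beta;q)_{n}(\beta;q)_{n+1}}$ are all accurate. Nothing is missing.
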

%
%

\begin{rem}
 Note that the variable $\tau$ present in the integral above is related to~$\theta$ via the equation $\alpha(\tau+\tau^{-1})=2\beta\cos\theta$.
 While the $\tau$ hidden in the definition of $\psi^{+}_{n}$, $n\in\{-1,0\}$, that appear in the sum above, depends on $k$
 according to the equation $\alpha(\tau+\tau^{-1})=\beta(z_{k}+z_{k}^{-1})$. Recall also that the discrete part of the measure $\mu$
 can vanish and hence the sum can be empty.
\end{rem}

\subsection{The generating function} 

Next, we derive a generating function formula for $p_{n}^{(\alpha,\beta)}$ 
provided that the range of the parameter $\alpha$ is restricted to $\alpha\in(-q,q]$.

\begin{thm}
 For $|t|<|z|<1$, one has:
 \begin{enumerate}
  \item if $|\alpha|<q$, then
  \begin{equation}
   \sum_{n=0}^{\infty}\frac{(\beta;q)_{n}}{(\alpha;q)_{n}}p_{n}^{(\alpha,\beta)}(\upsilon(z);q)t^{n}
   =\frac{1-q^{-1}\alpha}{(1-zt)(1-z^{-1}t)}\,\pFq{3}{2}{q,q\tau t,q\tau^{-1}t}{qzt,qz^{-1}t}{q^{-1}\alpha};
   \label{eq:gener_func_alp<q}
  \end{equation}
  \item for $\alpha=q$, it holds
  \begin{equation}
   \sum_{n=0}^{\infty}\frac{(\beta;q)_{n}}{(q;q)_{n}}p_{n}^{(q,\beta)}(\upsilon(z);q)t^{n}=\frac{(q\tau t,q\tau^{-1}t;q)_{\infty}}{(zt,z^{-1}t;q)_{\infty}}.
   \label{eq:gener_func_alp=q}
  \end{equation}
 \end{enumerate}
 The variable $\tau$ is determined by the equation~\eqref{eq:z_to_tau_rel}.
\end{thm}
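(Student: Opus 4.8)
The plan is to turn the generating function into a $q$-difference equation and then solve that equation by inspection. First I would introduce the renormalized coefficients $f_{n}:=\frac{(\beta;q)_{n}}{(\alpha;q)_{n}}p_{n}^{(\alpha,\beta)}(\upsilon(z);q)$, so that the sought sum is $G(t):=\sum_{n=0}^{\infty}f_{n}t^{n}$. Substituting $p_{n}=\frac{(\alpha;q)_{n}}{(\beta;q)_{n}}f_{n}$ into~\eqref{eq:ogp_monic_recur} and using $\tilde\gamma_{n}=(1-\alpha q^{n})/(1-\beta q^{n})$, the product $\tilde\gamma_{n-1}\tilde\gamma_{n}$ cancels against the ratio of the renormalizing coefficients and one is left with the far simpler relation
\begin{equation*}
(1-\alpha q^{n})(f_{n+1}+f_{n-1})=\upsilon(z)(1-\beta q^{n})f_{n}, \quad n\in\N_{0},
\end{equation*}
with $f_{-1}=0$, $f_{0}=1$. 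The point of this renormalization is that the $n$-dependence now enters only through the geometric factor $q^{n}$, which is exactly what a $q$-difference equation for $G$ can absorb.

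Next I would multiply by $t^{n}$ and sum over $n\geq0$. Each factor $q^{n}$ converts $\sum_{n}f_{n}t^{n}$ into $G(qt)$, while the index shifts produce $G(t)$, $G(qt)$, and the boundary coefficients $f_{0},f_{-1}$. Collecting terms and invoking the relation $\alpha(\tau+\tau^{-1})=\beta\upsilon(z)$ of~\eqref{eq:z_to_tau_rel} to rewrite the coefficient of $G(qt)$, the two quadratics $1-\upsilon(z)t+t^{2}=(1-zt)(1-z^{-1}t)$ and $1-q(\tau+\tau^{-1})t+q^{2}t^{2}=(1-q\tau t)(1-q\tau^{-1}t)$ emerge and I arrive at the $q$-difference equation
\begin{equation*}
(1-zt)(1-z^{-1}t)\,G(t)-q^{-1}\alpha(1-q\tau t)(1-q\tau^{-1}t)\,G(qt)=1-q^{-1}\alpha.
\end{equation*}
The asymptotics~\eqref{eq:p_n_asympt_inside_unit disk} give $f_{n}=O(|z|^{-n})$, so $G$ is analytic for $|t|<|z|$; comparing coefficients of $t^{n}$ then shows that, for $|\alpha|<q$, this equation alone pins down every Taylor coefficient of $G$, since the $t^{0}$-coefficient forces $G(0)=1$ and for $n\geq1$ the coefficient $1-q^{n-1}\alpha$ of $g_{n}$ is nonzero.

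For part~(1) it remains to verify that the right-hand side of~\eqref{eq:gener_func_alp<q} solves this equation. Writing it as $\frac{1-q^{-1}\alpha}{(1-zt)(1-z^{-1}t)}H(t)$, where $H(t)=\sum_{m\geq0}h_{m}(t)(q^{-1}\alpha)^{m}$ and $h_{m}(t)=\frac{(q\tau t;q)_{m}(q\tau^{-1}t;q)_{m}}{(qzt;q)_{m}(qz^{-1}t;q)_{m}}$, the equation collapses to
\begin{equation*}
H(t)-q^{-1}\alpha\,\frac{(1-q\tau t)(1-q\tau^{-1}t)}{(1-qzt)(1-qz^{-1}t)}\,H(qt)=1.
\end{equation*}
This follows term-by-term from the one-line $q$-shift identity $\frac{(1-q\tau t)(1-q\tau^{-1}t)}{(1-qzt)(1-qz^{-1}t)}h_{m}(qt)=h_{m+1}(t)$, in which each factor such as $1-q\tau t$ absorbs the shifted symbol $(q^{2}\tau t;q)_{m}$ into $(q\tau t;q)_{m+1}$; the series then telescopes to $H(t)-h_{0}(t)=H(t)-1$. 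Since $H(0)=\sum_{m}(q^{-1}\alpha)^{m}=(1-q^{-1}\alpha)^{-1}$ gives $G(0)=1$, uniqueness closes part~(1).

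For part~(2), $\alpha=q$, the constant $1-q^{-1}\alpha$ vanishes and the equation becomes homogeneous, $(1-zt)(1-z^{-1}t)G(t)=(1-q\tau t)(1-q\tau^{-1}t)G(qt)$; I would check directly that $(q\tau t,q\tau^{-1}t;q)_{\infty}/(zt,z^{-1}t;q)_{\infty}$ satisfies it by applying $(wt;q)_{\infty}=(1-wt)(qwt;q)_{\infty}$ to each of the four factors, and that it equals $1$ at $t=0$; uniqueness (a $q$-invariant analytic ratio of two solutions is constant because $q^{j}t\to0$) then yields~\eqref{eq:gener_func_alp=q}. Equivalently, one may let $\alpha\to q^{-}$ in part~(1): since $h_{m}(t)\to(q\tau t,q\tau^{-1}t;q)_{\infty}/(qzt,qz^{-1}t;q)_{\infty}$ as $m\to\infty$, Abel's theorem gives $(1-q^{-1}\alpha)H(t)\to(q\tau t,q\tau^{-1}t;q)_{\infty}/(qzt,qz^{-1}t;q)_{\infty}$, and absorbing the remaining two factors via $(1-zt)(qzt;q)_{\infty}=(zt;q)_{\infty}$ reproduces~\eqref{eq:gener_func_alp=q}. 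The only genuine obstacle is organizational: keeping the index shifts straight in the passage to the $q$-difference equation and applying the $\tau$-relation at just the right step so that both quadratics factor as displayed; once the equation is in hand, everything reduces to the short telescoping identity above.
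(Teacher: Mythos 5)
Your proof is correct and follows essentially the same route as the paper: both hinge on the first-order $q$-difference equation $(1-zt)(1-z^{-1}t)G(t)=1-q^{-1}\alpha+q^{-1}\alpha(1-q\tau t)(1-q\tau^{-1}t)G(qt)$, with the relation $\alpha\upsilon(\tau)=\beta\upsilon(z)$ producing the two factored quadratics, and uniqueness settled by comparing Taylor coefficients (the paper runs the argument in the reverse direction, starting from the candidate $U$ and recovering the three-term recurrence for its coefficients). Your only additions are the explicit telescoping identity $\frac{(1-q\tau t)(1-q\tau^{-1}t)}{(1-qzt)(1-qz^{-1}t)}h_{m}(qt)=h_{m+1}(t)$ verifying that the ${}_{3}\phi_{2}$ satisfies the equation — a step the paper dismisses as ``easy to verify'' — and the growth bound $f_{n}=O(|z|^{-n})$ justifying analyticity of $G$, both of which check out.
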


\begin{rem}
 Recall that in the analysis of properties of polynomials $p_{n}^{(\alpha,\beta)}$ we usually assume $|\alpha|<1$. 
 A (convergent) generating function formula for $p_{n}^{(\alpha,\beta)}$ for $\alpha\notin(-q,q]$ remains unknown at the
 moment.
\end{rem}

\begin{proof}
 1) Let $0<|z|<1$ be fixed. Let $U(t)$ denote the right-hand side of~\eqref{eq:gener_func_alp<q}. The function $U$ is analytic in the disk $|t|<|z|$.
 It is easy to verify that $U$ satisfies the first order $q$-difference equation
 \begin{equation}
  (1-zt)(1-z^{-1}t)U(t)=1-q^{-1}\alpha+q^{-1}\alpha(1-q\tau t)(1-q\tau^{-1}t)U(qt).
 \label{eq:gener_q_dif}
 \end{equation}
 If we write the Taylor expansion of $U$ in the form
 \[
  U(t)=\sum_{n=0}^{\infty}\frac{(\beta;q)_{n}}{(\alpha;q)_{n}}c_{n}t^{n}
 \]
 and plug it into~\eqref{eq:gener_q_dif}, then one obtains the recurrences
 \[
  c_{1}=\upsilon(z)c_{0} \quad \mbox{ and } \quad c_{n+1}=\upsilon(z)c_{n}-\frac{(1-\alpha q^{n})(1-\alpha q^{n-1})}{(1-\beta q^{n})(1-\beta q^{n-1})}c_{n-1}, \quad \mbox{ for } n\in\N.
 \]
 Noticing also that $c_{0}=1$, one concludes that $c_{n}=p_{n}^{(\alpha,\beta)}(\upsilon(z);q)$ for all $n\in\N_{0}$.
 
 2) To prove \eqref{eq:gener_func_alp<q}, one proceeds similarly. If, this time, $U(t)$ stands for the right-hand side of~\eqref{eq:gener_func_alp=q}, then $U$ is analytic on $|t|<|z|$ and still
 satisfies~\eqref{eq:gener_q_dif} with $\alpha=q$. The rest is analogous.
\end{proof}

\subsection{Special case \texorpdfstring{$\alpha=q$}{alpha = q}}
\label{subsec:specialCase}

The obtained formulas get a very explicit form in the particular case when $\alpha=q$. Here we summarize selected 
properties of the polynomials $p_{n}^{(q,\beta)}$. The generating function formula has already been formulated 
in~\eqref{eq:gener_func_alp=q}. As a consequence, one can deduce an explicit representation of $p_{n}^{(q,\beta)}$.

\begin{prop}
  \label{prop:explicit_repre}
  We have the explicit representation
  \[
    p_{n}^{(q,\beta)}(\upsilon(z))=\frac{(q;q)_{n}}{(\beta;q)_{n}}\sum_{k=0}^{n}\frac{(q\tau z^{-1};q)_{k}(q\tau^{-1}z;q)_{n-k}}{(q;q)_{k}(q;q)_{n-k}}z^{2k-n},
  \]
  and the $q$-hypergeometric representations
  \begin{equation}
    \label{eq:q-hyp_repre_alp=q}
    p_{n}^{(q,\beta)}(\upsilon(z))=\frac{(q\tau^{-1}z;q)_{n}}{z^{n}(\beta;q)_{n}}\pFq{2}{1}{q^{-n},q\tau z^{-1}}{q^{-n}\tau z^{-1}}{\tau z}
    =\frac{(q^{2};q)_{n}}{q^{n}\tau^{n}(\beta;q)_{n}}\pFq{3}{2}{q^{-n},q\tau z^{-1},q\tau z}{q^{2},0}{q},
  \end{equation}
  where $q\upsilon(\tau)=\beta\upsilon(z)$ and $\upsilon(z)=z+z^{-1}$.
\end{prop}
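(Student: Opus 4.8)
The plan is to derive all three identities from the generating function~\eqref{eq:gener_func_alp=q} proved in the previous theorem, by extracting the Taylor coefficient of $t^{n}$ and then rewriting it twice. The explicit sum comes straight from coefficient extraction, and the two $q$-hypergeometric representations are purely algebraic rearrangements of that sum (identities in $z$ and $\tau$ regarded as independent variables, so that the relation~\eqref{eq:z_to_tau_rel} enters only through the generating function itself, where $\tau$ is already tied to $z$).

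First I would obtain the explicit representation. Factoring the right-hand side of~\eqref{eq:gener_func_alp=q} as
\[
  \frac{(q\tau t,q\tau^{-1}t;q)_{\infty}}{(zt,z^{-1}t;q)_{\infty}}
  =\frac{(q\tau t;q)_{\infty}}{(zt;q)_{\infty}}\cdot\frac{(q\tau^{-1}t;q)_{\infty}}{(z^{-1}t;q)_{\infty}},
\]
I apply the $q$-binomial theorem $\sum_{k\geq0}\frac{(a;q)_{k}}{(q;q)_{k}}s^{k}=(as;q)_{\infty}/(s;q)_{\infty}$ to each factor, with $(s,as)=(zt,q\tau t)$ in the first (so $a=q\tau z^{-1}$) and $(s,as)=(z^{-1}t,q\tau^{-1}t)$ in the second (so $a=q\tau^{-1}z$). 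Forming the Cauchy product and collecting the coefficient of $t^{n}$ gives $\sum_{k=0}^{n}\frac{(q\tau z^{-1};q)_{k}(q\tau^{-1}z;q)_{n-k}}{(q;q)_{k}(q;q)_{n-k}}z^{2k-n}$; equating this with the coefficient $\frac{(\beta;q)_{n}}{(q;q)_{n}}p_{n}^{(q,\beta)}(\upsilon(z))$ read off from the left-hand side of~\eqref{eq:gener_func_alp=q} yields the explicit formula.

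Next I would convert the convolution into the terminating ${}_{2}\phi_{1}$. Using the standard reversal identity
\[
  (a;q)_{n-k}=\frac{(a;q)_{n}}{(a^{-1}q^{1-n};q)_{k}}\left(-\frac{q}{a}\right)^{\!k}q^{\binom{k}{2}-nk}
\]
(see Appendix~\ref{app:A}) on $(q\tau^{-1}z;q)_{n-k}$, with $a=q\tau^{-1}z$, produces the lower parameter $q^{-n}\tau z^{-1}$ and a factor $(-\tau z^{-1})^{k}$; applying it to $(q;q)_{n-k}$, with $a=q$, produces $(q^{-n};q)_{k}$ in the numerator and a factor $(-1)^{k}$. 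All powers $q^{\binom{k}{2}-nk}$ cancel, the two sign factors combine to $(\tau z^{-1})^{k}$, which merges with $z^{2k}$ into $(\tau z)^{k}$, and the $k$-independent part $(q\tau^{-1}z;q)_{n}/(q;q)_{n}$ factors out. After absorbing the leftover $\frac{(q;q)_{n}}{(\beta;q)_{n}}z^{-n}$ this is exactly the ${}_{2}\phi_{1}$ with upper parameters $q^{-n},q\tau z^{-1}$, lower parameter $q^{-n}\tau z^{-1}$, argument $\tau z$, and prefactor $(q\tau^{-1}z;q)_{n}/(z^{n}(\beta;q)_{n})$.

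Finally, the ${}_{3}\phi_{2}$ must be extracted from this ${}_{2}\phi_{1}$ by a transformation of terminating series that symmetrises the two non-$q^{-n}$ upper parameters into the pair $q\tau z^{\pm1}$ (reflecting the invariance of $p_{n}$ under $z\mapsto z^{-1}$), converts the argument to $q$, and generates the lower parameters $q^{2},0$. Pinning down and verifying this precise transformation, including the bookkeeping forced by the vanishing lower parameter and the argument $q$, is the step I expect to be the main obstacle, since the usual Heine/reversal moves return another ${}_{2}\phi_{1}$ rather than a ${}_{3}\phi_{2}$ (a $q$-quadratic–type transformation adapted to the $z\leftrightarrow z^{-1}$ symmetry seems to be what is needed). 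A clean fallback that sidesteps transformation hunting is to prove directly that the ${}_{3}\phi_{2}$ expression coincides with the explicit sum: both are terminating, so after clearing denominators this becomes a finite polynomial identity in $z$ and $\tau$, which can be confirmed by checking it is symmetric in $z\leftrightarrow z^{-1}$, reproduces $p_{0}=1$ and $p_{1}=\upsilon(z)$, and satisfies the three-term recurrence~\eqref{eq:ogp_monic_recur} with $a_{n}=\tilde\gamma_{n-1}\tilde\gamma_{n}$ at $\alpha=q$ via the contiguous relations of the ${}_{3}\phi_{2}$.
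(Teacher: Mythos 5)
Your derivations of the explicit sum and of the first ${}_{2}\phi_{1}$ representation coincide with the paper's: it likewise applies the $q$-binomial theorem~\eqref{eq:q-binom} to the factored right-hand side of~\eqref{eq:gener_func_alp=q}, extracts the coefficient of $t^{n}$ from the Cauchy product, and then rewrites the convolution via
\[
  \frac{(q\tau^{-1}z;q)_{n-k}}{(q;q)_{n-k}}=\frac{(q\tau^{-1}z;q)_{n}}{(q;q)_{n}}\frac{(q^{-n};q)_{k}}{(q^{-n}\tau z^{-1};q)_{k}}\,\tau^{k}z^{-k},
\]
which is exactly the reversal computation you describe. So the first two identities are fine.

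For the ${}_{3}\phi_{2}$, however, your proposal has a genuine gap, and your guess about its nature is off: no $q$-quadratic-type transformation is involved. The missing ingredient is Jackson's transformation of a terminating $q$-Gauss series, which is stated in the paper's own Appendix~\ref{app:A} as~\eqref{eq:jack_term_qGaus_transf} and is precisely what the paper invokes. Apply it to the ${}_{2}\phi_{1}$ in~\eqref{eq:q-hyp_repre_alp=q} with $b=q\tau z^{-1}$, $c=q^{-n}\tau z^{-1}$, and argument $\tau z$: then $q^{1-n}bc^{-1}=q^{2}$, $q^{-n}bc^{-1}\cdot\tau z=q\tau z$, and $cb^{-1}=q^{-n-1}$, so one lands directly on $\pFq{3}{2}{q^{-n},q\tau z^{-1},q\tau z}{q^{2},0}{q}$ with prefactor $(q^{-n-1};q)_{n}/(q^{-n}\tau z^{-1};q)_{n}$. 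The elementary identity $(a;q)_{n}=(q^{1-n}a^{-1};q)_{n}(-a)^{n}q^{\binom{n}{2}}$, applied with $a=q^{-n-1}$ and $a=q^{-n}\tau z^{-1}$, then collapses this prefactor combined with $(q\tau^{-1}z;q)_{n}z^{-n}$ to $(q^{2};q)_{n}/(q^{n}\tau^{n})$, giving the stated form. Your fallback (checking the terminating polynomial identity through the initial conditions and the three-term recurrence via contiguous relations) could in principle close the gap, but as written it is only a plan: the contiguous-relation verification is not carried out, and it would also require re-expressing $\beta$ as $q\upsilon(\tau)/\upsilon(z)$ once $z$ and $\tau$ are treated as independent. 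As it stands, the second $q$-hypergeometric representation remains unproven in your write-up, even though the exact lemma you were hunting for is already listed in the appendix you cite for the reversal identity.
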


\begin{proof}
 By making use of the $q$-binomial formula~\eqref{eq:q-binom} on the right-hand side of~\eqref{eq:gener_func_alp=q},
 obtains
 \[
  \sum_{n=0}^{\infty}\frac{(\beta;q)_{n}}{(q;q)_{n}}p_{n}^{(\alpha,\beta)}(\upsilon(z);q)t^{n}=\pFq{1}{0}{q\tau z^{-1}}{-}{zt}\pFq{1}{0}{q\tau^{-1}z}{-}{z^{-1}t}
 \]
 for $|t|<|z|<1$. Multiplying the two $q$-hypergeometric series results in the expression
 \[
  \pFq{1}{0}{q\tau z^{-1}}{-}{zt}\pFq{1}{0}{q\tau^{-1}z}{-}{z^{-1}t}
  =\sum_{n=0}^{\infty}t^{n}\sum_{k=0}^{n}\frac{(q\tau z^{-1};q)_{k}(q\tau^{-1}z;q)_{n-k}}{(q;q)_{k}(q;q)_{n-k}}z^{2k-n}
 \]
 which yields the first formula from the statement. 
 
 To deduce the first $q$-hypergeometric representation of $p_{n}$, it suffices to note that
 \[
  \frac{(q\tau^{-1}z;q)_{n-k}}{(q;q)_{n-k}}=\frac{(q\tau^{-1}z;q)_{n}}{(q;q)_{n}}
  \frac{(q^{-n};q)_{k}}{(q^{-n}\tau z^{-1};q)_{k}}\tau^{k}z^{-k},
 \]
 for $k\in\{0,1,\dots,n\}$. The second $q$-hypergeometric representation follows from an application of the
 transformation formula~\eqref{eq:jack_term_qGaus_transf}.
\end{proof}

Except the obvious value of $p_{n}^{(\alpha,\beta)}$ at the origin, the first $q$-hypergeometric representation in~\eqref{eq:q-hyp_repre_alp=q} 
together with the $q$-Chu--Vandermonde identity~\eqref{eq:q-chu-vand} allows us to obtain certain special values 
of $p_{n}^{(q,\beta)}$.

\begin{cor}
 If $\beta(1+z^{2})=q^{2}+z^{2}$, then
 \[
  p_{n}^{(q,\beta)}(\upsilon(z);q)=\frac{(q^{2};q)_{n}}{z^{n}\left((q^{2}+z^{2})/(1+z^{2});q\right)_{n}}.
 \]
\end{cor}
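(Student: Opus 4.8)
The plan is to specialize the first $q$-hypergeometric representation of $p_n^{(q,\beta)}$ in~\eqref{eq:q-hyp_repre_alp=q} and then collapse the resulting terminating ${}_2\phi_1$ by means of the $q$-Chu--Vandermonde identity~\eqref{eq:q-chu-vand}. The constraint $\beta(1+z^2)=q^2+z^2$ is precisely what forces the argument of that ${}_2\phi_1$ to equal $q$, which is the value required for Chu--Vandermonde to apply.

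First I would pin down $\tau$ from the defining relation~\eqref{eq:z_to_tau_rel} with $\alpha=q$, i.e.\ $q\,\upsilon(\tau)=\beta\,\upsilon(z)$. Writing $\beta=(q^2+z^2)/(1+z^2)$ and simplifying yields $\tau+\tau^{-1}=q/z+z/q=\upsilon(q/z)$, so one may take the branch $\tau=q/z$ (the choice of branch is immaterial, since the representation is inherited from the symmetric generating function~\eqref{eq:gener_func_alp=q} and hence invariant under $\tau\mapsto\tau^{-1}$). With this value $\tau z=q$, $q\tau z^{-1}=q^2z^{-2}$, $q^{-n}\tau z^{-1}=q^{1-n}z^{-2}$, and the prefactor becomes $(q\tau^{-1}z;q)_n=(z^2;q)_n$, so~\eqref{eq:q-hyp_repre_alp=q} reduces to
\[
 p_n^{(q,\beta)}(\upsilon(z);q)=\frac{(z^2;q)_n}{z^n(\beta;q)_n}\,\pFq{2}{1}{q^{-n},q^2z^{-2}}{q^{1-n}z^{-2}}{q}.
\]
Next I would invoke the $q$-Chu--Vandermonde identity~\eqref{eq:q-chu-vand} in the form ${}_2\phi_1(q^{-n},b;c;q,q)=b^n(c/b;q)_n/(c;q)_n$ with $b=q^2z^{-2}$ and $c=q^{1-n}z^{-2}$, where $c/b=q^{-n-1}$, so the sum collapses to $(q^2z^{-2})^n(q^{-n-1};q)_n/(q^{1-n}z^{-2};q)_n$.

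What remains is routine bookkeeping of the $q$-shifted factorials with shifted/negative arguments, namely reversing $(q^{-n-1};q)_n=(-1)^nq^{-n(n+3)/2}(q^2;q)_n$ and $(q^{1-n}z^{-2};q)_n=(-1)^nz^{-2n}q^{-n(n-1)/2}(z^2;q)_n$; after substituting these, all powers of $z$ and $q$ cancel and the $(z^2;q)_n$ factors telescope against the prefactor, leaving exactly $(q^2;q)_n$. Replacing $\beta$ by $(q^2+z^2)/(1+z^2)$ then gives the asserted value. I expect the only genuinely delicate point to be the first step --- verifying that the hypothesis $\beta(1+z^2)=q^2+z^2$ is equivalent to $\tau z=q$ and therefore turns the general ${}_2\phi_1$ into an evaluable balanced/Chu--Vandermonde sum; everything afterwards is elementary manipulation of $q$-Pochhammer symbols, where the only care needed is the sign and $q$-power bookkeeping in the two reversals above.
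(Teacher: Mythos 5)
Your proposal is correct and follows exactly the route the paper intends: the corollary is stated there as a direct consequence of the first $q$-hypergeometric representation in~\eqref{eq:q-hyp_repre_alp=q} combined with the $q$-Chu--Vandermonde identity~\eqref{eq:q-chu-vand}, and your observation that the hypothesis $\beta(1+z^{2})=q^{2}+z^{2}$ is equivalent to $\tau z=q$ (up to the irrelevant branch choice $\tau\mapsto\tau^{-1}$) is precisely the mechanism that makes the argument of the ${}_{2}\phi_{1}$ equal to $q$. Your Pochhammer reversals $(q^{-n-1};q)_{n}=(-1)^{n}q^{-n(n+3)/2}(q^{2};q)_{n}$ and $(q^{1-n}z^{-2};q)_{n}=(-1)^{n}z^{-2n}q^{-n(n-1)/2}(z^{2};q)_{n}$ check out, with all powers of $q$ and $z$ cancelling as claimed.
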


The second $q$-hypergeometric representation can be used to relate $p_{n}^{(q,\beta)}$ with
a particular case of the continuous dual $q$-Hahn polynomials. Recall that the continuous dual $q$-Hahn polynomials
$p_{n}(x;a,b,c \mid q)$ are defined by~\cite[Eq.~(14.3.1)]{koekoek10}
\begin{equation}
 p_{n}(x;a,b,c \mid q)=a^{-n}(ab,ac;q)_{n}\,\pFq{3}{2}{q^{-n},ae^{\ii\theta},ae^{-\ii\theta}}{ab,ac}{q},
 \label{eq:def_cont_dual_qhahn}
\end{equation}
where $x=\cos\theta$. By comparing the formulas~\eqref{eq:q-hyp_repre_alp=q} and~\eqref{eq:def_cont_dual_qhahn},
one gets the relation
\[
 p_{n}^{(q,\beta)}(z;q)=\frac{1}{(\beta;q)_{n}}p_{n}\left(z/2;q\tau,q\tau^{-1},0\mid q\right)\!,
\]
where $q\upsilon(\tau)=\beta\upsilon(z)$ and $\upsilon(z)=z+z^{-1}$.

Next, let us investigate the measure of orthogonality of $p_{n}^{(q,\beta)}$. It turns out that the
absolutely continuous part of $\mu$ is expressible in terms of the $q$-Pochhammer symbols. On the other
hand, it seems that there is no significantly simpler expression for $\mu_{d}$ than the one from 
Proposition~\ref{prop:d_meas}. Nevertheless, the support of $\mu_{d}$, if non-empty, can be expressed fully
explicitly.

\begin{prop}
  \label{prop.d_meas_simplified}
 Let $\alpha=q$ and $\beta<1$. Then, for $\theta\in(0,\pi)$, one has
 \[
  \frac{\dd\mu_{ac}}{\dd\theta}(2\cos\theta) = -\frac{2(1-\beta)}{\pi(1-q)}\sin^{2}(\theta)\left|\frac{\left(q,qe^{2\ii\theta};q\right)_{\infty}}{\left(qe^{\ii\theta}\tau,qe^{\ii\theta}\tau^{-1};q\right)_{\infty}}\right|^{2},
 \]
 where $q(\tau+\tau^{-1})=2\beta\cos\theta$. Further, it holds
 \[
 \supp\mu_{d}=\left\{\pm\frac{q^{1-k}-q^{k+1}}{\sqrt{(\beta-q^{k+1})(q^{1-k}-\beta)}} \;\bigg|\; 
 k\in\N \mbox{ such that } \beta>\frac{q}{2}\left(q^{-k}+q^{k}\right) \right\}.
 \]
 Consequently, $\mu_{d}=0$ if and only if $\beta\leq(1+q^{2})/2$.
\end{prop}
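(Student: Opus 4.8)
The plan is to specialize the general formulas of Propositions~\ref{prop:ac_meas} and~\ref{prop:d_meas} to $\alpha=q$, the point where the argument of every ${}_2\phi_1$ that appears coincides exactly with the value at which the $q$-Gauss summation formula (see Appendix~\ref{app:A}) applies. Indeed, for the upper parameters $a=e^{\ii\theta}\tau$, $b=e^{\ii\theta}\tau^{-1}$ and lower parameter $c=qe^{2\ii\theta}$ one has $ab=e^{2\ii\theta}$, so $c/(ab)=q=\alpha$; the $q$-Gauss sum then evaluates the series in Proposition~\ref{prop:ac_meas} in closed form as $(qe^{\ii\theta}\tau^{-1},qe^{\ii\theta}\tau;q)_{\infty}/(qe^{2\ii\theta},q;q)_{\infty}$. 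Substituting this into the density formula of Proposition~\ref{prop:ac_meas}, taking the squared modulus of the reciprocal and using that $(q;q)_{\infty}$ is real, yields at once the stated expression for $\dd\mu_{ac}/\dd\theta$, since $(qe^{2\ii\theta},q;q)_{\infty}=(q,qe^{2\ii\theta};q)_{\infty}$.

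For the discrete part I would first obtain a closed form for $\psi_{-1}^{+}$. Setting $n=-1$ and $\alpha=q$ in~\eqref{eq:def_psi_pm} gives $\psi_{-1}^{+}(z)=z^{-1}(qz^{2};q)_{\infty}\,{}_2\phi_1(z\tau,z\tau^{-1};qz^{2};q,q)$, and since again $c/(ab)=qz^{2}/z^{2}=q$, the $q$-Gauss formula applies; the factor $(qz^{2};q)_{\infty}$ cancels and one is left with the remarkably simple
\[
\psi_{-1}^{+}(z)=\frac{z^{-1}}{(q;q)_{\infty}}\bigl(qz\tau,qz\tau^{-1};q\bigr)_{\infty}.
\]
By Proposition~\ref{prop:d_meas}, $\supp\mu_{d}$ is governed by the positive zeros of $\psi_{-1}^{+}$ in $(0,1)$, which now arise precisely from the vanishing of the $q$-Pochhammer symbols, i.e.\ from $z\tau=q^{-k}$ or $z\tau^{-1}=q^{-k}$ with $k\in\N$. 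The two cases coincide under $\tau\leftrightarrow\tau^{-1}$, a symmetry of the constraint~\eqref{eq:z_to_tau_rel}, so it suffices to treat $z\tau=q^{-k}$. Since $\tau$ and $\tau^{-1}$ are the two reciprocal roots of the quadratic determined by $q(\tau+\tau^{-1})=\beta(z+z^{-1})$, the substitution $\tau=q^{-k}z^{-1}$, $\tau^{-1}=q^{k}z$ turns~\eqref{eq:z_to_tau_rel} into $q^{1-k}z^{-1}+q^{1+k}z=\beta(z+z^{-1})$, whence $z_{k}^{2}=(q^{1-k}-\beta)/(\beta-q^{1+k})$.

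Next I would determine when $z_{k}\in(0,1)$. Positivity forces numerator and denominator to share a sign; since $q^{1-k}\geq1>\beta$ for $k\geq1$, only the case of positive numerator and positive denominator survives, and $z_{k}^{2}<1$ reduces to $q^{1-k}+q^{1+k}<2\beta$, i.e.\ $\beta>\tfrac{q}{2}(q^{-k}+q^{k})$; this is strictly stronger than $\beta>q^{1+k}$ by the arithmetic--geometric mean inequality, hence it is the binding condition. Writing $A=q^{1-k}-\beta$ and $B=\beta-q^{1+k}$, a short computation gives $\upsilon(z_{k})=z_{k}+z_{k}^{-1}=(A+B)/\sqrt{AB}=(q^{1-k}-q^{1+k})/\sqrt{(q^{1-k}-\beta)(\beta-q^{1+k})}$, which is exactly the claimed support. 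Finally, the sequence $k\mapsto\tfrac{q}{2}(q^{-k}+q^{k})$ is strictly increasing for $k\in\N$ with minimal value $\tfrac{1+q^{2}}{2}$ at $k=1$; hence some admissible $k$ exists (equivalently $\mu_{d}\neq0$) if and only if $\beta>\tfrac{1+q^{2}}{2}$, giving the stated equivalence.

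The main obstacle I anticipate is the bookkeeping around the implicit, two-valued dependence of $\tau$ on $z$ through~\eqref{eq:z_to_tau_rel}: one must confirm that $\psi_{-1}^{+}$ is genuinely single-valued in $z$ (it is, being symmetric in $z\tau$ and $z\tau^{-1}$), that the two zero-conditions produce neither spurious nor doubly counted points, and that the zeros so obtained indeed land in $(0,1)$ and are simple, matching the hypotheses under which Proposition~\ref{prop:d_meas} was derived. The analytic input (the $q$-Gauss sum and the cancellation of $(qz^{2};q)_{\infty}$) is clean; it is this reconciliation of the constraint with the zero set that requires the most care.
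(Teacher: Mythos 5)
Your proposal is correct and follows essentially the same route as the paper: the $q$-Gauss summation formula~\eqref{eq:q-gauss_sum} evaluates both the ${}_2\phi_1$ in Proposition~\ref{prop:ac_meas} and $\psi_{-1}^{+}$ in closed Pochhammer form, after which the zeros $z_k^2=(q^{1-k}-\beta)/(\beta-q^{k+1})$, the condition $\beta>\tfrac{q}{2}(q^{-k}+q^{k})$ for $z_k\in(0,1)$, and the threshold $(1+q^2)/2$ at $k=1$ all match the paper's computation via Proposition~\ref{prop:d_meas}. The only cosmetic difference is that the paper eliminates the two-valued $\tau$ at the outset by rewriting the product as $\prod_{k\geq1}\bigl(1-\beta(1+z^2)q^{k-1}+z^2q^{2k}\bigr)$, which disposes of the branch bookkeeping you flag as the main obstacle.
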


\begin{proof}
 First, by using Proposition~\ref{prop:ac_meas} together with the $q$-Gauss summation formula~\eqref{eq:q-gauss_sum},
 one obtains
 \[
  \frac{\dd\mu_{ac}}{\dd\theta}(2\cos\theta)
  =-\frac{2(1-\beta)}{\pi(1-q)}\sin^{2}(\theta)\left|\frac{\left(q,qe^{2\ii\theta};q\right)_{\infty}}{\left(qe^{\ii\theta}\tau,qe^{\ii\theta}\tau^{-1};q\right)_{\infty}}\right|^{2}.
 \]
 
 Note that if $\beta\leq q<(1+q^{2})/2$, it readily follows from Proposition~\eqref{prop:mu_d_vanish_suf_cond} that $\mu_{d}=0$
 which is in agreement with the second part of the statement. Hence, we can further assume that $\beta\in(q,1)$.
 
 Next, we compute the zeros of $\psi^{+}_{-1}$ defined by~\eqref{eq:def_psi_pm}. These zeros can be found explicitly because
 \begin{align*}
  \psi_{-1}^{+}(z)&=z^{-1}\left(qz^{2};q\right)_{\!\infty}\pFq{2}{1}{z\tau,z\tau^{-1}}{qz^{2}}{q}
  =\frac{\left(qz\tau,qz\tau^{-1};q\right)_{\infty}}{z(q;q)_{\infty}}\\
  &=\frac{1}{z(q;q)_{\infty}}\prod_{k=1}^{\infty}\left(1-\beta(1+z^{2})q^{k-1}+z^{2}q^{2k}\right),
 \end{align*}
 by the $q$-Gauss sum~\eqref{eq:q-gauss_sum}. Consequently, for $\beta\in(q,1)$, the positive zeros of~$\psi^{+}_{-1}$ 
 are
 \[
  z_{k}=\sqrt{\frac{q^{1-k}-\beta}{\beta-q^{k+1}}}, \quad k\in\N.
 \]
 Note that $z_{1}\geq 1$ if and only if $\beta\leq(1+q^{2})/2$ and hence the latter condition implies that $\mu_{d}=0$
 according to Proposition~\ref{prop:d_meas}. If $\beta>(1+q^{2})/2$, then $z_{1}<1$ and the number of the zeros $z_{k}$
 located in $(0,1)$ is equal to the number of $k\in\N$ for which the inequality 
 \[
  \beta>\frac{q}{2}\left(q^{-k}+q^{k}\right)
 \]
 holds. Moreover, for these indices $k$, the support of $\mu_{d}$ consists of the points
 \[
  \pm\upsilon(z_{k})=\pm\frac{q^{1-k}-q^{k+1}}{\sqrt{(\beta-q^{k+1})(q^{1-k}-\beta)}}
 \]
 by Proposition~\ref{prop:d_meas}.
\end{proof}

%
%
\section{The kinetic Ising model with hyperbolically decaying temperature}
\label{sec:Ising}

Let us now return to the Ising model described briefly in Section~\ref{sec:Intro}.
For the convenience of the reader, we will begin this section with a summary of our results and then present their proofs in Subsections~\ref{subsec:magnetization} and~\ref{subsec:2spin}.

The current section focuses on an semi-infinite chain of particles indexed by $n \in \N$.
Each particle interacts only with its nearest neighbors and is in contact with a heath bath with hyperbolically decaying temperature $T_n = 2/(\kappa k_B n)$, where $\kappa > 0$, and so $\gamma_n = \tanh(\kappa n)$; see~\eqref{eq:gamma_n} and~\eqref{eq:gamma_tanh}.

Let again $q_n(t)$ denote the magnetization of the $n$th particle, $n\in\N$, see~\eqref{eq:magnetization}.
The time evolution of this quantity is governed by the following system of ordinary differential equations
\begin{equation}
  \label{eq:IsingODEs}
  \dot{q}_n(t) = -q_n(t) + \frac{\gamma_{n}}{2} \left(q_{n-1}(t) + q_{n+1}(t) \right), \quad n\in\N,
\end{equation}
accompanied with prescribed initial conditions $q_n(0) \in [-1,1]$, $n\in\N$, and a boundary condition $q_0(t) := 0$.
The factor $\gamma_n=\tanh(\kappa n)$ can be expressed in the form
\begin{equation}
  \label{eq:isingGammaN}
  \gamma_n = \frac{1 - e^{-2\kappa} e^{-2\kappa(n-1)}}{1 + e^{-2\kappa} e^{-2\kappa(n-1)}}, \quad n\in\N.
\end{equation}
Recalling Remark~\ref{rem:OriginalIsing}, i.e. comparing the expression~\eqref{eq:isingGammaN} with the one in~\eqref{eq:def_gamma_n_q_form}, we see that if we set
\[
  \alpha := q := e^{-2\kappa} \in (0, 1) \quad \text{and} \quad \beta := -q = -e^{-2\kappa} \in (-1, 0),
\]
then we can easily employ the results of Subsection~\ref{subsec:specialCase}.
For the purpose of this section let us set (see Proposition~\ref{prop:explicit_repre})
\begin{align}\label{eq.PnTheta}
  P_n(\theta) &:= \left( \prod_{i=1}^{n-1} \gamma_{i}\gamma_{i+1}\right)^{-1/2} p^{(q,-q)}_{n-1}\left(\upsilon(e^{\ii\theta});q\right) \\
  \nonumber
  &= \frac{\sqrt{\gamma_1}}{\sqrt{\gamma_n}} \sum_{k=0}^{n-1} \frac{(-q; q)_k (-q; q)_{n-1-k}}{(q; q)_{k}(q; q)_{n-1-k}} \left(e^{\ii\theta}\right)^{2k-n+1} \\
  \label{eq.PnSimplified}
  &= \frac{\sqrt{\gamma_1}}{\sqrt{\gamma_n}} \sum_{k=0}^{n-1} \frac{(-q; q)_k (-q; q)_{n-1-k}}{(q; q)_{k}(q; q)_{n-1-k}} \cos\left((2k-n+1)\theta\right), \quad n\in\N_0, \ \theta \in [0,\pi].
\end{align}

We are now able to formulate our main result concerning the magnetization.
For its proof, we refer the reader to Subsection~\ref{subsec:magnetization}.

\begin{thm}\label{thm:mag}
  \label{thm.magnetization}
  The unique solution $q_n(t)$ of~\eqref{eq:IsingODEs} satisfying the initial conditions $q_n(0) = q_{n,0}$, $n\in\N$, is given by
  \begin{equation}\label{eq:mag:superposition}
    q_n(t) = \sum_{k=1}^\infty q^{(k)}_n(t) q_{k,0},
  \end{equation}
  where $q_n^{(k)}(t)$ is the solution of~\eqref{eq:IsingODEs} satisfying $q_n^{(k)}(0) = \delta_{k,n}$.
  Furthermore, the solution $q_n^{(k)}(t)$ is given by the integral formula
  \begin{equation}
  \label{eq:magnetizationNK_final}
  q_n^{(k)}(t) = \frac{2}{\pi \gamma_1}\sqrt{\frac{\gamma_n}{\gamma_k}} \int_0^{\pi} e^{-t(1-\cos\theta)}\sin^2(\theta) P_n(\theta) P_k(\theta)\left|\frac{(q, q e^{2\ii\theta}; q)_\infty}{(-q, -q e^{2\ii\theta}; q)_\infty}\right|^2\dd\theta.
\end{equation}
  and obeys the asymptotic behavior
  \begin{equation}\label{eq:thm.magnetization.eq}
    q_n^{(k)}(t)=
    \frac{1}{\gamma_1}\sqrt{\frac{2}{\pi} \frac{\gamma_n}{\gamma_k}} \frac{(q; q)_\infty^4}{(-q; q)_\infty^4} \frac{P_{n}(0) P_{k}(0)}{t^{3/2}}+O\left(t^{-5/2}\right), \quad \text{as} \ t \to +\infty,
  \end{equation}
  where $P_n$ is defined in~\eqref{eq.PnTheta}.
\end{thm}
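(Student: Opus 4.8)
The plan is to recast the system~\eqref{eq:IsingODEs} as a single operator equation, to diagonalize the underlying Jacobi operator using the spectral results of Subsection~\ref{subsec:specialCase}, and finally to extract the large-$t$ behavior by Laplace's method. First I would write~\eqref{eq:IsingODEs} as $\dot q(t)=Mq(t)$, where $M$ is the tridiagonal operator with diagonal entries $-1$ and off-diagonal entries $\gamma_n/2$. Conjugating by the bounded, boundedly invertible diagonal operator $\Gamma:=\diag(\sqrt{\gamma_1},\sqrt{\gamma_2},\dots)$ (note $\gamma_1\le\gamma_n<1$) symmetrizes $M$, namely
\[
  \Gamma^{-1}M\Gamma=-I+\tfrac12 J,
\]
where $J$ is the self-adjoint Jacobi operator with vanishing diagonal and off-diagonal sequence $\sqrt{\gamma_n\gamma_{n+1}}$, which is precisely the Jacobi operator associated with the monic polynomials $p^{(q,-q)}_{n}$, under the index shift identifying particle $n\in\N$ with the polynomial $p_{n-1}$. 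Since $\beta=-q\le q=\alpha$, Proposition~\ref{prop:mu_d_vanish_suf_cond} gives $\|J\|\le2$ and $\mu_d=0$, so the measure of orthogonality is purely absolutely continuous. Therefore $e^{tM}=e^{-t}\Gamma\,e^{tJ/2}\Gamma^{-1}$, and the functional calculus for $J$ yields
\[
  \big(e^{tJ/2}\big)_{n-1,k-1}=\int_{-2}^{2}e^{tx/2}\,\hat p_{n-1}(x)\hat p_{k-1}(x)\,\dd\mu(x),
\]
where $\hat p_m$ are the orthonormal polynomials. The substitution $x=\upsilon(e^{\ii\theta})=2\cos\theta$ identifies $\hat p_{n-1}(2\cos\theta)=P_n(\theta)$ via~\eqref{eq.PnTheta} and the normalization $\|p_{n-1}\|^2=\prod_{i=1}^{n-1}\gamma_i\gamma_{i+1}$. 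Inserting the explicit density from Proposition~\ref{prop.d_meas_simplified} (with $\beta=-q$, whence $\tau=-e^{\ii\theta}$ and $\gamma_1=(1-q)/(1+q)$) and accounting for the orientation reversal $\dd x=-2\sin\theta\,\dd\theta$ then produces~\eqref{eq:magnetizationNK_final}; in particular matching the prefactor $2/(\pi\gamma_1)$ uses exactly $\gamma_1=(1-q)/(1+q)$.

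For existence, uniqueness, and the superposition~\eqref{eq:mag:superposition}, I would observe that $\|M\|_{\ell^\infty\to\ell^\infty}\le 2$, so $M$ generates a uniformly continuous semigroup on $\ell^\infty$ and the bounded solution $q(t)=e^{tM}q(0)$ is unique: the difference $d$ of two bounded solutions obeys $\dot d=Md$, $d(0)=0$, and a Gronwall estimate forces $d\equiv0$. Because $J$ is tridiagonal, $J^m$ has bandwidth $m$, and with $\|J\|\le2$ the entries of $e^{tJ/2}$, hence $q_n^{(k)}(t)=(e^{tM})_{nk}$, decay super-exponentially in $|n-k|$. Each row therefore lies in $\ell^1$, so for bounded initial data $\{q_{k,0}\}$ the series $\sum_k q_n^{(k)}(t)q_{k,0}$ converges absolutely to the $n$-th component of $e^{tM}q(0)$, which is~\eqref{eq:mag:superposition}.

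Finally, the asymptotics~\eqref{eq:thm.magnetization.eq} follows by Laplace's method applied to~\eqref{eq:magnetizationNK_final}. On $[0,\pi]$ the phase $1-\cos\theta$ attains its unique minimum $0$ at $\theta=0$, so the contribution of any interval bounded away from $0$ is exponentially small. The amplitude is smooth there, since the denominator $(-q,-qe^{2\ii\theta};q)_\infty$ never vanishes for $q\in(0,1)$, and it is even in $\theta$, vanishing like $\theta^2$, with $\sin^2\theta\,P_n(\theta)P_k(\theta)\,|(q,qe^{2\ii\theta};q)_\infty/(-q,-qe^{2\ii\theta};q)_\infty|^2\sim \theta^2\,P_n(0)P_k(0)\,(q;q)_\infty^4/(-q;q)_\infty^4$ as $\theta\to0$. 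Using $1-\cos\theta\sim\theta^2/2$ and Watson's lemma gives the leading $t^{-3/2}$ term; evenness of both phase and amplitude makes the expansion proceed in powers of $t^{-1}$, so the first correction is $O(t^{-5/2})$. Collecting constants, in particular $\int_0^\infty\theta^2e^{-t\theta^2/2}\,\dd\theta=\sqrt{2\pi}/(2t^{3/2})$ and $\sqrt{2\pi}/\pi=\sqrt{2/\pi}$, reproduces~\eqref{eq:thm.magnetization.eq}. I expect the main obstacle to be the derivation of~\eqref{eq:magnetizationNK_final}: keeping the index shift and the orthonormal normalization consistent, carrying out the $\tau=-e^{\ii\theta}$ reduction of the density, and tracking the orientation reversal so that every constant lands correctly; the asymptotic step is then a routine, if careful, Laplace estimate.
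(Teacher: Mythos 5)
Your proposal is correct, and in its essentials it follows the same route as the paper: diagonalize the symmetrized Jacobi operator via the orthonormal polynomials $P_n$ and the purely absolutely continuous measure (your reduction $\tau=-e^{\ii\theta}$ for $\alpha=q$, $\beta=-q$, the prefactor $2/(\pi\gamma_1)$ via $\gamma_1=(1-q)/(1+q)$, and the identification $\hat p_{n-1}(2\cos\theta)=P_n(\theta)$ with $\|p_{n-1}\|^2=\prod_{i=1}^{n-1}\gamma_i\gamma_{i+1}$ all check out against \eqref{eq.muSimplified} and \eqref{eq:magnetizationNK_final}), then extract the $t^{-3/2}$ law by Laplace/Watson, exactly as in Proposition~\ref{prop.qnkAsy}; your constant bookkeeping, $\int_0^\infty\theta^2e^{-t\theta^2/2}\,\dd\theta=\sqrt{2\pi}/(2t^{3/2})$, reproduces \eqref{eq:thm.magnetization.eq}, and the evenness of phase and amplitude in $\theta$ correctly yields the $O(t^{-5/2})$ error. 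The one place you genuinely diverge is the convergence of the superposition \eqref{eq:mag:superposition}: the paper's Lemma~\ref{lem:mag_bound} obtains $|q_n^{(k)}(t)|\leq C_{n,\ell}(q,t)\,k^{-\ell}$ for every $\ell$ by viewing the integral as a Fourier coefficient of a smooth $2\pi$-periodic function (plus an exponentially small remainder coming from \eqref{eq:qkn_proof_2}), whereas you exploit bandedness of $J$ together with $\|J\|\leq2$ to get $\bigl|(e^{tJ/2})_{nk}\bigr|\leq \bigl(t^{|n-k|}/|n-k|!\bigr)e^{t}$, i.e.\ super-exponential off-diagonal decay at fixed $t$. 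Your argument is more elementary and gives a stronger pointwise bound in $|n-k|$; the paper's Fourier argument instead isolates the decay in $k$ for fixed $n$, but for the purposes of Theorem~\ref{thm:mag} either estimate suffices for absolute convergence with bounded initial data. You also make explicit two points the paper leaves implicit: uniqueness of the bounded solution via a Gronwall estimate on $\ell^\infty$ (the paper merely asserts uniqueness), and the vanishing of $\mu_d$ via Proposition~\ref{prop:mu_d_vanish_suf_cond} (from $\beta=-q\leq q=\alpha$) rather than the criterion $\beta\leq(1+q^2)/2$ of Proposition~\ref{prop.d_meas_simplified}; both are valid.
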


Note that the $q$-Pochhammer multiplicative factor appearing in~\eqref{eq:thm.magnetization.eq} is expressible in terms of the $\gamma_n$ factor in the following way
\[
  \frac{(q;q)_\infty}{(-q;q)_\infty} = \prod_{k=1}^\infty \gamma_k.
\]

Finally, we turn our attention to the evolution equation for the two-spin correlations $r_{m,n}(t)$, see~\eqref{eq:2SpinCorrelation}.
For $m,n\in\N$, $m \neq n$, the two-spin correlation $r_{m,n}(t)$ satisfies
\begin{equation}
  \label{eq:rmnOffDiagonal}
  \dot{r}_{m,n}(t) = -2r_{m,n}(t) + \frac{\gamma_m}{2} \left( r_{m+1,n}(t) + r_{m-1,n}(t) \right) + \frac{\gamma_n}{2} \left( r_{m,n+1}(t) + r_{m,n-1}(t) \right),
\end{equation}
where $r_{m,0} = r_{0,n} = 0$, $m,n\in\N$.
Moreover, it has to satisfy prescribed initial conditions $r_{m,n}(0)$ and
\begin{align}
  \label{eq:rmnDiagonal}
  r_{m,m}(t) &= 1, \quad m\in\N, \\
  \label{eq:rmnSymmetry}
  r_{m,n}(t) &= r_{n,m}(t), \quad m,n\in\N,
\end{align}
for all $t>0$.
In Subsection~\ref{subsec:2spin}, we will prove the following theorem describing some of the properties of the two-spin correlation. 

\begin{thm}\label{thm:two-spin_cor}
  There is a unique solution of~\eqref{eq:rmnOffDiagonal}--\eqref{eq:rmnSymmetry} given by
  \[
    r_{m,n}(t) = \rho_{m,n} + \sum_{k > \ell} r^{(k,\ell)}_{m,n}(t) (r_{k,\ell}(0) - \rho_{k,\ell}), \quad m,n\in\N,
  \]
  where $\rho_{m,n}$ is the stationary solution of~\eqref{eq:rmnOffDiagonal}--\eqref{eq:rmnSymmetry} and $r_{m,n}^{(k,\ell)}$, $k > \ell$, is the solution of~\eqref{eq:rmnOffDiagonal},~\eqref{eq:rmnSymmetry} and equation $r_{m,n}(t) = 0$.

  Furthermore, we have the following asymptotic expression
  \begin{equation}\label{eq:two-spin_corr_asympt}
    r_{m,n}^{(k,\ell)}(t) = \frac{3}{\pi\gamma_1^2} \frac{(q;q)_\infty^8}{(-q;q)_\infty^8} \sqrt{\frac{\gamma_m \gamma_n}{\gamma_k\gamma_\ell}} \frac{R_{k,\ell}R_{m,n}}{t^5} + O\left(t^{-6}\right), \quad \text{as} \ t \to +\infty, 
  \end{equation}
  for $m\geq n$, where, see~\eqref{eq.PnTheta},
  \begin{equation}\label{eq:def_R_mn}
    R_{m,n} = P_m(0) P''_n(0) - P''_m(0) P_n(0).
  \end{equation}
\end{thm}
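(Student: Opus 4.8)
The plan is to reduce the two-spin correlation problem to the single-spin (magnetization) analysis already carried out in Theorem~\ref{thm:mag}, exploiting the fact that the spatial operator in~\eqref{eq:rmnOffDiagonal} is, up to the overall shift $-2$ versus $-1$ and a factor, a tensor-product (two-dimensional) version of the one governing~\eqref{eq:IsingODEs}. First I would treat the well-posedness claim: after subtracting the stationary solution $\rho_{m,n}$, the deviation $\tilde r_{m,n}(t) := r_{m,n}(t) - \rho_{m,n}$ satisfies the homogeneous system~\eqref{eq:rmnOffDiagonal} with the homogeneous diagonal constraint $\tilde r_{m,m}(t)=0$ and the symmetry~\eqref{eq:rmnSymmetry}. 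These constraints mean one works on the ``strictly upper-triangular'' index set $\{k>\ell\}$, so the superposition formula over $k>\ell$ with coefficients $r_{k,\ell}(0)-\rho_{k,\ell}$ follows from linearity once the building-block solutions $r^{(k,\ell)}_{m,n}$ are constructed, exactly as in~\eqref{eq:mag:superposition}.

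The main work is the integral representation and asymptotics of the building block $r^{(k,\ell)}_{m,n}(t)$. The key observation I would use is that the generator of~\eqref{eq:rmnOffDiagonal} on the off-diagonal decouples into $J\otimes I + I\otimes J$ in the symmetrized variables, where $J$ is precisely the Jacobi operator whose spectral resolution drives Theorem~\ref{thm:mag}. Consequently the two-variable propagator factorizes: $e^{-2t}\exp(t\gamma_m\partial/2)\otimes\exp(t\gamma_n\partial/2)$ becomes, after conjugation by the diagonal $\Gamma$ (as in~\eqref{eq:magnetizationODEsolution}), a product of two copies of the single-spin kernel from~\eqref{eq:magnetizationNK_final}. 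Thus I expect $r^{(k,\ell)}_{m,n}(t)$ to be expressible as a double integral over $(\theta,\phi)\in[0,\pi]^2$ of a product $P_m(\theta)P_n(\phi)$ (antisymmetrized in $(\theta,\phi)$ to respect the diagonal constraint $r_{m,m}=0$ and the ordering $k>\ell$) against the weight $e^{-t(2-\cos\theta-\cos\phi)}$ times two copies of the spectral density appearing in~\eqref{eq:magnetizationNK_final}.

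For the large-$t$ asymptotics I would apply Laplace's method to this double integral. The exponent $2-\cos\theta-\cos\phi$ is minimized at the corner $(\theta,\phi)=(0,0)$, and near there the spectral density $\sin^2(\theta)\,|(q,qe^{2\ii\theta};q)_\infty/(-q,-qe^{2\ii\theta};q)_\infty|^2$ vanishes like $\theta^2$, while antisymmetry forces the leading symmetric term to cancel. The surviving contribution is governed by the antisymmetric combination $P_m(0)P_n''(0)-P_m''(0)P_n(0)=R_{m,n}$ defined in~\eqref{eq:def_R_mn}, and a two-dimensional Watson-type expansion yields the $t^{-5}$ rate with the stated constant, the power $t^{-5}$ arising from two factors of $t^{-3/2}$ (as in~\eqref{eq:thm.magnetization.eq}) upgraded by two extra powers of $t^{-1}$ due to the double zero forced by antisymmetry. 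I would pin down the constant $3/(\pi\gamma_1^2)\,(q;q)_\infty^8/(-q;q)_\infty^8$ by squaring the single-spin normalization in~\eqref{eq:thm.magnetization.eq} and tracking the Gaussian moment integrals.

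The hard part will be justifying the antisymmetrization and the precise cancellation at the corner: one must verify that the diagonal constraint $r_{m,m}(t)=0$ together with the symmetry~\eqref{eq:rmnSymmetry} exactly selects the antisymmetric part of the product kernel, so that the leading $t^{-3}$ behavior (which a naive product of two $t^{-3/2}$ factors would predict) cancels identically and the true leading order is $t^{-5}$. Establishing this requires care with the boundary terms at $\theta=\phi$ and a clean statement that the building-block kernel is genuinely antisymmetric in its spectral variables; once that structural fact is in hand, the Laplace-method bookkeeping and the identification of $R_{m,n}$ are routine.
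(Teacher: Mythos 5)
Your proposal is essentially correct, and its core construction coincides with the paper's: the building block used there is exactly the antisymmetrized product (for $m\geq n$) $r^{(k,\ell)}_{m,n}(t)=q_m^{(k)}(t)q_n^{(\ell)}(t)-q_n^{(k)}(t)q_m^{(\ell)}(t)$, extended symmetrically to $m<n$, i.e.\ the method-of-images trick you describe for the Dirichlet condition on the diagonal; your double spectral integral is precisely this determinant with \eqref{eq:magnetizationNK} inserted for each factor. Where you genuinely diverge is the asymptotics: the paper multiplies out the refined one-dimensional expansions of Proposition~\ref{prop.qnkAsy}, which forces it to compute the second- and fourth-order coefficients $A_{n,k}$ and $B_{n,k}$ and then check algebraically that the $t^{-3}$ and $t^{-4}$ terms cancel, leaving the $t^{-5}$ coefficient $\tfrac{9}{4}(A_{m,k}A_{n,\ell}-A_{n,k}A_{m,\ell})+\tfrac{15}{4}(B_{m,k}+B_{n,\ell}-B_{n,k}-B_{m,\ell})$. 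Your two-dimensional Laplace analysis at the corner $(\theta,\phi)=(0,0)$ is arguably cleaner: after symmetrizing the integration variables the kernel factorizes as $\tfrac12\bigl[P_m(\theta)P_n(\phi)-P_n(\theta)P_m(\phi)\bigr]\bigl[P_k(\theta)P_\ell(\phi)-P_\ell(\theta)P_k(\phi)\bigr]\approx\tfrac18 R_{m,n}R_{k,\ell}(\phi^2-\theta^2)^2$, and the Gaussian moment computation
\begin{equation*}
\int_0^\infty\!\!\int_0^\infty \theta^2\phi^2(\phi^2-\theta^2)^2 e^{-t(\theta^2+\phi^2)/2}\,\dd\theta\,\dd\phi = 6\pi\, t^{-5}
\end{equation*}
combined with the density \eqref{eq.muSimplified} indeed reproduces the constant $\tfrac{3}{\pi\gamma_1^2}(q;q)_\infty^8/(-q;q)_\infty^8$, so your bookkeeping checks out and bypasses the explicit $A$, $B$ coefficients entirely (at the price of controlling a genuinely two-dimensional error term).

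Two gaps remain relative to the full statement. First, the theorem asserts existence and uniqueness of the stationary solution $\rho_{m,n}$, which you simply presuppose; the paper proves it by exhibiting $\rho=T\rho$ for a contraction $T$ on $\ell^\infty(\N^2)$ and invoking the Banach fixed point theorem, and some such argument is needed. Second, the superposition over $k>\ell$ is an infinite series, and its convergence requires the decay of $q_n^{(k)}(t)$ in $k$ from Lemma~\ref{lem:mag_bound} together with boundedness of $\{r_{k,\ell}(0)-\rho_{k,\ell}\}$; ``follows from linearity'' does not cover this. Finally, the step you flag as the hard part is in fact routine: there are no boundary terms at $\theta=\phi$ to worry about (the constraint lives on the lattice diagonal $m=n$, not in the spectral variables), and once the building block is written as the explicit determinant above, one checks directly that all nearest neighbors of a wedge point $m>n$ lie in the closed wedge, that the antisymmetric extension vanishes on the diagonal, and that the initial condition $\delta_{m,k}\delta_{n,\ell}$ holds because $k>\ell$ excludes the second determinant term; this is a few lines, not a delicate structural verification.
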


In conclusion, we see that the two-spin correlation approaches the stationary solution at the rate proportional to $t^{-5}$.
This result is in contrast with the classical case of the doubly-infinite chain with constant temperature~\cite{glauber_jmp63}, where the rate is exponential, cf. Remark~\ref{rem:magnetizationConstantT}.
Moreover, in the classical case, the stationary solution $\rho_{m,n}$ is known fully explicitly.
Whether the stationary solution in the current case under investigation is at least expressible in terms of the basic hypergeometric series remains an open problem.

%
%
\subsection{The magnetization}
\label{subsec:magnetization}

The family $\{P_n\}_{n=1}^{\infty}$ defined in~\eqref{eq.PnTheta} forms an orthonormal basis of the Hilbert space $L^2\left((0,\pi), \dd\mu\right)$, where the measure $\mu$ is absolutely continuous with respect to the Lebesgue measure and (see Proposition~\ref{prop.d_meas_simplified})
\begin{equation}\label{eq.muSimplified}
  \frac{\dd\mu}{\dd\theta}(\theta) = \frac{2}{\pi\gamma_1} \sin^2(\theta)
  \left| 
  \frac{(q, q e^{2\ii\theta}; q)_\infty}{(-q, -q e^{2\ii\theta}; q)_\infty}
  \right|^2.
\end{equation}
Moreover, $\{P_n(\theta)\}_{n=1}^{\infty}$ satisfies the following symmetric recurrence equation
\[
  \frac{\sqrt{\gamma_{n-1}\gamma_n}}{2} P_{n-1}(\theta) + \frac{\sqrt{\gamma_n\gamma_{n+1}}}{2} P_{n+1}(\theta) = \cos(\theta) P_{n}(\theta), \quad n\in\N.
\]
Note that $P_{0}(\theta) = 0$.
Also observe that $P_n(0)$ is positive and $P'_n(0)$ vanishes for all $n\in\N$.

Now consider the solution $q_n^{(k)}(t)$ of the magnetization evolution equation~\eqref{eq:IsingODEs} with initial conditions $q_n^{(k)}(0) = \delta_{k,n}$, $n\in\N$, for some fixed $k\in\N$.
Such a solution is unique and can be represented in the following integral form
\begin{equation}
  \label{eq:magnetizationNK}
  q_n^{(k)}(t) = \sqrt{\frac{\gamma_n}{\gamma_k}} \int_0^{\pi} e^{-t(1-\cos\theta)} P_n(\theta) P_k(\theta) \,\dd\mu(\theta).
\end{equation}
The multiplicative factor comes from the symmetrization matrix $\Gamma=\diag(\sqrt{\gamma_{1}},\sqrt{\gamma_{2}},\dots)$, see~\eqref{eq:magnetizationODEsolution}, where $N$ is sent to infinity. Formulas \eqref{eq.muSimplified} and \eqref{eq:magnetizationNK} yield \eqref{eq:magnetizationNK_final}.

A simple graphical illustration of solutions~\eqref{eq:magnetizationNK} is presented in Figure~\ref{fig:qs}.
The general solution to the equation~\eqref{eq:IsingODEs} can be obtained by suitable superposition~\eqref{eq:mag:superposition} of solutions~\eqref{eq:magnetizationNK}.
The convergence of the series~\eqref{eq:mag:superposition} is guaranteed by the following Lemma. Recall that the sequence $\{q_{k,0}\}_{k=1}^{\infty}$ appearing in~\eqref{eq:mag:superposition} is bounded since $q_{k,0}\in[-1,1]$ for all $k\in\N$.

\begin{figure}
  {\centering
    \includegraphics{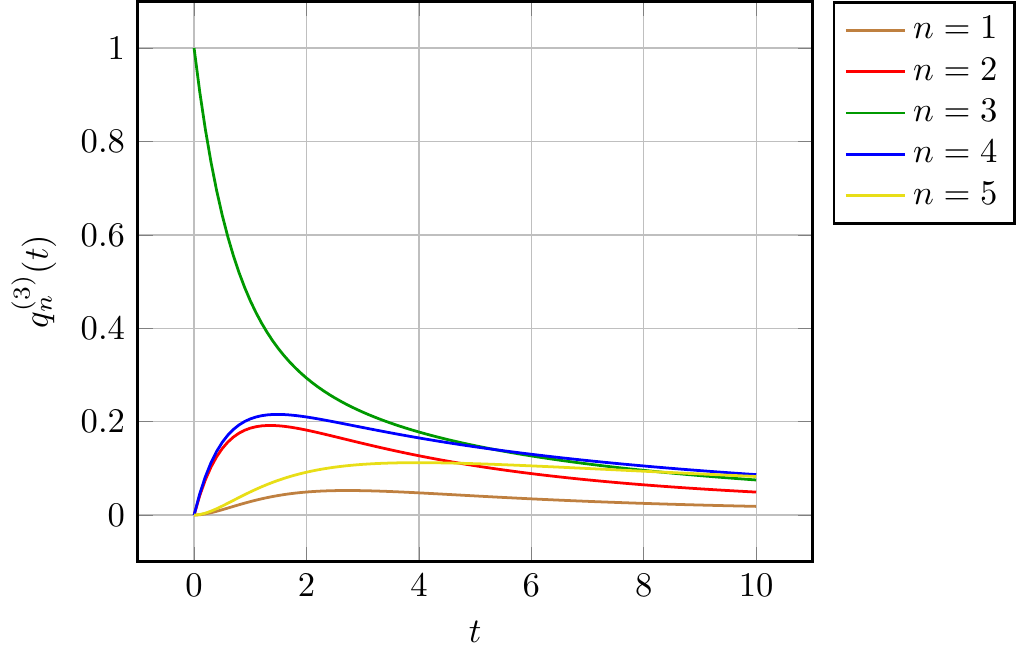}
  \par}
  \caption{Magnetization $q_n^{(3)}(t)$ given by~\eqref{eq:magnetizationNK} for $n=1,2,\ldots,5$.}
  \label{fig:qs}
\end{figure}

\begin{lem}\label{lem:mag_bound}
  Let $\ell\in\N$. Then the magnetization $q_n^{(k)}(t)$, given by~\eqref{eq:magnetizationNK}, satisfies the estimate
  \[
    \left| q_n^{(k)}(t) \right| \leq \frac{C_{n,\ell}(q,
    t)}{k^{\ell}}, \quad \forall k\in\N.
  \]
  The factor $C_{n,\ell}(q, t)$ does not depend on $k$.
\end{lem}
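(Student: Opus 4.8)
The plan is to read the integral \eqref{eq:magnetizationNK} as a matrix element of an entire function of a bounded tridiagonal operator, and then to exploit the band structure to convert the analyticity of the exponential into decay in $k$ that is faster than any power. The key point is that the smoothness available here is of the strongest possible kind (the symbol $z\mapsto e^{tz}$ is entire), so one should expect not merely $O(k^{-\ell})$ but super-exponential (factorial) decay; the polynomial bound claimed in the lemma is then a comfortable consequence.

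Concretely, I would let $\tilde A$ denote the self-adjoint operator of multiplication by $\cos\theta$ on $L^2((0,\pi),\dd\mu)$. Under the unitary identification of the orthonormal basis $\{P_k\}_{k=1}^{\infty}$ with the standard basis of $\ell^2(\N)$, the three-term recurrence satisfied by $\{P_k\}$ shows that $\tilde A$ is the Jacobi matrix with vanishing diagonal and off-diagonal entries $\sqrt{\gamma_k\gamma_{k+1}}/2$; thus $\tilde A$ is tridiagonal, and since $\cos\theta\in[-1,1]$ one has $\|\tilde A\|\le 1$. Writing $[\tilde A^m]_{nk}$ for its matrix entries, the band structure supplies the two facts I shall use: $[\tilde A^m]_{nk}=0$ whenever $m<|n-k|$, and $|[\tilde A^m]_{nk}|\le\|\tilde A\|^m\le 1$ for every $m$.

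The next step is to expand $e^{-t(1-\cos\theta)}=e^{-t}\sum_{m\ge 0}(t^m/m!)\cos^m\theta$ inside \eqref{eq:magnetizationNK}. Since $\int_0^\pi \cos^m\theta\,P_nP_k\,\dd\mu=[\tilde A^m]_{nk}$ and $\int_0^\pi|P_nP_k|\,\dd\mu\le 1$ by Cauchy--Schwarz and orthonormality, the convergence of $\sum_m t^m/m!$ lets me interchange summation and integration (dominated convergence), obtaining
\[
 q_n^{(k)}(t)=\sqrt{\frac{\gamma_n}{\gamma_k}}\,e^{-t}\sum_{m\ge|n-k|}\frac{t^m}{m!}[\tilde A^m]_{nk}.
\]
For $k>n$ the sum starts at $m=k-n$, so the triangle inequality and $|[\tilde A^m]_{nk}|\le1$ give
\[
 \left|q_n^{(k)}(t)\right|\le\sqrt{\frac{\gamma_n}{\gamma_k}}\,e^{-t}\sum_{m\ge k-n}\frac{t^m}{m!}.
\]
Because $\gamma_k=\tanh(\kappa k)$ increases with $\gamma_1\le\gamma_k<1$, the prefactor is bounded by $\sqrt{\gamma_n/\gamma_1}$ uniformly in $k$, and the exponential tail obeys $\sum_{m\ge M}t^m/m!\le 2\,t^M/M!$ once $M\ge 2t$, with $t^M/M!$ decaying faster than any negative power of $M$. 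This yields $|q_n^{(k)}(t)|\le C_{n,\ell}(q,t)\,k^{-\ell}$ for every $\ell\in\N$ and all large $k$; the finitely many remaining $k$ are absorbed into the constant.

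The argument is essentially routine, and I do not expect a genuine obstacle: the only technical care needed is the justification of the interchange of sum and integral, dispatched by the uniform bound above. The conceptual crux — and the reason the estimate survives for every $\ell$ — is the vanishing $[\tilde A^m]_{nk}=0$ for $m<|n-k|$ forced by the tridiagonal structure, which is exactly what renders a far-off-diagonal matrix entry of $e^{t\tilde A}$ factorially small.
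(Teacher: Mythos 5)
Your proof is correct, but it takes a genuinely different route from the paper. The paper's argument (itself only sketched) goes through the explicit $q$-hypergeometric representation~\eqref{eq:ogp_formula_psipm}: it isolates an oscillating factor $e^{-\ii k\theta}$ times a smooth $2\pi$-periodic function of $\theta$, so that $q_n^{(k)}(t)$ becomes essentially the $k$th Fourier coefficient of a smooth function (hence $O(k^{-\ell})$ for every $\ell$), plus a remainder controlled by the uniform expansion ${}_2\phi_1(\cdots;q^{k+1})=1+O(q^k)$, which contributes an extra $O(q^k)$ term. You instead read~\eqref{eq:magnetizationNK} as the matrix entry $\langle P_n, e^{-t(I-\tilde A)}P_k\rangle$ of the exponential of the banded Jacobi operator, expand the exponential, and use $[\tilde A^m]_{nk}=0$ for $m<|n-k|$ together with $\|\tilde A\|\le 1$ to get the factorial tail bound $2\,t^{\,k-n}/(k-n)!$ for $k-n\ge 2t$. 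Your steps all check out: the vanishing of $\int_0^\pi\cos^m\!\theta\,P_nP_k\,\dd\mu$ for $m<|n-k|$ follows from the three-term recurrence alone (repeated application expresses $\cos^m\!\theta\,P_k$ as a combination of $P_j$ with $|j-k|\le m$, so you do not even need the unitary identification or density of polynomials), the interchange of sum and integral is justified by the uniform bound $\int_0^\pi|P_nP_k|\,\dd\mu\le 1$, and the prefactor $\sqrt{\gamma_n/\gamma_k}\le\sqrt{\gamma_n/\gamma_1}$ is uniform in $k$ since $\gamma_k=\tanh(\kappa k)$ is increasing. What each approach buys: yours is more elementary and more general --- it applies verbatim to any bounded Jacobi matrix with orthonormal eigenfunction expansion, requires no smoothness of the weight and no asymptotics of the orthogonal polynomials, and delivers a stronger, fully explicit super-exponential (factorial) bound rather than $O(k^{-\ell})$ for each fixed $\ell$; the paper's Fourier-analytic sketch, by contrast, is tied to the specific structure~\eqref{eq:ogp_formula_psipm} but fits naturally with the asymptotic machinery the authors develop anyway for Proposition~\ref{prop.qnkAsy}. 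In short, your argument is complete where the paper's is a sketch, and it proves more than the lemma asks.
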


\begin{proof}
  In order to avoid unnecessary and cumbersome expressions, we will only provide a sketch of the proof.

  First of all, note that both $P_n(\theta)$ and $\mathrm{d}\mu(\theta) / \mathrm{d}\theta$ can be extended to even, smooth, and $2\pi$-periodic functions w.r.t.~$\theta$, see equations~\eqref{eq.PnSimplified} and~\eqref{eq.muSimplified}, respectively.
  From the very definitions of the $q$-hypergeometric series and the $q$-Pochhammer symbol we obtain the asymptotic expansion
  \begin{equation}\label{eq:qkn_proof_2}
    \pFq{2}{1}{-1, -e^{-2\ii\theta}}{qe^{-2\ii\theta}}{q^{k+1}} = 1 + O(q^k),
  \end{equation}
  which is uniform w.r.t~$\theta$ since the error term is majorized by a factor independent of~$\theta$.
  Combining these observations with \eqref{eq.PnTheta}, \eqref{eq.muSimplified}, and \eqref{eq:ogp_formula_psipm}, we are able to express $q_n^{(k)}(t)$ from equation~\eqref{eq:magnetizationNK} in the following form
  \begin{equation}\label{eq:qkn_proof_1}
    q_n^{(k)}(t) = \tilde{C}_{k,n}(q) \left( \Im F_{n,k}(q,t) + G_{n,k}(q,t)\right),
  \end{equation}
  where
  \[
 F_{n,k}(q,t):=\frac{1}{2\pi}\int_{-\pi}^\pi e^{-\ii k\theta} f_n(q,t;\theta) \,\mathrm{d}\theta
  \]
  and $f_n(q,t;\theta)$ is a smooth and $2\pi$-periodic function of $\theta$.
  The term $F_{n,k}(q,t)$ can be thought of as being the $k$th Fourier coefficient of $f_n(t,q;\theta)$.
  It is well known (e.g., cf.~\cite[Chapter II, Miscellaneous theorems and examples, no. 5]{Zygmund}) that Fourier coefficients of a smooth $2\pi$-periodic function decay as $1/k^\ell$ for any fixed positive integer $\ell$.
  
  The term $G_{n,k}(q,t)$ in~\eqref{eq:qkn_proof_1} originating from the second term of~\eqref{eq:qkn_proof_2}, can be bounded as
  \[
    |G_{n,k}(q,t)|\leq\tilde{\tilde{C}}_n(q,t) q^{k}.
  \]
  Finally, the multiplicative factor $\tilde{C}_{k,n}(q)$ in~\eqref{eq:qkn_proof_1} is bounded by a constant independent of $k$ and $n$.
\end{proof}

It is unlikely that~\eqref{eq:magnetizationNK} can be integrated explicitly.
In the next proposition, we therefore derive the asymptotic expression from Theorem~\ref{thm.magnetization}.
Moreover, we derive also higher order terms which are quite complicated, however, they are necessary in order to determine the leading term  of the asymptotic expansion for the two-spin correlation which is computed in Subsection~\ref{subsec:2spin} below.

\begin{prop}\label{prop.qnkAsy}
  Let $n,k\in\N$.
  Then $q_n^{(k)}(t)$ given by~\eqref{eq:magnetizationNK} has the following asymptotic expansion
  \begin{equation}\label{eq.qnkAsy}
    q_n^{(k)}(t) = \frac{1}{\gamma_1} \sqrt{\frac{2}{\pi} \frac{\gamma_n}{\gamma_k}}\frac{(q;q)_\infty^4}{(-q;q)_\infty^4}P_{n}(0)P_{k}(0) \, t^{-3/2} 
    \left( 1 + \frac{3}{2} A_{n,k} \, t^{-1} + \frac{15}{4} B_{n,k} \, t^{-2} + O\left(t^{-3}\right) \right),
  \end{equation}
  where
  \begin{align}
    \label{eq:Ank}
    A_{n,k} &= \frac{P''_n(0)}{P_n(0)} + \frac{P''_k(0)}{P_k(0)} + 8\phi_1 + 16\phi_2-\frac{1}{4}, \\\nonumber
    B_{n,k} &= \frac{P''_n(0) P''_k(0)}{P_n(0)P_k(0)} + \left(8\phi_1 + 16\phi_2 - \frac{1}{12}\right)\left(\frac{P''_n(0)}{P_n(0)} + \frac{P''_k(0)}{P_k(0)} \right)
    +\frac{1}{6} \left(\frac{P^{(4)}_n(0)}{P_n(0)} + \frac{P^{(4)}_k(0)}{P_k(0)} \right)  \\\nonumber
    &- \frac{16}{3} \phi_1^4 + 64\phi_1^2\phi_2 + 64\phi_2^2 + 128\phi_1\phi_2  -128\phi_1\phi_3 + 32\phi_1^2 - 6\phi_1 - 76\phi_2 -192\phi_3-\frac{1}{32}, \\\label{eq:Bnk}    
  \end{align}
  as $t \to +\infty$, where the constants $\phi_1$, $\phi_2$, and $\phi_3$ are given by
  \begin{equation}\label{eq:def_phi123}
    \phi_1 := \sum_{j=1}^\infty \frac{2 q^{j}}{1 - q^{2j}}, \quad
    \phi_2 := \sum_{j=1}^\infty \frac{2 q^{3j}}{\left( 1 - q^{2j} \right)^2},
    \quad \text{and} \quad
    \phi_3 := \frac{2}{3} \sum_{j=1}^\infty \frac{q^{3j}(1 + 3 q^{2j})}{\left( 1 - q^{2j} \right)^3}.
  \end{equation}

\end{prop}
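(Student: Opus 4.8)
The plan is to evaluate the integral \eqref{eq:magnetizationNK} by the Laplace (Watson's lemma) method, since the weight $e^{-t(1-\cos\theta)}$ concentrates all the mass at the minimum $\theta=0$ of the phase $1-\cos\theta$ as $t\to+\infty$; the other endpoint $\theta=\pi$ contributes only $O(e^{-2t})$. First I would record the local data at $\theta=0$. By \eqref{eq.PnSimplified} each $P_n$ is a finite cosine polynomial, hence an even, real-analytic, $2\pi$-periodic function with $P_n'(0)=0$ and $P_n(0)>0$. Writing $\frac{\dd\mu}{\dd\theta}(\theta)=\frac{2}{\pi\gamma_1}\sin^2\theta\,\Phi(\theta)$ with $\Phi(\theta):=\left|\frac{(q,qe^{2\ii\theta};q)_\infty}{(-q,-qe^{2\ii\theta};q)_\infty}\right|^2$, the factor $\Phi$ is even and real-analytic on $\R$ (its denominator has no real zeros for $q\in(0,1)$). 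In particular the integrand is even and vanishes to second order at $\theta=0$.

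Second, I would remove the degeneracy coming from $\sin^2\theta$ by the substitution $u=1-\cos\theta$, under which $\sin^2\theta=u(2-u)$ and $\dd\theta=\dd u/\sqrt{u(2-u)}$, so that
\[
q_n^{(k)}(t)=\frac{2}{\pi\gamma_1}\sqrt{\frac{\gamma_n}{\gamma_k}}\int_0^2 e^{-tu}\,H(u)\,\sqrt{u(2-u)}\,\dd u,\qquad H(u):=P_n(\theta(u))P_k(\theta(u))\Phi(\theta(u)).
\]
The apparent singularity of the original weight has cancelled and $H$ is real-analytic near $u=0$. Inverting $u=\tfrac12\theta^2-\tfrac1{24}\theta^4+\dots$ to express $\theta^2$ as a power series in $u$, and composing with the Taylor expansions of $P_n$, $P_k$, and $\Phi$, I would expand $H(u)=\sum_{j\geq0}H_j u^j$; expanding also $\sqrt{u(2-u)}=\sqrt2\,u^{1/2}(1-u/2)^{1/2}$ then produces an integrand of the form $\sum_{j\geq0}c_j\,u^{j+1/2}$ near $u=0$.

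Third, Watson's lemma (with the part over $[\delta,2]$ bounded by $O(e^{-\delta t})$) gives the expansion $q_n^{(k)}(t)\sim\frac{2}{\pi\gamma_1}\sqrt{\gamma_n/\gamma_k}\sum_{j\geq0}c_j\,\Gamma(j+\tfrac32)\,t^{-j-3/2}$. The leading coefficient $c_0=\sqrt2\,P_n(0)P_k(0)\Phi(0)$, together with $\Phi(0)=\frac{(q;q)_\infty^4}{(-q;q)_\infty^4}$ and $\Gamma(3/2)=\sqrt\pi/2$, reproduces the leading term of \eqref{eq.qnkAsy}; the prefactors $3/2$ and $15/4$ are precisely $\Gamma(5/2)/\Gamma(3/2)$ and $\Gamma(7/2)/\Gamma(3/2)$, so that $A_{n,k}=c_1/c_0$ and $B_{n,k}=c_2/c_0$. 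The quantities $\phi_1,\phi_2,\phi_3$ of \eqref{eq:def_phi123} enter through the Taylor coefficients of $\log\Phi$ at $\theta=0$: from the product form $\Phi(\theta)=\frac{(q;q)_\infty^2}{(-q;q)_\infty^2}\prod_{j\ge1}\frac{1-2q^j\cos2\theta+q^{2j}}{1+2q^j\cos2\theta+q^{2j}}$, differentiating and summing the resulting geometric $q$-series yields exactly these combinations, while the purely rational constants ($-1/4$, $-1/12$, $-1/32$, \dots) come from the Jacobian factor $(1-u/2)^{1/2}$ and the ratios of Gamma values.

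Finally, the main obstacle is not the method but the bookkeeping behind $B_{n,k}$ in \eqref{eq:Bnk}: obtaining it correctly requires the inversion $\theta^2(u)$ and the Taylor expansion of $\Phi$ up to order $\theta^6$, hence the three independent $q$-series $\phi_1,\phi_2,\phi_3$ together with their products $\phi_1^4,\phi_1^2\phi_2,\phi_2^2,\phi_1\phi_3$, as well as all cross-terms mixing the fourth-order data $P_n^{(4)}(0),P_k^{(4)}(0)$ with the measure. Keeping the three expansions — of $\theta^2(u)$, of $H$, and of the Jacobian — mutually consistent to this order, and checking that the various $q$-series collapse exactly into $\phi_1,\phi_2,\phi_3$, is the delicate step. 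I would organize it by first expanding $\log\Phi$ to order $\theta^6$, exponentiating, and only then composing with the $\theta^2(u)$ series, so as to minimise error-prone manipulations.
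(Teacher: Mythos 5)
Your proposal follows essentially the same route as the paper's proof: your substitution $u=1-\cos\theta$ is precisely the paper's change of variables $x$ with $\theta(x)=\arccos(1-x)$, after which Watson's lemma is applied to $\int_0^2 e^{-tu}\sqrt{u(2-u)}\,H(u)\,\dd u$, the prefactors $3/2$ and $15/4$ arise as $\Gamma(5/2)/\Gamma(3/2)$ and $\Gamma(7/2)/\Gamma(3/2)$, and the constants $\phi_1,\phi_2,\phi_3$ enter through the expansion of the $q$-Pochhammer ratio near $\theta=0$, exactly as in the paper's expansion of $(qz;q)_\infty/(-qz;q)_\infty$ at $z=1$. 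Like the paper itself, which records the intermediate Taylor expansions and reports verifying the coefficient bookkeeping with Mathematica, you leave the explicit computation of $A_{n,k}$ and $B_{n,k}$ as a correctly identified mechanical step, so the argument is sound and matches the published proof.
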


\begin{proof}
  The integral in~\eqref{eq:magnetizationNK} can be written in the following form
  \[
    q_n^{(k)}(t) = \frac{2}{\pi\gamma_1} \sqrt{\frac{\gamma_n}{\gamma_k}} \int_0^2 e^{-tx} x^{1/2} h(x) \,\mathrm{d}x,
  \]
  where
  \[
    h(x) = \sqrt{2-x} \, P_n\big(\theta(x)\big) P_k\big(\theta(x)\big) \left| \frac{\big(q,q e^{2\ii\theta(x)};q\big)_\infty}{\big(-q,-q e^{2\ii\theta(x)};q\big)_\infty} \right|^2
  \]
  and $\theta(x) = \arccos(1-x)$.
  In order to apply Watson's Lemma, see~\cite{olver_97}, we have to compute the asymptotic expansion of $h(x)$ as $x \to 0$.
  Expanding factors of $h(x)$ we obtain
  \begin{align*}
    \sqrt{2 - x} &= \sqrt{2}\left( 1 - \frac{x}{4} - \frac{x^2}{32} +  O(x^3)  \right), \\
    P_n(\theta(x)) &= P_n(0) + P''_n(0) x + \frac{1}{6} \left( P''_n(0) + P^{(4)}_n(0) \right)  x^2 + O(x^3),
  \end{align*}
  as $x\to0$.
  Let us turn our attention to the last factor.
  A lengthy but straightforward computation yields the expansion
  \begin{align*}
    \frac{(qz;q)_\infty}{(-qz;q)_{\infty}} = \frac{(q;q)_\infty}{(-q;q)_\infty} \bigg[1 &- \phi_1 (z-1) + \left(\frac{1}{2}\phi_1^2 -\phi_2\right)(z-1)^2 \\
    &\hskip24pt +\left( \phi_1\phi_2 - \phi_3 - \frac{1}{6} \phi_1^3 \right) (z-1)^3 \bigg] + O\left((z-1)^4\right), \ \text{as} \ z \to 1,
  \end{align*}
  where the constants $\phi_{1}$, $\phi_{2}$, and $\phi_{3}$ are defined by~\eqref{eq:def_phi123}. Consequently, we have
  \begin{align*}
    &\left| \frac{(q,qe^{2\ii\theta(x)};q)_{\infty}}{(-q,-qe^{2\ii\theta(x)};q)_{\infty}} \right|^2 = \frac{(q;q)_\infty^4}{(-q;q)_{\infty}^{4}} \bigg(1 + 8(\phi_1 + 2\phi_2) x \\
    &- \frac{4}{3} \left( 3\phi_1 + 54\phi_2 + 144 \phi_3 - 24\phi_1^2 -48\phi_2^2 - 96\phi_1 (\phi_2 - \phi_3) - 48\phi_1^2\phi_2 + 4\phi_1^4 \right) x^2 \bigg) + O\left(x^3\right),
  \end{align*}
  as $x \to 0$.
  Putting all these expansions together we conclude that
  \[
    h(x) = \sqrt{2} \frac{(q;q)_\infty^4}{(-q;q)_\infty^4} \bigg[ P_n(0) P_k(0) + A_{n,k} \, x + B_{n,k} \, x^2 \bigg] + O(x^3), \quad \text{as} \ x \to 0,
  \]
  where $A_{n,k}$ and $B_{n,k}$ are given by~\eqref{eq:Ank} and~\eqref{eq:Bnk}, respectively.
 This computation was verified by the computer algebra system Wolfram Mathematica.

Finally, using Watson's Lemma we immediately obtain the asymptotic behavior~\eqref{eq.qnkAsy} of $q_n^{(k)}(t)$ as $t\to\infty$.
\end{proof}

It is interesting to compare this decay rate with the one exhibited by the model with constant temperature heath baths.

\begin{rem}
  \label{rem:magnetizationConstantT}
  In case of constant temperature $T > 0$ we have
  \[
    \gamma = \tanh\left(\frac{2}{k_B T}\right) \in (0,1),
  \]
  and the magnetization obeys the following differential equation
  \begin{equation}
    \label{eq:diffT}
    \dot{q}_n(t) = -q_n(t) + \frac{\gamma}{2}\left(q_{n-1}(t) + q_{n+1}(t)\right), \quad n\in\N,
  \end{equation}
  where $q_{0}(t) = 0$.
  Following a similar procedure to the one described in the previous sections, it is straightforward to see that
  \[
    p_n(\theta) = \frac{\sin n\theta}{\sin\theta}, \quad n\in\N, \ \theta \in (0,\pi),
  \]
  satisfies $p_{0}(\theta) = 0$, $p_{1}(\theta) = 1$, and the recurrence equation
  \[
    -p_n(\theta) + \frac{\gamma}{2}\big( p_{n-1}(\theta) + p_{n+1}(\theta)\big) = (-1 + \gamma\cos\theta) p_{n}(\theta), \quad n\in\N.
  \]
  The family $\{p_{n}(\theta)\}_{n=1}^\infty$ forms an orthonormal basis of the Hilbert space $L^2\big((0,\pi),\frac{2}{\pi}\sin^2(\theta)\mathrm{d}\theta\big)$.

  Consequently, the solution of the differential equation~\eqref{eq:diffT} satisfying $q_{n}^{(k)}(0) = \delta_{n,k}$ is given by the following expression
  \begin{align*}
    q_n^{(k)}(t) &= \frac{2}{\pi} \int_0^\pi e^{(-1+\gamma\cos\theta)t} \sin(n\theta) \sin(k\theta)\,\mathrm{d}\theta = \\
    &= e^{-t} \left( I_{n-k}(\gamma t) - I_{n+k}(\gamma t) \right),
  \end{align*}
  where $I_n(z)$ is the modified Bessel function of the first kind, see~\cite[Eq.~(10.32.3)]{dlmf}.
  Using the well known asymptotic behavior of $I_n(z)$ for large argument~\cite[Eq.~(10.40.1)]{dlmf}, we arrive at the asymptotic expansion
  \begin{equation}
    \label{eq:qnkAsyConstant}
    q_n^{(k)}(t)=\sqrt{\frac{2}{\pi}}e^{-t(1-\gamma)}\left(\frac{nk}{(\gamma t)^{3/2}}+O\left(t^{-5/2}\right)\right), \quad \text{as} \ t\to+\infty.
  \end{equation}
  A similar behavior with exponential decay is exhibited in the case of the doubly-infinite spin chain considered by Glauber in~\cite{glauber_jmp63}. It is worth noticing that~\eqref{eq.qnkAsy}, in contrast to~\eqref{eq:qnkAsyConstant}, does not demonstrate the exponential decay.
\end{rem}

%
%
\subsection{Two-spin correlations}
\label{subsec:2spin}

In order to solve the (non-homogeneous) problem~\eqref{eq:rmnOffDiagonal}--\eqref{eq:rmnSymmetry} we proceed in three steps.

First of all, let $\{\rho_{m,n}\}_{m,n=1}^\infty$ be the corresponding stationary solution, i.e. a solution of equations
\begin{align}
  \label{eq:stationary1}
  -2\rho_{m,n} + \frac{\gamma_m}{2} \left( \rho_{m+1,n} + \rho_{m-1,n} \right) + \frac{\gamma_n}{2} \left( \rho_{m,n+1} + \rho_{m,n-1} \right)&=0, \quad m,n\in\N, \ m \neq n, \\
  \label{eq:stationary2}
  \rho_{m,m} &= 1, \quad m\in\N,
\end{align}
and $\rho_{m,0} = \rho_{0,n} = 0$ for $m,n\in\N$.
Further, we prove the stationary solution exists and is unique. To see this, consider a simple reformulation of the problem~\eqref{eq:stationary1}--\eqref{eq:stationary2},
\[
  \rho = T\rho,
\]
where $T$ is a non-linear mapping acting on $\ell^\infty \equiv \ell^\infty(\N^{2})$ -- the Banach space  of bounded sequences indexed by elements of $\N^{2}$ with the norm $\|x\|_\infty= \sup_{m,n}|x_{m,n}|$ --  defined by
\[
  (Tx)_{m,n} = \begin{cases}
    1, & m=n, \\
    \frac{\gamma_m}{4} \left(x_{m+1,n} + x_{m-1},n\right) + \frac{\gamma_n}{4} \left(x_{m,n+1} + x_{m,n-1}\right), & m\neq n,
  \end{cases}
\]
where $x_{0,n} = x_{m,0} = 0$, $m,n\in\mathbb{N}$, for each $x \in \ell^\infty$.
The operator $T$ is a contraction, in particular the inequality
\[
  \|Tx - Ty\|_\infty \leq \delta \|x - y\|_\infty
\]
holds true for any $x,y \in \ell^\infty$ with 
\[
\delta = \max_{n\in\N}|\gamma_{n}| = \frac{1-q}{1+q} \in (0,1).
\]
The well-known Banach fixed point theorem now guarantees the existence of a unique $\rho \in \ell^\infty$ satisfying $\rho = T\rho$.  

Next, we construct a special symmetric solution $r_{m,n}$ of~\eqref{eq:rmnOffDiagonal} with vanishing diagonal, i.e. satisfying $r_{m,m} = 0$.
For this purpose we will use the magnetization, specifically the solution~\eqref{eq:magnetizationNK}.
For $k > \ell$, set
\begin{equation}\label{eq:rmn}
  r_{m,n}^{(k,\ell)}(t) := \begin{cases}
    q_m^{(k)}(t) q_n^{(\ell)}(t) - q_n^{(k)}(t) q_m^{(\ell)}(t), & m \geq n, \\
    q_m^{(\ell)}(t) q_n^{(k)}(t) - q_n^{(\ell)}(t) q_m^{(k)}(t), & m < n.
  \end{cases}
\end{equation}
Employing the results of the previous subsection, i.e. properties of $q_n^{(k)}$, we see that $r_{m,n}^{(k,\ell)}$ enjoys the following properties
\begin{align*}
  \frac{\mathrm{d} r_{m,n}^{(k,\ell)}}{\mathrm{d}t}(t) &=
    -2 r_{m,n}^{(k,\ell)}(t) + \frac{\gamma_m}{2} \left( r_{m+1,n}^{(k,\ell)}(t) + r_{m-1,n}^{(k,\ell)}(t) \right) + \frac{\gamma_n}{2} \left( r_{m,n+1}^{(k,\ell)}(t) + r_{m,n-1}^{(k,\ell)}(t) \right), \\
  r_{m,n}^{(k,\ell)}(t) &= r_{n,m}^{(k,\ell)}(t), \\
  r_{m,n}^{(k,\ell)}(0) &= \delta_{m,k}\delta_{n,\ell}, \ \text{for} \ m\neq n,\\
  r_{m,m}^{(k,\ell)}(t) &= r_{0,n}^{(k,\ell)}(t) = r_{m,0}^{(k,\ell)}(t) = 0,
\end{align*}
where $k,\ell,m,n\in\N$, $k>\ell$, and $t>0$.

Next, we verify the asymptotic formula~\eqref{eq:two-spin_corr_asympt}. To this end, we use the definition~\eqref{eq:rmn} for $m\geq n$ and the asymptotic expansion of magnetization 
$q_n^{k}(t)$, for $t\to+\infty$, given by Proposition~\ref{prop.qnkAsy}. It is not difficult to check that the coefficients corresponding to $t^{-3}$ and $t^{-4}$ in the expansion of $r_{m,n}^{(k,\ell)}(t)$ are both vanishing. The next coefficient turns out to be the leading term of the expansion, which reads
\begin{align}
r_{m,n}^{(k,\ell)}(t)&=\frac{2}{\pi\gamma_1^2}\sqrt{\frac{\gamma_m \gamma_n}{\gamma_k\gamma_\ell}} \frac{(q;q)_\infty^8}{(-q;q)_\infty^8}P_{k}(0)P_{\ell}(0)P_{m}(0)P_{n}(0)\,t^{-5}\nonumber\\
&\times\left(\frac{9}{4}\left(A_{m,k}A_{n,\ell}-A_{n,k}A_{m,\ell}\right)+\frac{15}{4}\left(B_{m,k}+B_{n,\ell}-B_{n,k}-B_{m,\ell}\right) + O\left(t^{-1}\right)\right),\label{eq:tow_asympt_rmn_inproof}
\end{align}
for $t\to+\infty$.
By inspection of the terms from~\eqref{eq:Ank} and~\eqref{eq:Bnk}, one finds out that most of them do not contribute to~\eqref{eq:tow_asympt_rmn_inproof}. In fact, the non-trivial expression for $A_{m,k}A_{n,\ell}-A_{n,k}A_{m,\ell}$ equals
\begin{align*}
&\left(\frac{P''_m(0)}{P_m(0)} + \frac{P''_k(0)}{P_k(0)}\right)\left(\frac{P''_n(0)}{P_n(0)} + \frac{P''_\ell(0)}{P_\ell(0)}\right)-
\left(\frac{P''_m(0)}{P_m(0)} + \frac{P''_k(0)}{P_k(0)}\right)\left(\frac{P''_n(0)}{P_n(0)} + \frac{P''_\ell(0)}{P_\ell(0)}\right)\\
&=-\frac{R_{k,\ell}R_{m,n}}{P_{k}(0)P_{\ell}(0)P_{m}(0)P_{n}(0)},
\end{align*}
and the expression $B_{m,k}+B_{n,\ell}-B_{n,k}-B_{m,\ell}$ differs only in sign from the previous one since it equals
\[
\frac{P''_m(0)}{P_m(0)}\frac{P''_k(0)}{P_k(0)}+\frac{P''_n(0)}{P_n(0)}\frac{P''_\ell(0)}{P_\ell(0)}-\frac{P''_n(0)}{P_n(0)}\frac{P''_k(0)}{P_k(0)}-\frac{P''_m(0)}{P_m(0)}\frac{P''_\ell(0)}{P_\ell(0)}
=\frac{R_{k,\ell}R_{m,n}}{P_{k}(0)P_{\ell}(0)P_{m}(0)P_{n}(0)},
\]
where the coefficients $R_{m,n}$ are given by~\eqref{eq:def_R_mn}. Plugging the last two expressions into~\eqref{eq:tow_asympt_rmn_inproof}, one arrives at the asymptotic formula~\eqref{eq:two-spin_corr_asympt}.

Finally, the general solution of~\eqref{eq:rmnOffDiagonal}--\eqref{eq:rmnSymmetry} with initial data $r_{m,n}(0)$, $m,n\in\N$, is given by the superposition
\[
  r_{m,n}(t) = \rho_{m,n} + \sum_{k,\ell\in\N,\, k>\ell} \left( r_{k,\ell}(0) - \rho_{k,\ell} \right)r_{m,n}^{(k,\ell)}(t),
\]
whose convergence is guaranteed by Lemma~\ref{lem:mag_bound} and boundedness of $\left\{r_{k,\ell}(0) - \rho_{k,\ell}\right\}_{k,\ell\in\N}$.

\subsection*{Acknowledgment}

T.~K. acknowledges the financial support by the Ministry of Education, Youth and Sports of the Czech Republic project no.~CZ.02.1.01/0.0/0.0/16\_019/0000778. The research of F.~{\v S}. was supported by the GA{\v C}R grant No.~20-17749X.

%
%

\appendix
\section{Basic hypergeometric series}\label{app:A}

Since the theory of basic hypergeometric functions is perhaps less well-known in contrast to its classical 
counterpart, we briefly summarize basic definitions and selected identities that are needed in the text. 
All details and much more can be found in~\cite{gasper90}.

The parameter $q$ is always assumed to satisfy $0<q<1$. For $a\in\C$ and $n\in\N_{0}\cup\{\infty\}$, 
the $q$-shifted factorials are defined by
\[
 (a;q)_{n}:=\prod_{j=0}^{n-1}\left(1-aq^{j}\right).
\]
They can be also defined for negative values of $n$ as
\[
 (a;q)_{n}:=\prod_{j=n}^{-1}\left(1-aq^{j}\right)^{-1},
\]
provided that $a\notin\{q,q^{2},\dots,q^{-n}\}$.

For $r,s\in\N_{0}$ and $a_{1},\dots,a_{r},b_{1},\dots,b_{s}\in\C$, the basic hypergeometric series is defined by
\[
 {}_{r}\phi_{s}\left(\begin{matrix}a_{1},\dots,a_{r}\\
            b_{1},\dots,b_{s} \end{matrix}\bigg|q;z\right)
 :=\sum_{k=0}^{\infty}\frac{(a_{1},\dots,a_{r};q)_{k}}{(b_{1},\dots,b_{s};q)_{k}}(-1)^{(s-r+1)k}q^{(s-r+1)k(k-1)/2}\frac{z^{k}}{(q;q)_{k}},
\]
where
\[
 (a_{1},\dots,a_{s};q)_{k}=(a_{1};q)_{k}\dots(a_{s};q)_{k},
\]
provided that $b_{j}\notin q^{-\N_{0}}$ for all $1\leq j\leq s$. If one of the numerator parameters $a_{j}$ equals
$q^{-n}$, for some $n\in\N_{0}$, the basic hypergeometric series is a polynomial in $z$. Otherwise the radius of convergence
$\rho=\rho(r,s)$ of the basic hypergeometric series is equal to
\[
 \rho=\begin{cases}
       \infty, & \mbox{ if } r<s+1,\\
       1, & \mbox{ if } r=s+1,\\
       0, & \mbox{ if } r>s+1.
      \end{cases}
\]

The basic hypergeometric series ${}_{r}\phi_{s}$ represents a $q$-analogue, i.e., a certain one-parameter generalization
of the classical hypergeometric series ${}_{r}F_{s}$ since
\[
 \lim_{q\to1-}{}_{r}\phi_{s}\left(\begin{matrix}q^{a_{1}},\dots,q^{a_{r}}\\q^{b_{1}},\dots,q^{b_{s}} \end{matrix}\bigg|q;(q-1)^{s-r+1}z\right)=
 {}_{r}F_{s}\left(\begin{matrix}a_{1},\dots,a_{r}\\b_{1},\dots,b_{s} \end{matrix}\bigg|z\right).
\]
The function ${}_{2}\phi_{1}$, which is a $q$-analogue of the Gauss hypergeometric function ${}_{2}F_{1}$, is called the 
\emph{$q$-Gauss hypergeometric series}.

Next, we list several selected identities used above. The parameters are always assumed to be such that
all the involved basic hyperbolic series are well defined. The $q$-binomial theorem~\cite[Eq.~(II.~3)]{gasper90}:
\begin{equation}
 \pFq{1}{0}{a}{-}{z}=\frac{(az;q)_{\infty}}{(z;q)_{\infty}}.
 \label{eq:q-binom}
\end{equation}
The $q$-Chu--Vandermonde summation identity~\cite[Eq.~(II.~6)]{gasper90}:
\begin{equation}
 \pFq{2}{1}{q^{-n},a}{c}{q}=\frac{(a^{-1}c;q)_{n}}{(c;q)_{n}}a^{n}.
 \label{eq:q-chu-vand}
\end{equation}
The $q$-Gauss summation formula~\cite[Eq.~(II.~6)]{gasper90}:
\begin{equation}
\pFq{2}{1}{a,b}{c}{\frac{c}{ab}}=\frac{\left(a^{-1}c,b^{-1}c;q\right)_{\infty}}{\left(c,a^{-1}b^{-1}c;q\right)_{\infty}}.
 \label{eq:q-gauss_sum}
\end{equation}
Jackson's transformation of terminating $q$-Gauss hypergeometric~\cite[Eq.~(III.~7)]{gasper90}:
\begin{equation}
 \pFq{2}{1}{q^{-n},b}{c}{z}=\frac{(cb^{-1};q)_{n}}{(c;q)_{n}}\pFq{3}{2}{q^{-n},b,q^{-n}bc^{-1}z}{q^{1-n}bc^{-1},0}{q}.
 \label{eq:jack_term_qGaus_transf}
\end{equation}

\section{Properties of zeros of certain Gauss \texorpdfstring{$q$}{q}-hypergeometric series}\label{app:B}

We deduce certain properties of zeros of functions $\psi_{n}^{+}$ defined by~\eqref{eq:def_psi_pm}
that were needed in the proof of Proposition~\ref{prop:d_meas}.

\begin{lem}\label{lem:aux_green_id}
 For $|\alpha|<1$, $\beta<1$, $n\in\N_{0}$, and $0<|z|<1$, one has
 \[
  \sum_{k=n}^{\infty}\frac{1-\beta q^{k}}{1-\alpha q^{k}}\left(\psi_{k}^{+}(z)\right)^{2}=\frac{z^{2}}{z^{2}-1}\left[\left(\psi_{n-1}^{+}\right)'(z)\psi_{n}^{+}(z)-\psi_{n-1}^{+}(z)\left(\psi_{n}^{+}\right)'(z)\right].
 \]
\end{lem}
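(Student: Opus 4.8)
The plan is to obtain the identity as a discrete Green (Wronskian) formula by differentiating the second-order difference equation~\eqref{eq:diff_eq} with respect to the spectral variable $z$ and then summing a telescoping expression. Throughout write $\upsilon'(z)=1-z^{-2}=(z^{2}-1)/z^{2}$ for the derivative of the Joukowsky map~\eqref{eq:joukowsky}, abbreviate $v_{k}:=(1-\beta q^{k})/(1-\alpha q^{k})$, and suppress the superscript, writing $\psi_{k}$ for $\psi_{k}^{+}(z)$ and $\psi_{k}'$ for $(\psi_{k}^{+})'(z)$, so that~\eqref{eq:diff_eq} reads $\psi_{k-1}-\upsilon(z)v_{k}\psi_{k}+\psi_{k+1}=0$.

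First I would differentiate this recurrence in $z$, which is legitimate since each $\psi_{k}^{+}(z)$ is analytic on $\C\setminus\{0\}$ by Remark~\ref{rem:basic_prop_psi_pm}; this produces the differentiated equation $\psi_{k-1}'-\upsilon'(z)v_{k}\psi_{k}-\upsilon(z)v_{k}\psi_{k}'+\psi_{k+1}'=0$. Forming the combination of $\psi_{k}$ times the differentiated equation minus $\psi_{k}'$ times~\eqref{eq:diff_eq}, the two terms carrying $\upsilon(z)v_{k}\psi_{k}\psi_{k}'$ cancel, and after regrouping one is left with the telescoping identity
\[
 \upsilon'(z)\,v_{k}\,\psi_{k}^{2}=A_{k}-A_{k+1},\qquad A_{k}:=\psi_{k-1}'\psi_{k}-\psi_{k-1}\psi_{k}',
\]
where $A_{k}$ is the discrete Wronskian of $\psi^{+}$ with its own $z$-derivative. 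Summing over $k\geq n$ then collapses the right-hand side to $A_{n}-\lim_{k\to\infty}A_{k+1}$, and dividing by $\upsilon'(z)=(z^{2}-1)/z^{2}$ yields exactly the claimed formula, provided the series converges and the boundary term at infinity vanishes.

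The analytic input needed is the large-$k$ behaviour. Since $0<|z|<1$, the asymptotics~\eqref{eq:asympt_psipm_n_infpos} give $\psi_{k}=z^{k}(qz^{2};q)_{\infty}(1+O(q^{k}))=O(|z|^{k})$ and $v_{k}\to1$, so the terms $v_{k}\psi_{k}^{2}$ decay geometrically and the series converges absolutely. For the boundary term I would differentiate~\eqref{eq:asympt_psipm_n_infpos}—justified by analyticity, e.g.\ through Cauchy's integral formula, so that the error stays uniform on compact subsets of the punctured disk—to get $\psi_{k}'=k\,z^{k-1}(qz^{2};q)_{\infty}(1+O(q^{k}))+O(|z|^{k})$. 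The main obstacle is precisely here: the two leading pieces of $\psi_{k}'\psi_{k+1}$ and $\psi_{k}\psi_{k+1}'$ are each of size $k|z|^{2k}$, so one must check that these $O(k|z|^{2k})$ contributions cancel in $A_{k+1}=\psi_{k}'\psi_{k+1}-\psi_{k}\psi_{k+1}'$, leaving $A_{k+1}=-z^{2k}(qz^{2};q)_{\infty}^{2}+o(|z|^{2k})=O(|z|^{2k})\to0$. Once this cancellation is verified, the remaining steps—differentiating the recurrence and summing the telescope—are routine, and the proof is complete.
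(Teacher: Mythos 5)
Your proof is correct, and it is a genuine variant of the paper's argument: both rest on the same discrete Green/Wronskian identity for the recurrence~\eqref{eq:diff_eq}, but you perform the limiting operations in the opposite order. The paper never differentiates the equation; instead it writes the Green identity for two distinct spectral points, $(\upsilon(z)-\upsilon(\tilde{z}))\,v_{k}\,\psi_{k}^{+}(z)\psi_{k}^{+}(\tilde{z})=W_{k-1}\big(\psi^{+}(z),\psi^{+}(\tilde{z})\big)-W_{k}\big(\psi^{+}(z),\psi^{+}(\tilde{z})\big)$, telescopes over $k\geq n$, and only then divides by $\upsilon(z)-\upsilon(\tilde{z})$ and sends $\tilde{z}\to z$. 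In that order the boundary term at infinity, $W_{k}\big(\psi^{+}(z),\psi^{+}(\tilde{z})\big)\to 0$, follows at once from the plain asymptotics~\eqref{eq:asympt_psipm_n_infpos} with $|z|,|\tilde z|<1$, and the $z$-derivative enters only through the confluent limit of the single finite quantity $W_{n-1}$, so no differentiated asymptotics are ever needed; the price is that one must justify passing the limit $\tilde z\to z$ through the infinite sum (locally uniform convergence of the geometric-type series, which the paper leaves implicit). Your differentiate-first route avoids the auxiliary point $\tilde z$ and any interchange of limit and sum, but shifts the analytic burden to the boundary term $A_{k+1}=\psi_{k}'\psi_{k+1}-\psi_{k}\psi_{k+1}'$, where you correctly identified the crux: the individually divergent-in-$k$ pieces of size $k|z|^{2k}$ must cancel, and your use of Cauchy estimates to differentiate~\eqref{eq:asympt_psipm_n_infpos} with uniform error on compacts legitimately establishes $A_{k+1}=O(|z|^{2k})\to 0$. (One cosmetic point: the remainder in your expansion of $A_{k+1}$ contains cross terms of order $|z|^{2k+1}$, which for fixed $z$ is $O(|z|^{2k})$ rather than $o(|z|^{2k})$; this does not affect the conclusion.) In short, the two proofs are confluent and two-point versions of the same telescoping identity, with your version more self-contained and the paper's version requiring less refined asymptotic input.
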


\begin{proof}
 According to Proposition~\ref{prop:two_sol_psi_pm}, $\psi^{+}$ solves the equation~\eqref{eq:diff_eq} which means that
 \begin{equation}
  \psi_{n-1}^{+}(z)-\upsilon(z)\gamma_{n}^{-1}\psi_{n}^{+}(z)+\psi_{n+1}^{+}(z)=0,
 \label{eq:diff_eq_psi_plus}
 \end{equation}
 for all $n\in\N_{0}$ and $0<|z|<1$, where $\gamma_{n}$ is as in~\eqref{eq:def_gamma_n_q_form}.
 From~\eqref{eq:diff_eq_psi_plus}, one deduces
 \[
  (\upsilon(z)-\upsilon(\tilde{z}))\gamma_{n}^{-1}\psi_{n}^{+}(z)\psi_{n}^{+}(\tilde{z})=W_{n-1}\left(\psi^{+}(z),\psi^{+}(\tilde{z})\right)-W_{n}\left(\psi^{+}(z),\psi^{+}(\tilde{z})\right),
 \]
 for all $0<|z|<1$, $0<|\tilde{z}|<1$, and $n\in\N_{0}$. With the later restrictions, $W_{n}\left(\psi^{+}(z),\psi^{+}(\tilde{z})\right)\to0$, for $n\to\infty$, as it follows from~\eqref{eq:asympt_psipm_n_infpos}. 
 Hence, by summing up the above equations, one gets
 \[
  (\upsilon(z)-\upsilon(\tilde{z}))\sum_{k=n}^{\infty}\gamma_{k}^{-1}\psi_{k}^{+}(z)\psi_{k}^{+}(\tilde{z})=W_{n-1}\left(\psi^{+}(z),\psi^{+}(\tilde{z})\right), \quad n\in\N_{0}.
 \]
 Finally, by dividing both sides of the above equation by $\upsilon(z)-\upsilon(\tilde{z})$ and sending $\tilde{z}\to z$, one arrives at the identity from the statement.
\end{proof}

\begin{prop}\label{prop:zeros_basic_propert}
 Let $\alpha\in(-1,1)$, $\beta<1$, and $n\geq-1$. Then the function $\psi_{n}^{+}$ is analytic in the punctured unit disc $\mathbb{D}\setminus\{0\}$ and its zeros in $\mathbb{D}\setminus\{0\}$ are all real, simple, 
 and symmetrically distributed with respect to the origin. Moreover, between any two neighboring zeros of $\psi_{n}^{+}$ in $(0,1)$ there is exactly one zero of $\psi_{n+1}^{+}$.
\end{prop}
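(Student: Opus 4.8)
The plan is to establish the four claims of Proposition~\ref{prop:zeros_basic_propert} in turn, drawing all the hard information out of the Green-type identities behind Lemma~\ref{lem:aux_green_id}; the analyticity is already contained in Remark~\ref{rem:basic_prop_psi_pm}. Before touching the zeros I would record two elementary symmetries that follow by inspection of the definition~\eqref{eq:def_psi_pm} together with the relation~\eqref{eq:z_to_tau_rel}. Replacing $z$ by $-z$ turns $\upsilon(z)$ into $-\upsilon(z)$ and hence $\tau$ into $-\tau$, while the three factors $z^{n}$, $(qz^{2};q)_\infty$, and the ${}_2\phi_1$ transform as $(-1)^n$, unchanged, and unchanged, respectively; thus $\psi_n^+(-z)=(-1)^n\psi_n^+(z)$, which already yields the symmetric distribution of the zeros about the origin. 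Since $\alpha,\beta$ are real, conjugating~\eqref{eq:z_to_tau_rel} shows that the $\tau$ attached to $\bar z$ is $\bar\tau$, so that $\overline{\psi_n^+(z)}=\psi_n^+(\bar z)$; in particular $\psi_n^+$ is real on $(-1,1)\setminus\{0\}$ and its nonreal zeros come in conjugate pairs.

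To prove that the zeros are in fact real, I would use the off-diagonal identity obtained inside the proof of Lemma~\ref{lem:aux_green_id}, which with lower index $n+1$ reads
\[
 (\upsilon(z)-\upsilon(\tilde z))\sum_{k=n+1}^{\infty}\frac{1-\beta q^{k}}{1-\alpha q^{k}}\,\psi_k^+(z)\,\psi_k^+(\tilde z)=\psi_n^+(z)\psi_{n+1}^+(\tilde z)-\psi_{n+1}^+(z)\psi_n^+(\tilde z),
\]
for $0<|z|,|\tilde z|<1$. Given a zero $z_0\in\mathbb{D}\setminus\{0\}$ of $\psi_n^+$, I would set $z=z_0$ and $\tilde z=\bar z_0$; the right-hand side vanishes because $\psi_n^+(z_0)=0$ and $\psi_n^+(\bar z_0)=\overline{\psi_n^+(z_0)}=0$, whereas the left-hand side equals $2\ii\,\Im\upsilon(z_0)\sum_{k=n+1}^{\infty}\frac{1-\beta q^{k}}{1-\alpha q^{k}}\,|\psi_k^+(z_0)|^2$. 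Each coefficient here is positive for $\alpha,\beta<1$, and the series is convergent and strictly positive because $\psi_k^+(z_0)\sim z_0^{\,k}(qz_0^2;q)_\infty\neq0$ by~\eqref{eq:asympt_psipm_n_infpos}. Hence $\Im\upsilon(z_0)=0$, and since $\upsilon$ maps the punctured unit disc bijectively onto $\C\setminus[-2,2]$ (see~\eqref{eq:joukowsky}), a real value of $\upsilon(z_0)$ forces $z_0\in(-1,1)\setminus\{0\}$. This disposes of reality, and by the origin symmetry it suffices from now on to work with positive zeros.

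Simplicity and the sign data needed for interlacing both come out of the diagonal identity of Lemma~\ref{lem:aux_green_id} taken with lower index $n+1$,
\[
 \sum_{k=n+1}^{\infty}\frac{1-\beta q^{k}}{1-\alpha q^{k}}\,\big(\psi_k^+(z)\big)^2=\frac{z^2}{z^2-1}\Big[(\psi_n^+)'(z)\,\psi_{n+1}^+(z)-\psi_n^+(z)\,(\psi_{n+1}^+)'(z)\Big].
\]
Evaluating at a positive zero $z_0\in(0,1)$ of $\psi_n^+$ makes the left-hand side strictly positive while the bracket collapses, giving $\tfrac{z_0^2}{z_0^2-1}\,(\psi_n^+)'(z_0)\,\psi_{n+1}^+(z_0)>0$; since $\tfrac{z_0^2}{z_0^2-1}<0$ this yields $(\psi_n^+)'(z_0)\,\psi_{n+1}^+(z_0)<0$, so $z_0$ is a simple zero of $\psi_n^+$ and is not a zero of $\psi_{n+1}^+$. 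Evaluating the same identity at a positive zero $w_0\in(0,1)$ of $\psi_{n+1}^+$ gives, by the symmetric computation, $\psi_n^+(w_0)\,(\psi_{n+1}^+)'(w_0)>0$, so $w_0$ is a simple zero of $\psi_{n+1}^+$ and $\psi_n^+$, $\psi_{n+1}^+$ have no common zero. Because the identity is valid for lower index $n+1\in\N_0$, this argument applies uniformly down to $n=-1$.

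It remains to deduce the interlacing, and this is where the actual work lies. Let $z_j<z_{j+1}$ be two neighbouring zeros of $\psi_n^+$ in $(0,1)$; they are isolated (as $\psi_n^+$ is analytic and, by~\eqref{eq:asympt_psipm_n_infpos}, not identically zero) and simple, so $(\psi_n^+)'$ takes opposite signs at the endpoints, and the relation $(\psi_n^+)'\,\psi_{n+1}^+<0$ at each then forces $\psi_{n+1}^+(z_j)$ and $\psi_{n+1}^+(z_{j+1})$ to have opposite signs, producing at least one zero of $\psi_{n+1}^+$ in $(z_j,z_{j+1})$. The relation $\psi_n^+\,(\psi_{n+1}^+)'>0$ at zeros of $\psi_{n+1}^+$ yields, in the same way, at least one zero of $\psi_n^+$ between any two neighbouring zeros of $\psi_{n+1}^+$. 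These two one-sided bounds together with the absence of common zeros are exactly the input of the standard interlacing lemma, which upgrades ``at least one'' to ``exactly one''. The main obstacle is precisely this last bookkeeping step: a single sign change only guarantees an odd number of zeros, so one must play the two inequalities against each other to exclude a second zero of $\psi_{n+1}^+$ between consecutive zeros of $\psi_n^+$.
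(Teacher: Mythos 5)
Your proposal is correct, and for simplicity and interlacing it follows essentially the same route as the paper: both rest on the diagonal identity of Lemma~\ref{lem:aux_green_id} (shifted to lower index $n+1$), which gives the strict Wronskian inequality $(\psi_{n}^{+})'(x)\psi_{n+1}^{+}(x)-\psi_{n}^{+}(x)(\psi_{n+1}^{+})'(x)<0$ and hence the sign relations $(\psi_n^+)'\psi_{n+1}^+<0$ at zeros of $\psi_n^+$ and $\psi_n^+(\psi_{n+1}^+)'>0$ at zeros of $\psi_{n+1}^+$; your exclusion of a second zero via the dual bound (at least one zero of $\psi_n^+$ between consecutive zeros of $\psi_{n+1}^+$) is equivalent to the paper's direct contradiction, which instead uses the simplicity of two consecutive zeros $y_0<y_1$ of $\psi_{n+1}^+$ together with the sign constancy of $\psi_n^+$ on $(x_0,x_1)$. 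Where you genuinely diverge is the reality of the zeros: the paper builds the $\ell^2$ sequence $u_k=\gamma_{n+k}^{-1/2}\psi_{n+k}^+(z_0)$, observes that $u$ is an eigenvector of a Hermitian Jacobi operator $J^{(n)}$ with eigenvalue $\upsilon(z_0)$, and concludes $\upsilon(z_0)\in\R$ from self-adjointness; you instead evaluate the off-diagonal Green identity from the proof of Lemma~\ref{lem:aux_green_id} at $(z_0,\bar z_0)$, using the conjugation symmetry $\overline{\psi_n^+(z)}=\psi_n^+(\bar z)$ (valid since $\alpha,\beta,q$ are real and the ${}_2\phi_1$ coefficients depend on $\tau$ only through the symmetric combinations $z(\tau+\tau^{-1})$ and $z^2$) to get $2\ii\,\Im\upsilon(z_0)\sum_{k\geq n+1}\frac{1-\beta q^k}{1-\alpha q^k}|\psi_k^+(z_0)|^2=0$ with a strictly positive sum. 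This is a more elementary, self-contained variant — it is in effect the hands-on computation that self-adjointness packages abstractly — and it buys independence from operator theory, at the modest cost of having to verify the conjugation symmetry and the strict positivity of the tail (which you do, via~\eqref{eq:asympt_psipm_n_infpos}); your final step from $\Im\upsilon(z_0)=0$ to $z_0\in(-1,1)\setminus\{0\}$ via the bijectivity of the Joukowsky map matches the paper's. Your explicit relation $\psi_n^+(-z)=(-1)^n\psi_n^+(z)$ is just an unwound form of the paper's observation that $z\mapsto z^{-n}\psi_n^+(z)$ is even, and your remark that all identities are used with lower index $n+1\in\N_0$, so that the case $n=-1$ is covered, is a correct and welcome point of care.
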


\begin{proof}
 Let $n\geq-1$ be fixed. The fact that $\psi_{n}^{+}$ is analytic in $\mathbb{D}\setminus\{0\}$ 
 has already been noted in Remark~\ref{rem:basic_prop_psi_pm} and follows immediately from the definition~\eqref{eq:def_psi_pm}.
 
 First, we show the that all possible zeros of $\psi_{n}^{+}$ in $\mathbb{D}\setminus\{0\}$ have to be real. Let $z_{0}\in\mathbb{D}\setminus\{0\}$ be a~zero of $\psi_{n}^{+}$. Put 
 \[
  u_{k}:=\gamma_{n+k}^{-1/2}\psi_{n+k}^{+}(z_{0}), \quad k\in\N_{0},
 \]
 where $\gamma_{k}$ is as in~\eqref{eq:def_gamma_n_q_form}. By using~\eqref{eq:diff_eq}, one verifies that $u$ satisfies equations
 \[
 \sqrt{\gamma_{n+k-1}\gamma_{n+k}}u_{k-1}-\upsilon(z_{0})u_{k}+\sqrt{\gamma_{n+k}\gamma_{n+k+1}}u_{k+1}=0, \quad k\in\N.
 \]
 Note also that the vector $u=(u_{1},u_{2},\dots)$ belong to $\ell^{2}(\N)$ as it follows from~\eqref{eq:asympt_psipm_n_infpos}.
 Hence, since $u_{0}=0$, $u$ is an eigenvector of the Jacobi operator $J^{(n)}$, whose matrix entries are given by the equations
 \[
  \left(J^{(n)}\right)_{k,k+1}=\left(J^{(n)}\right)_{k+1,k}=\sqrt{\gamma_{n+k}\gamma_{n+k+1}} \quad \mbox{ and } \quad \left(J^{(n)}\right)_{k,k}=0, \quad k\in\N,
 \]
 to the eigenvalue $\upsilon(z_{0})$. Since $J^{(n)}$ is Hermitian, $\upsilon(z_{0})\in\R$ which, together with $|z_{0}|<1$, implies that $z_{0}\in\R$.
 
 Second, we verify that any zero of $\psi_{n}^{+}$ has to be simple. For a contradiction, suppose $z_{0}\in\mathbb{D}\setminus\{0\}$ is a multiple zero of $\psi_{n}^{+}$.
 Then we already know that $-1<z_{0}<1$ and, moreover, we have
 \[
  \psi_{n}^{+}(z_{0})=\left(\psi_{n}^{+}\right)'(z_{0})=0.
 \]
 On the other hand, for any $x\in(-1,1)\setminus\{0\}$ and $n\geq-1$, it follows from the fact that $\psi^{+}_{n}(x)\in\R$,
 the asymptotic formula~\eqref{eq:asympt_psipm_n_infpos}, and Lemma~\ref{lem:aux_green_id} that
 \begin{equation}
  \left(\psi_{n}^{+}\right)'(x)\psi_{n+1}^{+}(x)-\psi_{n}^{+}(x)\left(\psi_{n+1}^{+}\right)'(x)<0.
  \label{eq:aux_id_wronsk_positiv}
 \end{equation}
 However, for $x=z_{0}$, the left-hand side of~\eqref{eq:aux_id_wronsk_positiv} vanishes which is a contradiction. 
 
 Since $z\mapsto z^{-n}\psi_{n}^{+}(z)$ is an even function, zeros of $\psi_{n}^{+}$ are distributed symmetrically around the origin.
 
 Finally, we prove the statement concerning the interlacing property of the zeros. Let $x_{0}$ and $x_{1}$, 
 $0<x_{0}<x_{1}<1$, be two consecutive zeros of $\psi_{n}^{+}$. Since these zeros are simple, one has
 \[
  \left(\psi_{n}^{+}\right)'(x_{0})\left(\psi_{n}^{+}\right)'(x_{1})<0.
 \]
 At the same time, according to~\eqref{eq:aux_id_wronsk_positiv}, one gets the inequalities
 \[
  \left(\psi_{n}^{+}\right)'(x_{0})\psi_{n+1}^{+}(x_{0})<0 \quad \mbox{ and } \quad \left(\psi_{n}^{+}\right)'(x_{1})\psi_{n+1}^{+}(x_{1})<0.
 \]
 These three inequalities imply that
 \[
  \psi_{n+1}^{+}(x_{0})\psi_{n+1}^{+}(x_{1})<0
 \]
 and hence there is at least one zero of $\psi_{n+1}^{+}$ between $x_{0}$ and $x_{1}$. However, there is exactly one zero of $\psi_{n+1}^{+}$ between $x_{0}$ and $x_{1}$. Indeed, assuming that there are at least two
 zeros of $\psi_{n+1}^{+}$ between $x_{0}$ and $x_{1}$, there are two consecutive zeros $y_{0}$ and $y_{1}$ of $\psi_{n+1}^{+}$ such that $x_{0}< y_{0}<y_{1}< x_{1}$ (note that $\psi_{n}^{+}$ and $\psi_{n+1}^{+}$ have
 no zero in common in $(0,1)$ since it would contradict~\eqref{eq:aux_id_wronsk_positiv}). By simplicity of $y_{0}$ and $y_{1}$, one has 
 \[
  \left(\psi_{n+1}^{+}\right)'(y_{0})\left(\psi_{n+1}^{+}\right)'(y_{1})<0.
 \]
 This, however, yields a contradiction with~\eqref{eq:aux_id_wronsk_positiv} since the function $\psi_{n}^{+}$ does not change sign in $(x_{0},x_{1})$.
\end{proof}

\bibliographystyle{acm}

\begin{thebibliography}{99}

\bibitem{bauer-cornu_jpa18}
{\sc Bauer, M., and Cornu, F.}
\newblock Thermal contact through a two-temperature kinetic {I}sing chain.
\newblock {\em J. Phys. A 51}, 19 (2018), 195002, 23.

\bibitem{berg_jat94}
{\sc Berg, C.}
\newblock Markov's theorem revisited.
\newblock {\em J. Approx. Theory 78}, 2 (1994), 260--275.

\bibitem{borchers-etal_pre14}
{\sc Borchers, N., Pleimling, M., and Zia, R. K.~P.}
\newblock Nonequilibrium statistical mechanics of a two-temperature ising ring
  with conserved dynamics.
\newblock {\em Phys. Rev. E 90\/} (2014), 062113.

\bibitem{chihara78}
{\sc Chihara, T.~S.}
\newblock {\em An introduction to orthogonal polynomials}.
\newblock Gordon and Breach Science Publishers, New York-London-Paris, 1978.
\newblock Mathematics and its Applications, Vol. 13.

\bibitem{Fonseca2015}
{\sc da~Fonseca, C.~M., Kouachi, S., Mazilu, D.~A., and Mazilu, I.}
\newblock A multi-temperature kinetic ising model and the eigenvalues of some
  perturbed jacobi matrices.
\newblock {\em Appl. Math. Comp. 259\/} (2015), 205--2011.

\bibitem{dlmf}
{\it NIST Digital Library of Mathematical Functions}.
\newblock http://dlmf.nist.gov/, Release 1.0.15 of 2017-06-01.
\newblock F.~W.~J. Olver, A.~B. {Olde Daalhuis}, D.~W. Lozier, B.~I. Schneider,
  R.~F. Boisvert, C.~W. Clark, B.~R. Miller and B.~V. Saunders, eds.

\bibitem{gasper90}
{\sc Gasper, G., and Rahman, M.}
\newblock {\em Basic hypergeometric series}, vol.~35 of {\em Encyclopedia of
  Mathematics and its Applications}.
\newblock Cambridge University Press, Cambridge, 1990.
\newblock With a foreword by Richard Askey.

\bibitem{glauber_jmp63}
{\sc Glauber, R.~J.}
\newblock Time-dependent statistics of the {I}sing model.
\newblock {\em J. Mathematical Phys. 4\/} (1963), 294--307.

\bibitem{ismail09}
{\sc Ismail, M. E.~H.}
\newblock {\em Classical and quantum orthogonal polynomials in one variable},
  vol.~98 of {\em Encyclopedia of Mathematics and its Applications}.
\newblock Cambridge University Press, Cambridge, 2009.
\newblock With two chapters by Walter Van Assche, With a foreword by Richard A.
  Askey, Reprint of the 2005 original.

\bibitem{koekoek10}
{\sc Koekoek, R., Lesky, P.~A., and Swarttouw, R.~F.}
\newblock {\em Hypergeometric orthogonal polynomials and their
  {$q$}-analogues}.
\newblock Springer Monographs in Mathematics. Springer-Verlag, Berlin, 2010.
\newblock With a foreword by Tom H. Koornwinder.

\bibitem{mazilu-etal_ejla12}
{\sc Mazilu, I., Mazilu, D.~A., and Williams, H.~T.}
\newblock Applications of tridiagonal matrices in non-equilibrium statistical
  physics.
\newblock {\em Electron. J. Linear Algebra 24}, Special issue for the 2011
  Directions in Matrix Theory Conference (2012/13), 7--17.

\bibitem{mazilu-williams_pre09}
{\sc Mazilu, I., and Williams, H.~T.}
\newblock Exact energy spectrum of a two-temperature kinetic ising model.
\newblock {\em Phys. Rev. E 80\/} (2009), 061109.

\bibitem{mobilia-etal_pre05}
{\sc Mobilia, M., Schmittmann, B., and Zia, R. K.~P.}
\newblock Exact dynamics of a reaction-diffusion model with spatially
  alternating rates.
\newblock {\em Phys. Rev. E 71\/} (2005), 056129.

\bibitem{mobilia-etal_jpa04}
{\sc Mobilia, M., Zia, R. K.~P., and Schmittmann, B.}
\newblock Complete solution of the kinetics in a far-from-equilibrium ising
  chain.
\newblock {\em Phys. A: Math. Gen. 37\/} (2004), L407.

\bibitem{nevai_mams79}
{\sc Nevai, P.~G.}
\newblock Orthogonal polynomials.
\newblock {\em Mem. Amer. Math. Soc. 18}, 213 (1979), v+185.

\bibitem{olver_97}
{\sc Olver, F. W.~J.}
\newblock {\em Asymptotics and special functions}.
\newblock AKP Classics. A K Peters, Ltd., Wellesley, MA, 1997.
\newblock Reprint of the 1974 original [Academic Press, New York; MR0435697 (55
  \# 8655)].

\bibitem{szego75}
{\sc Szeg{\H o}, G.}
\newblock {\em Orthogonal polynomials}, fourth~ed.
\newblock American Mathematical Society, Providence, R.I., 1975.
\newblock American Mathematical Society, Colloquium Publications, Vol. XXIII.

\bibitem{vanassche_90}
{\sc Van~Assche, W.}
\newblock Asymptotics for orthogonal polynomials and three-term recurrences.
\newblock In {\em Orthogonal polynomials ({C}olumbus, {OH}, 1989)}, vol.~294 of
  {\em NATO Adv. Sci. Inst. Ser. C Math. Phys. Sci.} Kluwer Acad. Publ.,
  Dordrecht, 1990, pp.~435--462.

\bibitem{Zygmund}
{\sc Zygmund, A.}
\newblock {\em Trigonometric Series}, third~ed., vol.~1 of {\em Cambridge
  Mathematical Library}.
\newblock Cambridge University Press, 2002.

\end{thebibliography}

\end{document}